\newcommand\EFFACE[1]{}
\newcommand\OR[1]{\overrightarrow{#1}}
\def\OG{\OR{G}}
\newtheorem{theorem}{Theorem}
\newtheorem{proposition}[theorem]{Proposition}
\newtheorem{lemma}[theorem]{Lemma}
\newtheorem{corollary}[theorem]{Corollary}
\newtheorem{observation}[theorem]{Observation}
\newenvironment{proof}{
\par
\noindent {\bf Proof.}\rm}{\mbox{}\hfill$\square$\par\vskip 3mm}
\newcommand\chiP{\chi'}
\newcommand\DP{D'}
\newcommand\chiD{D_{\chi}}
\newcommand\chiDP{D_{\chi'}}
\newcommand\OchiD{OD_{\chi}}
\newcommand\OchiDP{OD_{\chi'}}
\newcommand\OD{OD}
\newcommand\Dmin{{\OD}^{-}}
\newcommand\Dmax{{\OD}^{+}}
\newcommand\chiDmin{{\OD}_{\chi}^{-}}
\newcommand\chiDmax{{\OD}_{\chi}^{+}}
\newcommand\DminP{{{\OD}'}^{-}}
\newcommand\DmaxP{{{\OD}'}^{+}}
\newcommand\chiDminP{{\OD}_{\chi'}^{-}}
\newcommand\chiDmaxP{{\OD}_{\chi'}^{+}}
\def\Aut{{\rm Aut}}
\def\p2{\frac{p}{2}}  
\let\@fnsymbol\@arabic
\begin{document}


\title{Distinguishing numbers and distinguishing indices\\ of oriented graphs}

\author{Kahina MESLEM~\thanks{DGRSDT, LaROMaD Laboratory, University of Science and Technology Houari Boumediene, Algiers, Algeria.}
\and \'Eric SOPENA~\thanks{Univ. Bordeaux, CNRS, Bordeaux INP, LaBRI, UMR5800, F-33400 Talence, France.}~$^,$\footnote{Corresponding author. Eric.Sopena@labri.fr}
}

\maketitle

\abstract{
A distinguishing $r$-vertex-labelling (resp. $r$-edge-labelling) of an undirected graph $G$ is a mapping $\lambda$ from the set of vertices (resp. the set of edges)
of $G$ to the set of labels $\{1,\dots,r\}$ such that 
no non-trivial automorphism of $G$ preserves all the vertex (resp. edge) labels.
The distinguishing number $D(G)$ and the distinguishing index $D'(G)$ of $G$ are then 
the smallest $r$ for which $G$ admits a distinguishing $r$-vertex-labelling or $r$-edge-labelling, respectively.
The distinguishing chromatic number $\chiD(G)$ and the distinguishing chromatic index $\chiDP(G)$
are defined similarly, with the additional requirement that the corresponding labelling must be 
a proper colouring.

These notions readily extend to oriented graphs, by considering arcs instead of edges.
In this paper, we study the four corresponding parameters for oriented graphs whose underlying graph is a path, a cycle,
a complete graph or a bipartite complete graph.
In each case, we determine their minimum and maximum value, taken over all possible
orientations of the corresponding underlying graph, except for 
the minimum values for unbalanced
complete bipartite graphs $K_{m,n}$ with 
$m=2$, $3$ or $4$ and $n>3$, $6$ or $13$, respectively,
or $m\ge 5$ and  
$n > 2^m-\left\lceil\frac{m}{2}\right\rceil$, for which we only
provide upper bounds.
}

\medskip

\noindent
{\bf Keywords:} Distinguishing number; 
Distinguishing index; 
Distinguishing chromatic number; 
Distinguishing chromatic index; 
Automorphism group; 
Oriented graph;
Complete bipartite graph.

\medskip

\noindent
{\bf MSC 2010:} 05C20, 20B25.

\section{Introduction}

All graphs considered in this paper are simple.
For a graph $G$, we denote by $V(G)$ its set of vertices, and by $E(G)$ its set of edges.

An \emph{$r$-vertex-labelling} of a graph $G$ is a mapping $\lambda$ from $V(G)$
to the set of labels $\{1,\dots,r\}$.
An \emph{$r$-vertex-colouring} of $G$ is a proper $r$-vertex-labelling of $G$, that is, 
an $r$-vertex-labelling $\lambda$ such that
$\lambda(u)\neq\lambda(v)$ for every edge $uv$ of $G$.
The \emph{chromatic number} $\chi(G)$ of $G$ is then the smallest number of labels (called colours in that case) needed
for a vertex-colouring of $G$.
Similarly, an \emph{$r$-edge-labelling} of $G$ is a mapping $\lambda'$ from $E(G)$
to the set of labels $\{1,\dots,r\}$, and
an \emph{$r$-edge-colouring} of $G$ is a proper $r$-edge-labelling of $G$, that is,
an $r$-edge-labelling $\lambda'$ such that
$\lambda'(e)\neq\lambda'(e')$ for every two adjacent edges $e$ and $e'$ (that is, such that $e$ and $e'$
have one vertex in common).
The \emph{chromatic index} $\chi'(G)$ of $G$ is then the smallest number of labels (or colours) needed
for an edge-colouring of $G$.

An \emph{automorphism} $\phi$ of a graph~$G$ is an edge-preserving mapping
from $V(G)$ to $V(G)$, that is, such that $uv\in E(G)$ implies $\phi(u)\phi(v)\in E(G)$.
For a given vertex or edge-labelling of~$G$,
an automorphism $\phi$ of~$G$ is \emph{$\lambda$-preserving} if $\lambda(\phi(u))=\lambda(u)$ for
every vertex $u$ of $G$, or $\lambda(\phi(uv))=\lambda(uv)$ for
every edge $uv$ of $G$, respectively.
A vertex or edge-labelling  $\lambda$ of $G$ is \emph{distinguishing} if the only
$\lambda$-preserving automorphism of $G$ is the identity, that is, the labelling $\lambda$
breaks all the symmetries of $G$. Such a distinguishing vertex or edge-labelling is \emph{optimal}
if $G$ does not admit any vertex or edge-labelling using less colours.

The \emph{distinguishing number}, \emph{distinguishing chromatic number}, 
\emph{distinguishing index} and \emph{distinguishing chromatic index} of a graph $G$,
denoted by $D(G)$, $\chiD(G)$, $D'(G)$ and $\chiDP(G)$, respectively,
are then defined as the smallest $r$ for which $G$ admits
a distinguishing $r$-vertex-labelling, a distinguishing $r$-vertex-colouring,
a distinguishing $r$-edge-labelling or a distinguishing $r$-edge-colouring,
respectively.
Distinguishing numbers and distinguishing chromatic numbers 
have been introduced by Albertson and Collins in~\cite{AC96} and
Collins and Trenk in~\cite{CT06}, respectively,
while distinguishing indices and distinguishing chromatic indices have been introduced by
Kalinowski and Pil\'sniak in~\cite{KP15}
(these two parameters are often denoted $\chi_D$ and $\chi'_D$
instead of $\chiD$ and $\chiDP$, respectively).

\begin{table}
\small{
\begin{center}
  \begin{tabular}{|c|c|c|c|c||c|c||c|c|}
\hline
 & Graph $G$ & $\chi(G)$ & $D(G)$ & $\chiD(G)$ & $\Dmin(G)$ & $\Dmax(G)$ &  $\chiDmin(G)$ & $\chiDmax(G)$  \\
\hline
\hline
\multirow{2}{*}{1.} & $P_{2n}$, & \multirow{2}{*}{2} & \multirow{2}{*}{2} & \multirow{2}{*}{2} & \multirow{2}{*}{1 (Th.~\ref{th:simple-chiDmin})} & \multirow{2}{*}{1 (Th.~\ref{th:paths})}  & \multirow{2}{*}{2 (Th.~\ref{th:simple-chiDmin})} & \multirow{2}{*}{2 (Th.~\ref{th:paths})}  \\
   & $n\ge 1$ & & & & & & &  \\
\hline
\multirow{2}{*}{2.} & $P_{2n+1}$, & \multirow{2}{*}{2} & \multirow{2}{*}{2} & \multirow{2}{*}{3} & \multirow{2}{*}{1 (Th.~\ref{th:simple-chiDmin})} & \multirow{2}{*}{2 (Th.~\ref{th:paths})}  & \multirow{2}{*}{2 (Th.~\ref{th:simple-chiDmin})} & \multirow{2}{*}{3 (Th.~\ref{th:paths})}  \\
   & $n\ge 1$ & & & & & & &  \\
\hline
\hline
3. & $C_{4}$ & 2 & 3 & 4 & 1 (Th.~\ref{th:simple-chiDmin}) & 2 (Th.~\ref{th:cycles}) & 2 (Th.~\ref{th:simple-chiDmin}) & 4 (Th.~\ref{th:cycles}) \\
\hline
4. & $C_{5}$ & 3 & 3 & 3 & 1 (Th.~\ref{th:simple-chiDmin}) & 2 (Th.~\ref{th:cycles}) & 3 (Th.~\ref{th:simple-chiDmin}) & 3 (Th.~\ref{th:cycles}) \\
\hline
5. & $C_{6}$ & 2 & 2 & 4 & 1 (Th.~\ref{th:simple-chiDmin}) & 2 (Th.~\ref{th:cycles}) & 2 (Th.~\ref{th:simple-chiDmin}) & 3 (Th.~\ref{th:cycles}) \\
\hline
\multirow{2}{*}{6.} & $C_{2n}$, & \multirow{2}{*}{2} & \multirow{2}{*}{2} & \multirow{2}{*}{3} & \multirow{2}{*}{1 (Th.~\ref{th:simple-chiDmin})} & \multirow{2}{*}{2 (Th.~\ref{th:cycles})} & \multirow{2}{*}{2 (Th.~\ref{th:simple-chiDmin})} & \multirow{2}{*}{3 (Th.~\ref{th:cycles})} \\
   & $n\ge 4$ & & & & & & &  \\
\hline
\multirow{2}{*}{7.} & $C_{2n+1}$, & \multirow{2}{*}{3} & \multirow{2}{*}{2} & \multirow{2}{*}{3} & \multirow{2}{*}{1 (Th.~\ref{th:simple-chiDmin})} & \multirow{2}{*}{2 (Th.~\ref{th:cycles})} & \multirow{2}{*}{3 (Th.~\ref{th:simple-chiDmin})} & \multirow{2}{*}{3 (Th.~\ref{th:cycles})} \\
   & $n\ge 3$ & & & & & & &  \\
\hline
\hline
\multirow{2}{*}{8.} & $K_{n}$, & \multirow{2}{*}{$n$} & \multirow{2}{*}{$n$} & \multirow{2}{*}{$n$} & \multirow{2}{*}{1 (Th.~\ref{th:simple-chiDmin})} & \multirow{2}{*}{2 (Th.~\ref{th:complete-graphs})} & \multirow{2}{*}{$n$ (Th.~\ref{th:simple-chiDmin})} & \multirow{2}{*}{$n$ (Th.~\ref{th:complete-graphs})} \\
   & $n\ge 3$ & & & & & & &  \\
\hline
\hline
\multirow{2}{*}{9.} & $K_{1,n}$, & \multirow{2}{*}{2} & \multirow{2}{*}{$n$} & \multirow{2}{*}{$n+1$} & \multirow{2}{*}{$\left\lceil\frac{n}{2}\right\rceil$ (Th.~\ref{th:K_1n})} & \multirow{2}{*}{$n$  (Th.~\ref{th:K_1n})} & 1+$\left\lceil\frac{n}{2}\right\rceil$ & {$n+1$} \\
   & $n\ge 2$ & & & & & & (Th.~\ref{th:K_1n}) & (Th.~\ref{th:K_1n}) \\
\hline
\multirow{2}{*}{10.} & $K_{n,n}$,  & \multirow{2}{*}{2} & \multirow{2}{*}{$n+1$} & \multirow{2}{*}{$2n$} & \multirow{2}{*}{1 (Th.~\ref{th:K_nn})} & \multirow{2}{*}{$n$ (Th.~\ref{th:K_nn})} & \multirow{2}{*}{2 (Th.~\ref{th:K_nn})} & \multirow{2}{*}{$2n$ (Th.~\ref{th:K_nn})} \\
   & $n\ge 2$ & & & & & & &  \\
\hline
\multirow{2}{*}{11.} & $K_{m,n}$,  & \multirow{2}{*}{2} & \multirow{2}{*}{$n$} & \multirow{2}{*}{$m+n$} & Th.~\ref{th:complete-bipartite-graphs} and  &  \multirow{2}{*}{$n$ (Th.~\ref{th:complete-bipartite-graphs})} & Th.~\ref{th:complete-bipartite-graphs} and & $m+n$  \\
   & $n > m\ge 2$ & & & & Cor.~\ref{cor:final-complete-bipartite}&  & Cor.~\ref{cor:final-complete-bipartite} &  (Th.~\ref{th:complete-bipartite-graphs}) \\
\hline
  \end{tabular}
\end{center}
}
\caption{Table of results for $\Dmin(G)$, $\Dmax(G)$, $\chiDmin(G)$ and $\chiDmax(G)$.}
\label{table:global-vertex}
\end{table}

A graph $G$ is \emph{rigid} (or \emph{asymmetric}) if the only automorphism of $G$ is the identity. Therefore,
$D(G)=1$ if and only if $G$ is rigid and, similarly, $D'(G)=1$ if and only if $G$ is rigid.
Moreover,
for every such graph $G$, $\chiD(G)=\chi(G)$ and $\chiDP(G)=\chi'(G)$.
Note here that being rigid is not a necessary condition for any of these two equalities to hold
(consider the path of order~2 or the path of order~3, respectively).

\medskip

Our aim in this paper is to study 
the distinguishing number, distinguishing chromatic number, 
distinguishing index and distinguishing chromatic index
of several classes of oriented graphs.
By oriented graphs, we mean here antisymmetric digraphs, that is, digraphs with no
directed cycle of length at most~2, or, equivalently, digraphs obtained
from undirected graphs by giving to each of their edges one of its two possible
orientations.
All the notions of vertex-labelling, vertex-colouring, edge-labelling, edge-colouring,
automorphism, distinguishing labelling, distinguishing number, distinguishing chromatic number, 
distinguishing index and distinguishing chromatic index, readily extend to oriented graphs by
simply considering arcs instead of edges. 
For each undirected graph~$G$ with $m$ edges, and each distinguishing parameter,
we will study both the minimum and maximum possible value of the parameter, taken
over all of the $2^m$ possible orientations of~$G$.

\medskip

\begin{table}
\small{
\begin{center}
  \begin{tabular}{|c|c|c|c|c||c|c||c|c|}
\hline
 & Graph $G$ & $\chi'(G)$ & $D'(G)$ & $\chiDP(G)$ & $\DminP(G)$ & $\DmaxP(G)$ &  $\chiDminP(G)$ & $\chiDmaxP(G)$  \\
\hline
\hline
\multirow{2}{*}{1.} & $P_{2n}$, & \multirow{2}{*}{2} & \multirow{2}{*}{2} & \multirow{2}{*}{3} & \multirow{2}{*}{1 (Th.~\ref{th:simple-chiDmin})} & \multirow{2}{*}{1 (Th.~\ref{th:paths})} & \multirow{2}{*}{2 (Th.~\ref{th:simple-chiDmin})} & \multirow{2}{*}{2 (Th.~\ref{th:paths})}  \\
   & $n\ge 2$ & & & & & & &  \\
\hline
\multirow{2}{*}{2.} & $P_{2n+1}$, & \multirow{2}{*}{2} & \multirow{2}{*}{2} & \multirow{2}{*}{2} & \multirow{2}{*}{1 (Th.~\ref{th:simple-chiDmin})} & \multirow{2}{*}{2 (Th.~\ref{th:paths})}  & \multirow{2}{*}{2 (Th.~\ref{th:simple-chiDmin})} & \multirow{2}{*}{2 (Th.~\ref{th:paths})}  \\
   & $n\ge 1$ & & & & & & &  \\
\hline
\hline
3. & $C_{4}$ & 2 & 3 & 4 & 1 (Th.~\ref{th:simple-chiDmin}) & 2 (Th.~\ref{th:cycles}) & 2 (Th.~\ref{th:simple-chiDmin}) & 3 (Th.~\ref{th:cycles}) \\
\hline
4. & $C_{5}$ & 3 & 3 & 3 & 1 (Th.~\ref{th:simple-chiDmin}) & 2 (Th.~\ref{th:cycles}) & 3 (Th.~\ref{th:simple-chiDmin}) & 3 (Th.~\ref{th:cycles}) \\
\hline
5. & $C_{6}$ & 2 & 2 & 4 & 1 (Th.~\ref{th:simple-chiDmin}) & 2 (Th.~\ref{th:cycles}) & 2 (Th.~\ref{th:simple-chiDmin}) & 3 (Th.~\ref{th:cycles}) \\
\hline
\multirow{2}{*}{6.} & $C_{2n}$, & \multirow{2}{*}{2} & \multirow{2}{*}{2} & \multirow{2}{*}{3} & \multirow{2}{*}{1 (Th.~\ref{th:simple-chiDmin})} & \multirow{2}{*}{2 (Th.~\ref{th:cycles})} & \multirow{2}{*}{2 (Th.~\ref{th:simple-chiDmin})} & \multirow{2}{*}{3 (Th.~\ref{th:cycles})} \\
   & $n\ge 4$ & & & & & & &  \\
\hline
\multirow{2}{*}{7.} & $C_{2n+1}$, & \multirow{2}{*}{3} & \multirow{2}{*}{2} & \multirow{2}{*}{3} & \multirow{2}{*}{1 (Th.~\ref{th:simple-chiDmin})} & \multirow{2}{*}{2 (Th.~\ref{th:cycles})} & \multirow{2}{*}{3 (Th.~\ref{th:simple-chiDmin})} & \multirow{2}{*}{3 (Th.~\ref{th:cycles})} \\
   & $n\ge 3$ & & & & & & &  \\
\hline
\hline
8. & $K_{3}$ & 3 & 3 & 3 & 1 (Th.~\ref{th:simple-chiDmin}) & 2 (Th.~\ref{th:complete-graphs}) & 3 (Th.~\ref{th:simple-chiDmin}) & 3 (Th.~\ref{th:complete-graphs}) \\
\hline
9. & $K_{4}$ & 3 & 3 & 5 & 1 (Th.~\ref{th:simple-chiDmin}) & 2 (Th.~\ref{th:complete-graphs}) & 3 (Th.~\ref{th:simple-chiDmin}) & 3 (Th.~\ref{th:complete-graphs})\\
\hline
10. & $K_{5}$ & 5 & 3 & 5 & 1 (Th.~\ref{th:simple-chiDmin}) & 2 (Th.~\ref{th:complete-graphs}) & 5 (Th.~\ref{th:simple-chiDmin}) & 5 (Th.~\ref{th:complete-graphs}) \\
\hline
\multirow{2}{*}{11.} & $K_{2n}$, & \multirow{2}{*}{$2n-1$} & \multirow{2}{*}{2} & \multirow{2}{*}{$2n-1$} & \multirow{2}{*}{1 (Th.~\ref{th:simple-chiDmin})}  & \multirow{2}{*}{2 (Th.~\ref{th:complete-graphs})} & $2n-1$ & $2n-1$ \\
   & $n\ge 3$ & & & &  &  & (Th.~\ref{th:simple-chiDmin}) & (Th.~\ref{th:complete-graphs}) \\
\hline
\multirow{2}{*}{12.} & $K_{2n+1}$, & \multirow{2}{*}{$2n+1$} & \multirow{2}{*}{2} & \multirow{2}{*}{$2n+1$} & \multirow{2}{*}{1 (Th.~\ref{th:simple-chiDmin})} & \multirow{2}{*}{2 (Th.~\ref{th:complete-graphs})} & $2n+1$ & $2n+1$ \\
   & $n\ge 3$ & & & & & & (Th.~\ref{th:simple-chiDmin}) & (Th.~\ref{th:complete-graphs}) \\
\hline
\hline
\multirow{2}{*}{13.} & $K_{1,n}$, & \multirow{2}{*}{$n$} & \multirow{2}{*}{$n$} & \multirow{2}{*}{$n$} & \multirow{2}{*}{$\left\lceil\frac{n}{2}\right\rceil$ (Th.~\ref{th:K_1n})} & \multirow{2}{*}{$n$ (Th.~\ref{th:K_1n})} & \multirow{2}{*}{$n$ (Th.~\ref{th:K_1n})} & \multirow{2}{*}{$n$ (Th.~\ref{th:K_1n})} \\
   & $n\ge 3$ & & & & & & &  \\
\hline
\multirow{2}{*}{14.} & \multirow{2}{*}{$K_{3,3}$}  & \multirow{2}{*}{$3$} & \multirow{2}{*}{$3$} & \multirow{2}{*}{$5$} & 1  & 2  & 3  & 4 \\
 & & & & & (Th.~\ref{th:K_nn}) & (Th.~\ref{th:K_nn}) & (Th.~\ref{th:K_nn}) & (Th.~\ref{th:K_nn})\\
\hline
\multirow{2}{*}{15.} & $K_{n,n}$,  & \multirow{2}{*}{$n$} & \multirow{2}{*}{$2$} & \multirow{2}{*}{$n+1$} & {1} & {2} & {$n$} & {$n+1$} \\
   & $4\le n\le 6$ & & & &  (Th.~\ref{th:K_nn})&  (Th.~\ref{th:K_nn})&  (Th.~\ref{th:K_nn})&  (Th.~\ref{th:K_nn}) \\
\hline
\multirow{2}{*}{16.} & $K_{n,n}$,  & \multirow{2}{*}{$n$} & \multirow{2}{*}{$2$} & \multirow{2}{*}{$n+1$} & {1} & {2} & {$n$} & {$n$} \\
   & $n\ge 7$ & & & &  (Th.~\ref{th:K_nn})&  (Th.~\ref{th:K_nn})&  (Th.~\ref{th:K_nn})&  (Th.~\ref{th:K_nn}) \\
\hline
\multirow{2}{*}{17.} & $K_{m,n}$,  & \multirow{2}{*}{$n$} & \multirow{2}{*}{\cite{FI08}, \cite{IJK08}} & \multirow{2}{*}{$n$} & Th.~\ref{th:complete-bipartite-graphs} and & $\DP(K_{m,n})$   & Th.~\ref{th:complete-bipartite-graphs} and & $n$  \\
   & $n > m\ge 2$ & & & & Cor.~\ref{cor:final-complete-bipartite} & (Th.~\ref{th:complete-bipartite-graphs}) & Cor.~\ref{cor:final-complete-bipartite} &  (Th.~\ref{th:complete-bipartite-graphs})\\
\hline
  \end{tabular}
\end{center}
}
\caption{Table of results for $\DminP(G)$, $\DmaxP(G)$,  $\chiDminP(G)$ and $\chiDmaxP(G)$.}
\label{table:global-edge}
\end{table}

Distinguishing numbers of digraphs have been studied in a few papers (see \cite{AC99,LNS10,LS12,L13,L19,MS17}),
while distinguishing numbers or indices of various classes of undirected graphs have attracted a lot of attention
(see for instance \cite{A05,AS17,AS19,ACD08,BC04,C09,EIKPT17,FI08,IJK08,IK06,KZ07}).
Up to our knowledge, distinguishing chromatic number, distinguishing index and distinguishing chromatic
index of digraphs have not been considered yet.

Our paper is organised as follows.
In Section~\ref{sec:basic}, we formally introduce the main definitions and give some basic results.
We then consider simple classes of graphs, namely paths, cycles and complete graphs in Section~\ref{sec:simple},
and complete bipartite graphs in Sections \ref{sec:complete-bipartite-easy}
and~\ref{sec:complete-bipartite-general}.
We finally propose some directions for future work in Section~\ref{sec:discussion}.

\medskip

Our results concerning distinguishing numbers and distinguishing indices are summarized
in Tables \ref{table:global-vertex} and~\ref{table:global-edge}, respectively,
where $P_n$, $C_n$ and~$K_n$ denote the path, the cycle and 
the complete graph of order~$n$, respectively,
and $K_{m,n}$ denotes the complete bipartite graph whose parts have size $m$ and~$n$.

\section{Preliminaries}
\label{sec:basic}

An \emph{oriented graph} is a digraph with no loops and no pairs of opposite arcs.
For an oriented graph $\OG$, we
denote by $V(\OG)$ and $A(\OG)$ its set of vertices and its set of arcs, respectively.
Let $\OG$ be an oriented graph and $u$ a vertex of $\OG$.
The \emph{out-degree} of $u$ in $\OG$, denoted $d^+_{\OG}(u)$, is the number of arcs in $A(\OG)$ of the form $uv$,
and the \emph{in-degree} of $u$ in $\OG$, denoted $d^-_{\OG}(u)$, is the number of arcs in~$A(\OG)$ of the form $vu$.
The \emph{degree} of $u$, denoted $d_{\OG}(u)$, is then defined by $d_{\OG}(u)=d_{\OG}^+(u)+d_{\OG}^-(u)$.
If $uv$ is an arc in $\OG$, $u$ is an \emph{in-neighbour} of $v$ and $v$ is an \emph{out-neighbour} of $u$.
We denote by $N_{\OG}^+(u)$ and $N_{\OG}^-(u)$ the set of out-neighbours and the set of in-neighbours of $u$ in ${\OG}$,
respectively. Hence, $d^+_{\OG}(u)=|N_{\OG}^+(u)|$ and $d^-_{\OG}(u)=|N_{\OG}^-(u)|$.
A \emph{source vertex} is a vertex with no in-neighbours, while
a \emph{sink vertex} is a vertex with no out-neighbours.
Let $v$ and $w$ be two neighbours of $u$. We say that $v$ and $w$ \emph{agree on $u$}
if either both $v$ and $w$ are in-neighbours of $u$, or both $v$ and $w$ are out-neighbours of $u$,
and that $v$ and $w$ \emph{disagree on $u$} otherwise.
%
For a subset $S$ of $V(\OG)$, we denote by $G[S]$ the sub-digraph of $\OG$
induced by $S$, which is obviously an oriented graph.

\medskip

For any finite set $\Omega$, $Id_\Omega$ denotes the identity permutation acting on $\Omega$.
Since the set $\Omega$ will always be clear from the context, we will simply write
$Id$ instead of $Id_\Omega$ in the following.

An \emph{automorphism} of an oriented graph $\OG$ is an arc-preserving permutation of its vertices, that is,
a one-to-one mapping $\phi:V(\OG)\rightarrow V(\OG)$ such that $\phi(u)\phi(v)$ is an arc in $\OG$ whenever $uv$
is an arc in~$\OG$.
The set of automorphisms of $\OG$ is denoted $\Aut(\OG)$.
The \emph{order} of an automorphism $\phi$ is the smallest integer $k>0$ for which $\phi^k=Id$.
An automorphism $\phi$ of an oriented graph $\OG$ is \emph{non-trivial} if $\phi\neq Id$. 
A vertex $u$ of $\OG$ is \emph{fixed} by $\phi$ if $\phi(u)=u$.

\medskip

We now introduce the distinguishing parameters we will consider.
Let $\lambda$ be a vertex-labelling of an oriented graph $\OG$.
Recall first that an automorphism $\phi$ of $\OG$ is \emph{$\lambda$-preserving} 
if $\lambda(\phi(u))=\lambda(u)$ for
every vertex $u$ of $\OG$, and that
a vertex-labelling  $\lambda$ of $\OG$ is \emph{distinguishing} if the only
$\lambda$-preserving automorphism of $\OG$ is the identity.
A distinguishing vertex-colouring is then a distinguishing proper vertex-labelling.
Similarly, an automorphism $\phi$ of an oriented graph~$\OG$ is \emph{$\lambda$-preserving}, 
for a given arc-labelling $\lambda$ of~$G$, if $\lambda(\phi(\OR{uv}))=\lambda(\OR{uv})$ for
every arc $\OR{uv}$ of $\OG$, and an arc-labelling  $\lambda$ of $\OG$ is \emph{distinguishing} 
if the only $\lambda$-preserving automorphism of $\OG$ is the identity.
A distinguishing arc-colouring is then a distinguishing proper arc-labelling.
We then define the four following distinguishing parameters of an oriented graph $\OG$.

\begin{enumerate}
\item The \emph{oriented distinguishing number} of $\OG$, denoted $\OD(\OG)$, is the smallest number
of labels needed for a distinguishing vertex-labelling of $\OG$.

\item The \emph{oriented distinguishing chromatic number} of $\OG$, denoted $\OchiD(\OG)$, is the smallest number
of labels needed for a distinguishing vertex-colouring of $\OG$.

\item The \emph{oriented distinguishing index} of $\OG$, denoted $\OD'(\OG)$, is the smallest number
of labels needed for a distinguishing arc-labelling of $\OG$.

\item The \emph{oriented distinguishing chromatic index} of $\OG$, denoted $\OchiDP(\OG)$, is the smallest number
of labels needed for a distinguishing arc-colouring of $\OG$.
\end{enumerate}

Using these parameters, we can define eight new distinguishing parameters of an undirected graph $G$.

\begin{enumerate}
\item The \emph{minimum oriented distinguishing number} of $G$, denoted $\Dmin(G)$, is the 
smallest oriented distinguishing number of its orientations.

\item The \emph{maximum oriented distinguishing number} of $G$, denoted $\Dmax(G)$, is the 
largest oriented distinguishing number of its orientations.

\item The \emph{minimum oriented distinguishing chromatic number} of $G$, denoted $\chiDmin(G)$, is the 
smallest oriented distinguishing chromatic number of its orientations.

\item The \emph{maximum oriented distinguishing chromatic number} of $G$, denoted $\chiDmax(G)$, is the 
largest oriented distinguishing chromatic number of its orientations.

\item The \emph{minimum oriented distinguishing index} of $G$, denoted $\DminP(G)$, is the 
smallest oriented distinguishing index of its orientations.

\item The \emph{maximum oriented distinguishing index} of $G$, denoted $\DmaxP(G)$, is the 
largest oriented distinguishing index of its orientations.

\item The \emph{minimum oriented distinguishing chromatic index} of $G$, denoted $\chiDminP(G)$, is the 
smallest oriented distinguishing chromatic index of its orientations.

\item The \emph{maximum oriented distinguishing chromatic index} of $G$, denoted $\chiDmaxP(G)$, is the 
largest oriented distinguishing chromatic index of its orientations.

\end{enumerate}

\medskip

Let $G$ be an undirected graph, $\OG$ be any orientation of $G$,
and $\lambda$ be any vertex or edge-labelling of $G$ (which can also be considered
as a vertex or arc-labelling of $\OG$, respectively).
Observe that every automorphism of $\OG$ is an automorphism of $G$.
From this observation and the definition of our distinguishing parameters,
we directly get the following result.

\begin{proposition}\label{prop:inequalities}
For every undirected graph $G$,
\begin{enumerate}
\item $\Dmin(G)\le \Dmax(G)\le D(G)$,
\item $\DminP(G)\le \DmaxP(G)\le D'(G)$,
\item $\chi(G)\le \chiDmin(G)\le \chiDmax(G)\le \chiD(G)$,
\item $\chi'(G)\le \chiDminP(G)\le \chiDmaxP(G)\le \chiDP(G)$,
\item $D(G)\le\chiD(G)$, $\Dmin(G)\le\chiDmin(G)$, and $\Dmax(G)\le\chiDmax(G)$,
\item $D'(G)\le\chiDP(G)$, $\DminP(G)\le\chiDminP(G)$, and $\DmaxP(G)\le\chiDmaxP(G)$.
\end{enumerate}
\end{proposition}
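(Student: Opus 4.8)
The plan is to prove each of the six items by unwinding the definitions and using the single structural fact highlighted just before the statement: every automorphism of an orientation $\OG$ of $G$ is also an automorphism of the underlying graph $G$. I would begin by fixing, for a given undirected graph $G$, an arbitrary orientation $\OG$ of $G$, and observing the following two elementary facts. First, if $\lambda$ is a vertex-labelling (resp.\ edge-labelling) of $G$ that is distinguishing for $G$, then, viewed as a vertex-labelling (resp.\ arc-labelling) of $\OG$, it is distinguishing for $\OG$: indeed, any $\lambda$-preserving automorphism of $\OG$ is in particular an automorphism of $G$ preserving $\lambda$, hence the identity. Second, properness is unaffected by orientation, since $uv\in E(G)$ iff $uv$ or $vu$ lies in $A(\OG)$, so a proper colouring of $G$ is a proper colouring of $\OG$ and conversely.

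Given these facts, items~1 and~2 follow: any distinguishing $D(G)$-vertex-labelling of $G$ is a distinguishing vertex-labelling of every orientation $\OG$, so $\OD(\OG)\le D(G)$ for all $\OG$, and taking the maximum over orientations gives $\Dmax(G)\le D(G)$; the inequality $\Dmin(G)\le\Dmax(G)$ is immediate from the definitions of minimum and maximum over the same (nonempty) set of orientations. The argument for item~2 is identical with arc-labellings in place of vertex-labellings. For items~3 and~4 I would combine the two elementary facts: a distinguishing $\chiD(G)$-vertex-colouring of $G$ is, for every orientation $\OG$, both proper and distinguishing on $\OG$, giving $\OchiD(\OG)\le\chiD(G)$ and hence $\chiDmax(G)\le\chiD(G)$; the middle inequality $\chiDmin(G)\le\chiDmax(G)$ is again definitional; and the lower bound $\chi(G)\le\chiDmin(G)$ holds because any distinguishing vertex-colouring of any orientation $\OG$ is in particular a proper colouring of $\OG$, hence of $G$, so uses at least $\chi(G)$ colours. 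Item~4 is the same with edges/arcs and $\chi'$.

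Finally, items~5 and~6 are purely about comparing two parameters of the \emph{same} object (either $G$ or a fixed orientation): every distinguishing vertex-colouring is a distinguishing vertex-labelling, so $D(G)\le\chiD(G)$ and, applied to a fixed orientation $\OG$, $\OD(\OG)\le\OchiD(\OG)$; taking the minimum (resp.\ maximum) over all orientations of $G$ then yields $\Dmin(G)\le\chiDmin(G)$ and $\Dmax(G)\le\chiDmax(G)$. Item~6 is identical with arc-labellings. I do not anticipate a genuine obstacle here — the whole proposition is a bookkeeping exercise — but the one point requiring a little care is the direction of the automorphism containment: one must note that $\Aut(\OG)\subseteq\Aut(G)$ (an orientation can only lose symmetries, never gain them), which is exactly why a labelling that is distinguishing for $G$ remains distinguishing for each $\OG$, and this is the single place where the argument would fail if stated backwards.
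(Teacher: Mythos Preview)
Your proof is correct and follows exactly the approach the paper indicates: the paper does not give a detailed argument but merely states that the inequalities follow directly from the observation $\Aut(\OG)\subseteq\Aut(G)$ together with the definitions, and your write-up is a faithful (and more explicit) unpacking of precisely that.
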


As observed in the previous section, for every undirected graph~$G$,
$D(G)=1$ if and only if $G$~is rigid, and 
$\chiD(G)=\chi(G)$ if $G$~is rigid. This property obviously also
holds for oriented graphs: for every oriented graph~$\OG$,
$\OD(\OG)=1$ if and only if $\OG$~is rigid, and 
$\OchiD(\OG)=\chi(\OG)$ if $\OG$~is rigid. 
We thus have the following result.

\begin{proposition}\label{prop:rigid-orientation}
If $G$ is an undirected graph that admits a rigid orientation,
then $\Dmin(G)=1$, $\DminP(G)=1$, $\chiDmin(G)=\chi(G)$ and $\chiDminP(G)=\chi'(G)$.
\end{proposition}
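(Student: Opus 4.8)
The plan is to derive Proposition~\ref{prop:rigid-orientation} as an essentially immediate consequence of the definitions together with the two facts recalled just before the statement: that an oriented graph $\OG$ satisfies $\OD(\OG)=1$ if and only if $\OG$ is rigid, and that $\OchiD(\OG)=\chi(\OG)$ whenever $\OG$ is rigid. So the only real content is to move from a statement about a \emph{single} rigid orientation to a statement about the \emph{minimum} over all orientations, and to also handle the two arc-parameters $\DminP$ and $\chiDminP$, which are not literally covered by the vertex-version facts and need the analogous (equally trivial) observations for arc-labellings.

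First I would fix an undirected graph $G$ and, by hypothesis, an orientation $\OG$ of $G$ that is rigid, i.e.\ $\Aut(\OG)=\{Id\}$. For the claim $\Dmin(G)=1$: since $\OG$ is rigid, the constant vertex-labelling $\lambda\equiv 1$ is vacuously distinguishing (its only $\lambda$-preserving automorphism is $Id$, because $Id$ is the only automorphism at all), so $\OD(\OG)=1$; as $\Dmin(G)$ is the minimum of $\OD$ over all orientations, $\Dmin(G)\le 1$, and since a distinguishing labelling needs at least one label, $\Dmin(G)=1$. The argument for $\DminP(G)=1$ is word-for-word identical after replacing ``vertex-labelling'' by ``arc-labelling'': a rigid oriented graph has no non-trivial automorphism to break, so the constant arc-labelling is distinguishing and $\OD'(\OG)=1$, whence $\DminP(G)=1$.

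Next, for $\chiDmin(G)=\chi(G)$: take an optimal proper vertex-colouring of $\OG$ using $\chi(\OG)=\chi(G)$ colours (the chromatic number of an oriented graph equals that of its underlying graph, since properness only concerns the underlying edges). Because $\OG$ is rigid this colouring is automatically distinguishing, so $\OchiD(\OG)\le\chi(G)$; combined with the general lower bound $\OchiD(\OG)\ge\chi(\OG)=\chi(G)$ we get $\OchiD(\OG)=\chi(G)$, and therefore $\chiDmin(G)\le\chi(G)$. The reverse inequality $\chiDmin(G)\ge\chi(G)$ holds for every $G$ (any distinguishing vertex-colouring of any orientation is in particular a proper colouring of $G$; this is part~3 of Proposition~\ref{prop:inequalities}), so $\chiDmin(G)=\chi(G)$. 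Again the same reasoning with arcs and $\chi'$ in place of vertices and $\chi$ yields $\OchiDP(\OG)=\chi'(\OG)=\chi'(G)$, and using part~4 of Proposition~\ref{prop:inequalities} for the lower bound gives $\chiDminP(G)=\chi'(G)$.

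There is no genuine obstacle here; the statement is a bookkeeping corollary of the preceding remarks. The only point that requires a moment's care is spelling out the arc-versions of the two ``rigid'' facts (that $\OD'(\OG)=1$ iff $\OG$ is rigid, and $\OchiDP(\OG)=\chi'(\OG)$ when $\OG$ is rigid), since the text only recalls the vertex-versions; but these are proved by the same one-line argument, namely that when $\Aut(\OG)=\{Id\}$ every labelling is distinguishing. One should also note in passing that $\chi(\OG)=\chi(G)$ and $\chi'(\OG)=\chi'(G)$ for any orientation $\OG$ of $G$, which is immediate from the definition of properness for oriented graphs. With these observations in place, the four equalities follow by combining ``rigid $\Rightarrow$ every (proper) labelling is distinguishing'' with the already-established lower bounds of Proposition~\ref{prop:inequalities}.
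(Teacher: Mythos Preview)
Your proposal is correct and follows the same approach as the paper, which in fact does not provide a formal proof at all: the proposition is stated as an immediate consequence of the remarks preceding it (that $\OD(\OG)=1$ iff $\OG$ is rigid, and $\OchiD(\OG)=\chi(\OG)$ when $\OG$ is rigid). Your write-up is more careful than the paper's, explicitly spelling out the arc-versions of these facts and invoking Proposition~\ref{prop:inequalities} for the lower bounds, but the underlying idea is identical.
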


\section{Paths, cycles and complete graphs}
\label{sec:simple}

The distinguishing number of undirected paths, cycles
and complete graphs has been determined by Albertson and Collins in~\cite{AC96},
while their distinguishing chromatic number
was given by Collins and Trenk in~\cite{CT06}. 
The following theorem summarizes these results.

\begin{theorem}[\cite{AC96}, \cite{CT06}]\label{th:simple-vertex}\mbox{}
  \begin{enumerate}
  \item For every integer $n\ge 1$, $D(P_{2n})=\chiD(P_{2n})=2$.
  \item For every integer $n\ge 1$, $D(P_{2n+1})=2$ and $\chiD(P_{2n+1})=3$.
  \item $D(C_4)=3$ and $\chiD(C_4)=4$.
  \item $D(C_5)=3$ and $\chiD(C_5)=3$.
  \item $D(C_6)=2$ and $\chiD(C_6)=4$.
  \item For every integer $n\ge 7$, $D(C_{n})=2$ and $\chiD(C_{n})=3$.
  \item For every integer $n\ge 1$, $D(K_{n})=\chiD(K_{n})=n$.
  \end{enumerate}
\end{theorem}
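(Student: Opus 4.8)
The plan is to treat the three families separately, leaning on the well-known structure of their automorphism groups.

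Complete graphs are immediate. Since $\Aut(K_n)$ is the full symmetric group on $V(K_n)$, a vertex-labelling of $K_n$ is distinguishing if and only if it is injective: if two vertices share a label, the transposition of those two vertices is a non-trivial label-preserving automorphism, and conversely an injective labelling clearly admits only the identity. Such a labelling is automatically a proper colouring, all vertices being pairwise adjacent, so the same witness gives $\chiD(K_n)\le n$; together with $D(K_n)\le\chiD(K_n)$ and $\chiD(K_n)\ge\chi(K_n)=n$ this yields $D(K_n)=\chiD(K_n)=n$.

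For paths, write $P_n=v_1\cdots v_n$; the only non-trivial automorphism is the reversal $\rho\colon v_i\mapsto v_{n+1-i}$. Labelling the two endpoints with distinct labels breaks $\rho$, so $D(P_n)\le 2$, with equality for $n\ge 2$ as $P_n$ is not rigid. For the chromatic versions, $\chi(P_n)=2$ and every proper $2$-colouring of a path is the alternating one, in which the colour of $v_i$ depends only on the parity of $i$. If $n$ is even, $\rho$ reverses that parity, so the alternating colouring is already distinguishing and $\chiD(P_{2n})=2$; if $n$ is odd, $\rho$ preserves the parity of indices, hence fixes every proper $2$-colouring, so $\chiD(P_{2n+1})\ge 3$, and recolouring one endpoint with a third colour gives a distinguishing proper $3$-colouring, so $\chiD(P_{2n+1})=3$.

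Cycles are the substantial case. Here $\Aut(C_n)$ is the dihedral group of order $2n$, which acts regularly on the $2n$ arcs; in particular an automorphism fixing two adjacent vertices is the identity. For $D(C_n)$: when $n\le 5$ a direct inspection of the (few) binary labellings shows each is preserved by a non-trivial rotation or reflection, so $D(C_n)\ge 3$, while giving two adjacent vertices two fresh labels (and everything else a third) is distinguishing; hence $D(C_4)=D(C_5)=3$. For $n\ge 6$ one exhibits an explicit aperiodic, chiral binary labelling with no non-trivial automorphism — e.g.\ the pattern $001011$ for $n=6$ and perturbed almost-alternating patterns for larger $n$ — giving $D(C_n)=2$. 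For $\chiD(C_n)$: recall $\chi(C_n)$ is $2$ for $n$ even and $3$ for $n$ odd. For $C_5$ and for every $n\ge 7$ I would write down a distinguishing proper $3$-colouring — essentially the alternating colouring with one or two vertices recoloured to a third colour so that a single vertex, and hence a single arc, is pinned down — giving $\chiD(C_n)=3$. The exceptional values $\chiD(C_4)=\chiD(C_6)=4$ are obtained by showing that no proper $3$-colouring distinguishes these cycles: since $C_4$ has independence number $2$ and $C_6$ has independence number $3$, such a colouring has colour-class sizes $(2,1,1)$ for $C_4$ and $(2,2,2)$ or $(3,2,1)$ for $C_6$, a short finite list, and one checks that in each case some non-trivial reflection (or, in one $C_6$ subcase, rotation by $3$) preserves the colouring; the matching upper bound $\le 4$ is witnessed by a colouring giving two vertices on a common edge their own private colours.

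The main obstacle is precisely this last point — certifying that three colours do not suffice to distinguish $C_4$ and $C_6$ — together with making the two infinite constructions uniform in $n$ (the asymmetric binary labelling behind $D(C_n)=2$ for $n\ge 6$, and the asymmetric proper $3$-colouring behind $\chiD(C_n)=3$ for $n\ge 7$); every other step is either immediate from the group structure or a finite verification. All of these statements are classical, due to Albertson and Collins~\cite{AC96} and Collins and Trenk~\cite{CT06}.
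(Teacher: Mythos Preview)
The paper does not prove this theorem at all: it is stated as a known result, attributed to Albertson and Collins~\cite{AC96} for the distinguishing numbers and to Collins and Trenk~\cite{CT06} for the distinguishing chromatic numbers, and the text moves on immediately to the next cited theorem. There is therefore no proof in the paper to compare your proposal against.

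For what it is worth, your sketch is the standard argument and is sound in outline. The treatments of $K_n$ and of $P_n$ are complete as written. For cycles your strategy is the right one, and you have correctly isolated where the actual work lies: producing, for each $n\ge 6$, an explicit binary labelling with no non-trivial dihedral symmetry (your ``aperiodic, chiral'' pattern); producing, for each $n\ge 7$ and for $n=5$, a proper $3$-colouring with the same property; and running the finite case analysis that rules out distinguishing proper $3$-colourings of $C_4$ and $C_6$. These steps are only gestured at rather than carried out, so as a self-contained proof the proposal is incomplete, but nothing in it is wrong and filling in the constructions is routine along the lines you indicate.
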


On the other hand, the distinguishing index and distinguishing chromatic index of 
undirected paths, cycles
and complete graphs, have been determined by Kalinowski and Pil\'sniak in~\cite{KP15}, and
Alekhani and Soltani in~\cite{AS19}.
The following theorem summarizes these results.

\begin{theorem}[\cite{KP15}, \cite{AS19}]\label{th:simple-edge}\mbox{}
  \begin{enumerate}
  \item $\DP(P_{2})=\chiDP(P_{2})=1$.
  \item For every integer $n\ge 2$, $\DP(P_{2n})=2$ and $\chiDP(P_{2n})=3$.
  \item For every integer $n\ge 1$, $\DP(P_{2n+1})=\chiDP(P_{2n})=2$.
  \item $\DP(C_4)=3$ and $\chiDP(C_4)=4$.
  \item $\DP(C_5)=3$ and $\chiDP(C_5)=3$.
  \item $\DP(C_6)=2$ and $\chiDP(C_6)=4$.
  \item For every integer $n\ge 7$, $\DP(C_{n})=2$ and $\chiDP(C_{n})=3$.
  \item $\DP(K_3)=3$ and $\chiDP(K_3)=3$.
  \item $\DP(K_4)=\DP(K_5)=3$ and $\chiDP(K_4)=\chiDP(K_5)=5$.
  \item For every integer $n\ge 3$, $\DP(K_{2n})=2$ and $\chiDP(K_{2n})=2n-1$.
  \item For every integer $n\ge 3$, $\DP(K_{2n+1})=2$ and $\chiDP(K_{2n+1})=2n+1$.
  \end{enumerate}
\end{theorem}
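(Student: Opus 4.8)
These statements are due to~\cite{KP15,AS19}; I indicate the shape of the arguments, organised by graph class. The common framework is that a distinguishing edge-labelling of $G$ is one fixed by no non-trivial element of the group induced by $\mathrm{Aut}(G)$ on $E(G)$ (for $P_{m}$, $C_{m}$ and $K_{m}$ with $m\ge 3$ this group is just $\mathrm{Aut}(G)$, since an automorphism is then determined by its action on the edges). Every lower bound comes from one of two trivial facts: $\chiDP(G)\ge\max\{\chi'(G),\DP(G)\}$, and $\DP(G)\ge 2$ whenever $\mathrm{Aut}(G)$ acts non-trivially on $E(G)$ (a one-label edge-labelling is preserved by every automorphism); all upper bounds are obtained by exhibiting explicit labellings. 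For paths, the unique non-trivial automorphism of $P_{m}$ reverses the edge sequence $e_{1},\dots,e_{m-1}$, so $\lambda$ is distinguishing iff the word $\lambda(e_{1})\cdots\lambda(e_{m-1})$ is not a palindrome. For $P_{2n}$ with $n\ge 2$ this word has odd length $2n-1\ge 3$; a non-palindromic binary word of that length exists, giving $\DP(P_{2n})=2$, whereas the only proper $2$-edge-colouring is the alternating one, a palindrome of odd length, forcing $\chiDP(P_{2n})\ge 3$, and recolouring one pendant edge with a third colour yields a proper non-palindromic colouring, so $\chiDP(P_{2n})=3$. For $P_{2n+1}$ the edge-word has even length $2n$ and the alternating proper $2$-edge-colouring is already non-palindromic, whence $\DP(P_{2n+1})=\chiDP(P_{2n+1})=2$; finally $P_{2}$ is degenerate (its automorphism group acts trivially on the single edge), so $\DP(P_{2})=\chiDP(P_{2})=\chi'(P_{2})=1$.

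For cycles, $\mathrm{Aut}(C_{n})$ is the dihedral group $D_{n}$ acting on the cyclic edge-word, and $\lambda$ is distinguishing iff this word has trivial stabiliser, i.e.\ is fixed by no non-trivial rotation and by no reflection. For $n\in\{4,5,6\}$ one argues directly: after reducing modulo $D_{n}$ there are only a handful of labellings with few colours, and one checks that every one using strictly fewer colours than claimed is fixed by some non-trivial symmetry, while a single explicit word attains the bound; in particular every proper $3$-edge-colouring of $C_{6}$ turns out to admit a reflective symmetry, which is why $\chiDP(C_{6})=4$. For $n\ge 7$ one exhibits a binary cyclic word of length $n$ with trivial stabiliser, giving $\DP(C_{n})=2$, and one builds a distinguishing proper $3$-edge-colouring by starting from the alternating colouring and breaking its symmetries with a few well-placed occurrences of a third colour; this needs enough room, which is exactly why it works for $n\ge 7$ but fails for $n=6$. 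The same case analysis gives $\DP(C_{4})=\DP(C_{5})=3$, $\chiDP(C_{4})=4$ and $\chiDP(C_{5})=\chi'(C_{5})=3$.

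For complete graphs, $\mathrm{Aut}(K_{3})\cong S_{3}$ acts on the three edges as the full symmetric group, so a distinguishing labelling must assign the edges pairwise distinct labels; as this is also forced by properness, $\DP(K_{3})=\chiDP(K_{3})=3$. For $K_{4}$ and $K_{5}$ a finite verification shows that no $2$-edge-labelling is rigid but some $3$-edge-labelling is, so $\DP(K_{4})=\DP(K_{5})=3$; for the distinguishing chromatic index one uses that $\chi'(K_{4})=3$, $\chi'(K_{5})=5$, together with the structural fact that every proper edge-colouring of $K_{4}$ with $3$ or $4$ colours, and of $K_{5}$ with fewer than $5$ colours, retains a non-trivial automorphism (for $K_{4}$ with $3$ colours the Klein four-subgroup of $S_{4}$ fixes every proper colouring, and with $4$ colours the two monochromatic single edges always form the remaining perfect matching, on which a suitable involution acts), while some proper $5$-edge-colouring of each of $K_{4}$ and $K_{5}$ is rigid; hence $\chiDP(K_{4})=\chiDP(K_{5})=5$. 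Finally, for $K_{2n}$ with $n\ge 3$ and $K_{2n+1}$ with $n\ge 3$ (that is, $K_{m}$ with $m\ge 6$), the equality $\DP(K_{m})=2$ is the main non-trivial input: one colours a carefully chosen sparse rigid subgraph of $K_{m}$ with one label and the remaining edges with the other, so that no non-trivial vertex permutation preserves both colour classes, as in~\cite{KP15}. For the chromatic index, $\chi'(K_{2n})=2n-1$ and $\chi'(K_{2n+1})=2n+1$ by K\"onig's and Vizing's theorems, these values cannot be decreased, and one exhibits a rigid $1$-factorisation of $K_{2n}$ (resp.\ a rigid near-$1$-factorisation of $K_{2n+1}$), which gives $\chiDP(K_{2n})=2n-1$ and $\chiDP(K_{2n+1})=2n+1$.

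The path analysis and the small cycle cases are routine. The real difficulties lie in two places. First, for large $m$, constructing a rigid $2$-edge-labelling of $K_{m}$ and a rigid proper $\chi'(K_{m})$-edge-colouring of $K_{m}$: these require genuine combinatorial constructions (a sparse rigid subgraph of $K_{m}$, and well-chosen (near-)$1$-factorisations) rather than a case check. Second, establishing the lower bounds $\chiDP(C_{6})=4$ and $\chiDP(K_{4})=\chiDP(K_{5})=5$, which demand a classification of all proper edge-colourings using few colours precise enough to exhibit, in each one, a symmetry that cannot be broken; this is where I expect most of the effort to go.
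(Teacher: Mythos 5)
This statement is quoted from the literature: the paper offers no proof of Theorem~\ref{th:simple-edge}, which merely collects the values of $\DP$ and $\chiDP$ for paths, cycles and complete graphs established in \cite{KP15} and \cite{AS19}. Your reconstruction is a faithful and essentially correct outline of how those proofs go: reducing distinguishability to the (non-)palindromicity of the edge word for paths, to trivial stabilisers of cyclic words under the dihedral action for cycles, and, for complete graphs, to the exceptional small cases plus the two genuinely constructive ingredients (a rigid $2$-edge-labelling of $K_m$ for $m\ge 6$, and rigid proper $\chi'(K_m)$-edge-colourings, i.e.\ rigid (near-)$1$-factorisations). I spot-checked the delicate assertions — that every proper $3$-edge-colouring of $C_6$ admits a reflective symmetry, and that every proper $3$- or $4$-edge-colouring of $K_4$ is preserved by a double transposition — and they hold as you describe. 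The only caveat is that the two places you yourself flag as the real work (the sparse rigid subgraph of $K_m$ and the rigid $1$-factorisations) are asserted rather than constructed, so as written this is a correct road map rather than a complete proof; since the paper itself only cites these results, that is an appropriate level of detail here.
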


It is not difficult to observe that every path, cycle or complete graph, admits
a rigid orientation. We thus get the following result, which proves columns
$\Dmin(G)$ and $\chiDmin(G)$ of Table~\ref{table:global-vertex} for lines 1 to~8,
and columns
$\DminP(G)$ and $\chiDminP(G)$ of Table~\ref{table:global-edge} for lines 1 to~12.
We say that an oriented path, or an oriented cycle, is \emph{directed} if all 
its arcs have the same direction.
 
\begin{theorem}\label{th:simple-chiDmin}
If $G$ is an undirected path, cycle, or complete graph, then $G$ admits a rigid
orientation.
Therefore, $\Dmin(G)=\DminP(G)=1$, $\chiDmin(G)=\chi(G)$ and $\chiDminP(G)=\chiP(G)$.
\end{theorem}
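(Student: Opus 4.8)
The plan is to exhibit, for each of the three families, an explicit orientation that is asymmetric, and then invoke Proposition~\ref{prop:rigid-orientation} to obtain the four stated equalities at once. The main work is therefore purely about constructing rigid orientations; the deduction of the distinguishing parameters is immediate once rigidity is in hand.

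First I would handle \emph{complete graphs}. For $K_n$, consider the transitive tournament $\OR{T}_n$ on vertices $v_1,\dots,v_n$ with an arc $v_iv_j$ precisely when $i<j$. Any automorphism $\phi$ must preserve out-degrees, and in $\OR{T}_n$ the out-degree sequence is $n-1,n-2,\dots,1,0$, with all values distinct; hence $\phi$ fixes every vertex and $\OR{T}_n$ is rigid. (For $n=1,2$ this is trivial, and for $n=0$ vacuous, so the statement holds for all $n\ge 1$.)

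Next, \emph{paths}. For $P_n$ with vertices $u_1,\dots,u_n$ in path order, the only non-trivial automorphism of the underlying graph is the reflection $u_i\mapsto u_{n+1-i}$. It therefore suffices to orient the edges so that this reflection is not arc-preserving. For $n\ge 3$ one easy choice: orient $u_1u_2$ and $u_2u_3$ both ``forward'' (as arcs $u_1u_2$ and $u_2u_3$) and orient all remaining edges arbitrarily; then $u_2$ is a source while $u_{n-1}$ is not, so the reflection cannot be an automorphism. For $n=1$ and $n=2$ the path is already rigid with the empty orientation or the single arc. So every path admits a rigid orientation. (One may alternatively note, as the paper does by defining ``directed'' paths, that for odd length the directed orientation already works, and treat the even case separately; either route is fine, but the local source/sink argument covers all cases uniformly.)

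Finally, \emph{cycles}. For $C_n$ with vertices $w_0,\dots,w_{n-1}$ in cyclic order, the automorphism group is dihedral of order $2n$. A convenient rigid orientation for $n\ge 3$: take the directed cycle (all arcs $w_iw_{i+1}$) except reverse a single arc, say replace $w_0w_1$ by $w_1w_0$; then $w_1$ has out-degree $0$ (a sink) and $w_0$ has out-degree $2$, while every other vertex has out-degree $1$. An automorphism must send the unique out-degree-$2$ vertex to itself and the unique out-degree-$0$ vertex to itself; since a dihedral automorphism fixing two distinct vertices of a cycle on $n\ge 3$ vertices and preserving the arc directions at them must be the identity, the orientation is rigid. (A quick check: the reflection fixing $w_0$ would swap its two neighbours $w_1$ and $w_{n-1}$, but $w_1$ is a sink and $w_{n-1}$ is not, so it is excluded; and no nontrivial rotation fixes a vertex.) Hence every cycle admits a rigid orientation.

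With these three constructions, the hypothesis of Proposition~\ref{prop:rigid-orientation} is met in each case, so $\Dmin(G)=\DminP(G)=1$, $\chiDmin(G)=\chi(G)$ and $\chiDminP(G)=\chiP(G)$, as claimed. The only mildly delicate point is the cycle case, where one must make sure the chosen orientation kills \emph{all} $2n$ dihedral symmetries and not merely the rotations; pinning down two vertices by distinct out-degrees, as above, handles this cleanly, so I do not anticipate a real obstacle.
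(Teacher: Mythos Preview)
Your overall strategy matches the paper's exactly: exhibit a rigid orientation for each family and then invoke Proposition~\ref{prop:rigid-orientation}. The complete-graph case (transitive tournament) and the cycle case (directed cycle with one arc reversed) are precisely what the paper uses, and your rigidity arguments there are correct, modulo a harmless swap of the roles of $w_0$ and $w_1$ in the cycle (with the arc $w_1w_0$ it is $w_1$ that has out-degree~$2$ and $w_0$ that is the sink, not the reverse).

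The path case, however, contains an error. With arcs $u_1u_2$ and $u_2u_3$, the vertex $u_2$ has in-neighbour $u_1$, so it is \emph{not} a source. More seriously, orienting the remaining edges ``arbitrarily'' does not guarantee rigidity: if $n\ge 5$ and the last two edges happen to be oriented as $u_nu_{n-1}$ and $u_{n-1}u_{n-2}$, then the reflection $u_i\mapsto u_{n+1-i}$ \emph{is} an automorphism of your oriented path. The fix is the one you mention parenthetically and that the paper actually uses: take the directed path $u_1\to u_2\to\cdots\to u_n$, whose unique source $u_1$ and unique sink $u_n$ cannot be interchanged by the reflection. This works for every $n\ge 2$, with no parity split needed.
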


\begin{proof}
Since all directed paths and transitive tournaments are rigid, and
every orientation of the cycle $C_n$, $n\ge 3$, obtained from the directed cycle by
reversing exactly one arc is rigid, the first statement holds.
The second statement then directly follows from Proposition~\ref{prop:rigid-orientation}.
\end{proof}

We now consider the parameters
$\Dmax(G)$, $\chiDmax(G)$, $\DmaxP(G)$ and $\chiDmaxP(G)$ for
$G$ being a path, a cycle, or a complete graph.
For paths, we have the following result, which proves columns
$\Dmax(G)$ and $\chiDmax(G)$ of Table~\ref{table:global-vertex} 
and columns
$\DmaxP(G)$ and $\chiDmaxP(G)$ of Table~\ref{table:global-edge}, for lines 1 and~2.

\begin{theorem}\label{th:paths}
Let $P_n$ denote the path of order~$n$.
We then have
$\Dmax(P_n)=\DmaxP(P_n)=1$ and $\chiDmax(P_n)=\chiDmaxP(P_n)=2$
if $n$ is even, and
$\Dmax(P_n)=\DmaxP(P_n)=\chiDmaxP(P_n)=2$ and $\chiDmax(P_n)=3$
otherwise.
\end{theorem}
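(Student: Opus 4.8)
The plan is to analyse the path $P_n$ with vertices $v_1,v_2,\dots,v_n$ according to the parity of $n$, using the fact (Proposition~\ref{prop:inequalities}) that $\Dmax(P_n)\le D(P_n)$, $\DmaxP(P_n)\le D'(P_n)$, $\chiDmax(P_n)\le\chiD(P_n)$ and $\chiDmaxP(P_n)\le\chiDP(P_n)$, together with the known values recalled in Theorems~\ref{th:simple-vertex} and~\ref{th:simple-edge}. Since every orientation of $P_n$ is a subgraph (spanning) of the same automorphism group as $P_n$, whose only non-trivial automorphism is the ``flip'' $\phi$ sending $v_i$ to $v_{n+1-i}$, the whole question reduces to: for which orientations $\OR{P_n}$ is the flip an automorphism of $\OR{P_n}$, and when it is, how many labels are needed to break it (subject to properness for the chromatic versions). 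First I would observe that $\phi$ is an automorphism of $\OR{P_n}$ exactly when the arc directions are ``anti-symmetric'' under the flip, i.e. the arc between $v_i$ and $v_{i+1}$ is directed oppositely to the arc between $v_{n-i}$ and $v_{n+1-i}$; for odd $n=2k+1$ there always exists such an orientation (and indeed the central vertex $v_{k+1}$ is then fixed), while for even $n$ I claim that no orientation of $P_n$ admits the flip as an automorphism, because the flip would have to reverse the central arc $v_{n/2}v_{n/2+1}$ onto itself, which is impossible for an oriented (antisymmetric) graph.

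For $n$ even this immediately gives the result: every orientation $\OR{P_n}$ is rigid, hence $\OD(\OR{P_n})=\OD'(\OR{P_n})=1$ and, by Proposition~\ref{prop:rigid-orientation} applied orientation-wise, $\OchiD(\OR{P_n})=\chi(P_n)=2$ and $\OchiDP(\OR{P_n})=\chi'(P_n)=2$. Taking the maximum over all orientations yields $\Dmax(P_n)=\DmaxP(P_n)=1$ and $\chiDmax(P_n)=\chiDmaxP(P_n)=2$, as claimed. (For completeness one should note $P_2$ has no internal arc, and $\chi'(P_2)=1$; but $P_2=P_{2\cdot 1}$ and the statement only claims $n$ even $\ge 2$ via the tables, so the small cases are consistent — I would remark on $P_2$ explicitly.)

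For $n$ odd, I would split into the two directions. Upper bounds: since $D(P_n)=2$, $D'(P_n)=2$, $\chiD(P_{2n+1})=3$ and $\chiDP(P_{\text{odd}})=2$, Proposition~\ref{prop:inequalities} gives $\Dmax(P_n)\le 2$, $\DmaxP(P_n)\le 2$, $\chiDmax(P_n)\le 3$ and $\chiDmaxP(P_n)\le 2$. Lower bounds: I must exhibit a single orientation of $P_n$ that is \emph{not} rigid, so that $1$ is not enough for $\OD$, $\OD'$; the natural choice is the orientation symmetric about the centre — e.g. orient all arcs on the left half ``towards'' the centre and all arcs on the right half ``away'' from the centre (so $v_{k}\to v_{k+1}$ and $v_{k+2}\to v_{k+1}$ around the centre $v_{k+1}$, suitably extended), which is preserved by the flip $\phi$; this forces $\OD(\OR{P_n})\ge 2$, $\OD'(\OR{P_n})\ge 2$, and for the chromatic index $\OchiDP(\OR{P_n})\ge\chi'(P_n)=2$, so the maxima for $\Dmax$, $\DmaxP$, $\chiDmaxP$ are at least $2$, matching the upper bounds. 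Finally, for $\chiDmax(P_n)=3$ I need an orientation whose distinguishing chromatic number is $3$ rather than $2$: again take the centrally symmetric orientation above; with only $2$ colours the unique proper $2$-colouring of a path is forced (alternating), hence it is $\phi$-invariant whenever $\phi$ is an automorphism (which it is for this orientation, $n$ odd), so no proper $2$-labelling is distinguishing, giving $\OchiD(\OR{P_n})\ge 3$; combined with $\chiDmax(P_n)\le\chiD(P_{2n+1})=3$ this yields equality.

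I expect the only genuinely delicate point to be the claim that \emph{no} orientation of an even path admits the flip as an automorphism — this is the crux that makes even paths rigid under every orientation — but it follows cleanly from the observation that the central edge must be mapped to itself by $\phi$, so $\phi$ would have to reverse that one arc, contradicting antisymmetry of oriented graphs; everything else is a short verification that the explicit centrally symmetric orientation works and a check that the alternating $2$-colouring is the unique proper one on a path.
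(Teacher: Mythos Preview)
Your proof is correct and follows essentially the same approach as the paper: identify the flip as the only non-trivial automorphism of $P_n$, observe that for even $n$ no orientation admits it (you give the clean central-arc argument, which the paper leaves implicit), and for odd $n$ exhibit a flip-invariant orientation together with the observation that the alternating proper $2$-colouring is flip-preserved, then close with the upper bounds from Proposition~\ref{prop:inequalities}. Your side remark on $P_2$ is a fair caveat, since the stated equality $\chiDmaxP(P_n)=2$ fails for $n=2$ (where $\chi'(P_2)=1$), and indeed Table~\ref{table:global-edge} restricts that line to $P_{2n}$ with $n\ge 2$.
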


\begin{proof}
Let us denote by $P_n=v_1\dots v_n$, $n\ge 2$, the undirected path of order $n$. Note that
the only non-trivial automorphism of~$P_n$ is the permutation~$\pi$ that exchanges vertices $x_i$
and~$x_{n-i+1}$ for every~$i$, $1\le i\le\lfloor\frac{n}{2}\rfloor$.

If $n$ is even, every orientation of~$P_n$ is rigid, which gives 
$\Dmax(P_n)=\DmaxP(P_n)=1$ and $\chiDmax(P_n)=\chiDmaxP(P_n)=2$ by Proposition~\ref{prop:rigid-orientation}.

Suppose now that $n$ is odd.
In that case, every orientation such that the edges $v_iv_{i+1}$ and $v_{n-i}v_{n-i+1}$ have opposite
directions is not rigid, which implies that the value of the four considered distinguishing
parameters is strictly greater than~1.
Since $D(P_n)=\DP(P_n)=\chiDP(P_n)=2$, we get $\Dmax(P_n)=\DmaxP(P_n)=\chiDmaxP(P_n)=2$ 
by Proposition~\ref{prop:inequalities}.
Finally, since every 2-vertex-colouring of~$P_n$ is preserved by $\pi$, we have
$\chiDmax(P_n)>2$ and thus, since $\chiD(P_n)=3$, $\chiDmax(P_n)=3$
by Proposition~\ref{prop:inequalities}.
\end{proof}

For cycles, we have the following result, which proves columns
$\Dmax(G)$ and $\chiDmax(G)$ of Table~\ref{table:global-vertex}
and columns
$\DmaxP(G)$ and $\chiDmaxP(G)$ of Table~\ref{table:global-edge}, for lines 3 to~7
(the case of the 3-cycle is covered by Theorem~\ref{th:complete-graphs}).

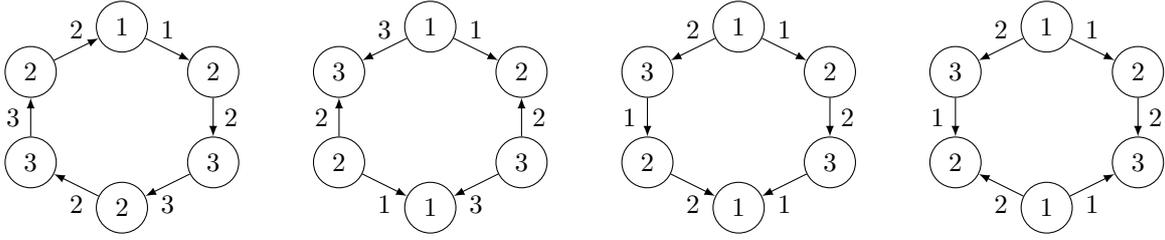
\begin{figure}
\begin{center}
\begin{tikzpicture}[scale=0.6]   
\node[draw,circle] (A) at (0,4) {{\footnotesize 1}};
\node[draw,circle] (B) at (2,3) {{\footnotesize 2}};
\node[draw,circle] (C) at (2,1) {{\footnotesize 3}};
\node[draw,circle] (D) at (0,0) {{\footnotesize 2}};
\node[draw,circle] (E) at (-2,1) {{\footnotesize 3}};
\node[draw,circle] (F) at (-2,3) {{\footnotesize 2}};
\draw[->,>=latex] (A) to (B);
\draw[->,>=latex] (B) to (C);
\draw[->,>=latex] (C) to (D);
\draw[->,>=latex] (D) to (E);
\draw[->,>=latex] (E) to (F);
\draw[->,>=latex] (F) to (A);
\node[above] at (1,3.5) {{\footnotesize 1}};
\node[right] at (2,2) {{\footnotesize 2}};
\node[below] at (1,0.5) {{\footnotesize 3}};
\node[below] at (-1,0.5) {{\footnotesize 2}};
\node[left] at (-2,2) {{\footnotesize 3}};
\node[above] at (-1,3.5) {{\footnotesize 2}};
%
\end{tikzpicture}
\hskip 0.6cm
\begin{tikzpicture}[scale=0.6]   
\node[draw,circle] (A) at (0,4) {{\footnotesize 1}};
\node[draw,circle] (B) at (2,3) {{\footnotesize 2}};
\node[draw,circle] (C) at (2,1) {{\footnotesize 3}};
\node[draw,circle] (D) at (0,0) {{\footnotesize 1}};
\node[draw,circle] (E) at (-2,1) {{\footnotesize 2}};
\node[draw,circle] (F) at (-2,3) {{\footnotesize 3}};
\draw[->,>=latex] (A) to (B);
\draw[->,>=latex] (C) to (B);
\draw[->,>=latex] (C) to (D);
\draw[->,>=latex] (E) to (D);
\draw[->,>=latex] (E) to (F);
\draw[->,>=latex] (A) to (F);
\node[above] at (1,3.5) {{\footnotesize 1}};
\node[right] at (2,2) {{\footnotesize 2}};
\node[below] at (1,0.5) {{\footnotesize 3}};
\node[below] at (-1,0.5) {{\footnotesize 1}};
\node[left] at (-2,2) {{\footnotesize 2}};
\node[above] at (-1,3.5) {{\footnotesize 3}};
%
\end{tikzpicture}
\hskip 0.6cm
\begin{tikzpicture}[scale=0.6]   
\node[draw,circle] (A) at (0,4) {{\footnotesize 1}};
\node[draw,circle] (B) at (2,3) {{\footnotesize 2}};
\node[draw,circle] (C) at (2,1) {{\footnotesize 3}};
\node[draw,circle] (D) at (0,0) {{\footnotesize 1}};
\node[draw,circle] (E) at (-2,1) {{\footnotesize 2}};
\node[draw,circle] (F) at (-2,3) {{\footnotesize 3}};
\draw[->,>=latex] (A) to (B);
\draw[->,>=latex] (B) to (C);
\draw[->,>=latex] (C) to (D);
\draw[->,>=latex] (E) to (D);
\draw[->,>=latex] (F) to (E);
\draw[->,>=latex] (A) to (F);
\node[above] at (1,3.5) {{\footnotesize 1}};
\node[right] at (2,2) {{\footnotesize 2}};
\node[below] at (1,0.5) {{\footnotesize 1}};
\node[below] at (-1,0.5) {{\footnotesize 2}};
\node[left] at (-2,2) {{\footnotesize 1}};
\node[above] at (-1,3.5) {{\footnotesize 2}};
%
\end{tikzpicture}
\hskip 0.6cm
\begin{tikzpicture}[scale=0.6]   
\node[draw,circle] (A) at (0,4) {{\footnotesize 1}};
\node[draw,circle] (B) at (2,3) {{\footnotesize 2}};
\node[draw,circle] (C) at (2,1) {{\footnotesize 3}};
\node[draw,circle] (D) at (0,0) {{\footnotesize 1}};
\node[draw,circle] (E) at (-2,1) {{\footnotesize 2}};
\node[draw,circle] (F) at (-2,3) {{\footnotesize 3}};
\draw[->,>=latex] (A) to (B);
\draw[->,>=latex] (B) to (C);
\draw[->,>=latex] (D) to (C);
\draw[->,>=latex] (D) to (E);
\draw[->,>=latex] (F) to (E);
\draw[->,>=latex] (A) to (F);
\node[above] at (1,3.5) {{\footnotesize 1}};
\node[right] at (2,2) {{\footnotesize 2}};
\node[below] at (1,0.5) {{\footnotesize 1}};
\node[below] at (-1,0.5) {{\footnotesize 2}};
\node[left] at (-2,2) {{\footnotesize 1}};
\node[above] at (-1,3.5) {{\footnotesize 2}};
%
\end{tikzpicture}

\caption{Distinguishing vertex and edge-colourings of non-rigid orientations of $C_6$.\label{fig:C6}}
\end{center}
\end{figure}

\begin{theorem}\label{th:cycles}\mbox{}
\begin{enumerate}
\item For every integer $n\ge 4$, $\Dmax(C_n)=\DmaxP(C_n)=2$.
\item $\chiDmax(C_4)=4$ and, for every integer $n\ge 5$, $\chiDmax(C_n)=3$.
\item For every integer $n\ge 4$, $\chiDmaxP(C_n)=3$.
\end{enumerate}
\end{theorem}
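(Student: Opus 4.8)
The plan is to handle the three statements in parallel, always using the two-sided squeeze provided by Propositions~\ref{prop:inequalities} and~\ref{prop:rigid-orientation} together with the undirected values from Theorems~\ref{th:simple-vertex} and~\ref{th:simple-edge}. For the upper bounds I would first note that every orientation of $C_n$ is an automorphic image-restricted version of the undirected cycle, so $\Dmax(C_n)\le D(C_n)$, $\DmaxP(C_n)\le\DP(C_n)$, $\chiDmax(C_n)\le\chiD(C_n)$, and $\chiDmaxP(C_n)\le\chiDP(C_n)$; for $n\ge 7$ these already give $\le 2,2,3,3$ respectively, and for $n\in\{4,5,6\}$ only $\chiDmax(C_6)$ and $\chiDmaxP(C_6)$ need improvement from $4$ down to $3$ (and $\chiDmax(C_4)$ stays at $4$). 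So the upper-bound work reduces to: (i) showing every orientation of $C_6$ admits a distinguishing proper $3$-vertex-colouring and a distinguishing proper $3$-edge-colouring, and (ii) observing $C_4$ is a genuine exception for $\chiDmax$. Point (i) is exactly what Figure~\ref{fig:C6} is for: since $\Aut(C_6)$ is dihedral of order $12$, an orientation of $C_6$ is non-rigid only if its arc-direction sequence (read cyclically) is invariant under some non-trivial dihedral element, which leaves just a handful of orientations up to symmetry; for each I would exhibit the colourings drawn in the figure and check by hand that no surviving automorphism preserves them.

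For the lower bounds I would produce, for each $n$ in the relevant range, a single non-rigid orientation of $C_n$ and argue that its distinguishing parameters are at least $2$ (resp.\ at least $3$ for $\chiDmax(C_4)$). The natural candidate is the orientation obtained from the directed cycle $\OR{C_n}$ by reversing one carefully chosen arc is rigid—so that is the wrong choice here; instead I want an orientation with a non-trivial automorphism. For $n$ even, reversing two ``antipodal'' arcs of the directed cycle (arcs $v_iv_{i+1}$ and $v_{i+n/2}v_{i+n/2+1}$ both reversed) yields an orientation invariant under the rotation by $n/2$, hence non-rigid, so all four parameters exceed~$1$; combined with the undirected upper bounds this pins $\Dmax$ and $\DmaxP$ at $2$ and forces $\chiDmax,\chiDmaxP\ge 2$. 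For $n$ odd, reversing a single arc gives rigidity, so I instead use the orientation invariant under a reflection: choose the reflection axis through one vertex $v$ and the midpoint of the opposite arc, and orient so that the two arcs at $v$ point ``outward'' symmetrically and the pattern is mirror-symmetric; this is non-rigid, giving $\Dmax,\DmaxP\ge 2$, hence $=2$. The one extra subtlety is $\chiDmax(C_4)$: here I claim \emph{every} orientation of $C_4$ whose underlying colouring uses $\le 3$ colours admits a non-trivial colour-preserving automorphism—because $C_4$ is bipartite with parts of size~$2$, a proper $3$-colouring must repeat a colour on one part, and one checks that for every orientation of $C_4$ the automorphism swapping those two equal-coloured vertices (and acting suitably on the other part) is available and arc-preserving—so $\chiDmax(C_4)\ge 4$, hence $=4$ by Theorem~\ref{th:simple-vertex}(3).

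Finally I would assemble the cases. For $\chiDmax(C_n)$ with $n\ge 5$: the lower bound $\ge 3$ comes from $\chi(C_n)\ge 2$ together with the non-rigid orientations above showing $2$ colours never suffice (for $n$ odd $\chi(C_n)=3$ already does it; for $n$ even one argues that on the antipodally-reversed orientation every proper $2$-colouring is preserved by the rotation by $n/2$), and the upper bound $\le 3$ is the undirected bound for $n\ge 7$ plus the explicit $C_6$ colourings. For $\chiDmaxP(C_n)$, $n\ge 4$: lower bound $\ge 3$ from $\chi'(C_n)\ge 2$ and the observation that a proper $2$-edge-colouring of a cycle is unique up to swapping the two colours and is therefore preserved by every rotation (so no orientation with a rotational automorphism can be distinguished with $2$ colours, and even the rigid-after-one-reversal orientations still need $\chi'=3$ when $n$ is odd)—wait, more carefully: for $\chiDmaxP$ I want one orientation forcing $3$, namely again the antipodally-reversed one for $n$ even (rotation by $n/2$ preserves the essentially unique proper $2$-edge-colouring) and the reflection-symmetric one for $n$ odd (where $\chi'(C_n)=3$ anyway); upper bound $\le 3$ from the undirected value for $n\ge 7$ and from Figure~\ref{fig:C6} for $n=6$, while $n\in\{4,5\}$ are checked directly. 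The step I expect to be the main obstacle is the finite but fiddly case analysis for $C_6$ (and to a lesser extent $C_4$): enumerating the non-rigid orientations up to the dihedral action and verifying that the displayed $3$-colourings are genuinely distinguishing requires care, since a colour-preserving automorphism could in principle be a rotation rather than the obvious reflection, and one must rule out all of them.
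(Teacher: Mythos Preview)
Your overall plan is reasonable, but there is a genuine gap in the upper bounds for small $n$. You assert that the inequalities $\Dmax(C_n)\le D(C_n)$, $\DmaxP(C_n)\le\DP(C_n)$, $\chiDmaxP(C_n)\le\chiDP(C_n)$ reduce the upper-bound work to $C_6$ alone (plus the $C_4$ exception for $\chiDmax$); this is false, because $D(C_4)=D(C_5)=\DP(C_4)=\DP(C_5)=3$ and $\chiDP(C_4)=4$. The undirected bounds therefore do \emph{not} give $\Dmax(C_4),\Dmax(C_5),\DmaxP(C_4),\DmaxP(C_5)\le 2$ or $\chiDmaxP(C_4)\le 3$, and apart from a throwaway ``checked directly'' you never fill these in. The paper avoids the issue entirely by exhibiting, for \emph{every} orientation $\OR{C}$ of \emph{every} $C_n$, an explicit distinguishing $2$-labelling: set $\lambda(v_1)=\lambda(v_2)=1$ and all other vertices~$2$ (for vertices), or label a single arc~$1$ (for arcs). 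Two adjacent vertices of an oriented cycle are joined by an arc and hence cannot be exchanged by any automorphism, so they are fixed and the whole cycle is pinned; this handles all $n\ge 4$ uniformly, with no appeal to the undirected values at all.

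On the lower-bound side you work harder than needed and introduce an error. The directed cycle $\OR{C_n}$ itself is non-rigid (its automorphism group is cyclic of order~$n$), so your antipodally-reversed and reflection-symmetric constructions are unnecessary just to push the parameters above~$1$. More seriously, your argument that $\chiDmax(C_n),\chiDmaxP(C_n)\ge 3$ for even $n$ via the antipodally-reversed orientation fails when $n\equiv 2\pmod 4$: its only non-trivial automorphism is the rotation by $n/2$, and when $n/2$ is odd this rotation \emph{swaps} (rather than preserves) the two colour classes of the unique proper $2$-vertex- or $2$-edge-colouring, so that $2$-colouring is in fact distinguishing for that orientation. The paper instead uses the directed cycle and the rotation~$\rho^2$, which preserves the alternating $2$-colouring for every even~$n$. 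Finally, your $\chiDmax(C_4)\ge 4$ argument over-reaches: you claim that for \emph{every} orientation of $C_4$ the swap of the repeated-colour pair is arc-preserving, but $C_4$ has rigid orientations; what is actually needed (and what the paper does) is one specific orientation $\OR{C'_4}$, with two opposite sources and two opposite sinks, whose automorphism group is the Klein four-group and therefore forces all four vertices to receive distinct colours.
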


\begin{proof}
Let us denote by $C_n=v_1\dots v_nv_1$, $n\ge 4$, the undirected cycle of order $n$.
Since the directed cycle $\OR{C_n}$ is not rigid, the value of the four considered distinguishing
parameters is strictly greater than~1.

Let us first consider distinguishing labellings and let $\OR{C}$ be any orientation of $C_n$.
The 2-vertex-labelling $\lambda$ of $\OR{C}$ defined by
$\lambda(v_1)=\lambda(v_2)=1$ and $\lambda(v_i)=2$ for every $i$, $3\le i\le n$,
is clearly distinguishing since $v_1$ and $v_2$
must be fixed by every $\lambda$-preserving automorphism of $\OR{C}$.
We thus get $\Dmax(C_n)=2$ for every $n\ge 4$.
Similarly, the 2-edge-labelling $\lambda'$ of $\OR{C}$ that assigns colour~1 to exactly one arc of $\OR{C}$
is distinguishing since the end-vertices of the arc coloured with~1 
must be fixed by every $\lambda$-preserving automorphism of $\OR{C}$. 
Therefore, $\DmaxP(C_n)=2$ for every $n\ge 4$.

Let us now consider distinguishing colourings.
Suppose first that $n$ is odd. 
In that case, since $\chi(C_n)=\chiP(C_n)=3$, we get $\chiDmax(C_n)=\chiD(C_n)=3$
and $\chiDmaxP(C_n)=\chiDP(C_n)=3$
by Proposition~\ref{prop:inequalities} and Theorem \ref{th:simple-vertex} or~\ref{th:simple-edge}.

Suppose now that $n$ is even.
Note that every 2-vertex or 2-arc-colouring of the directed cycle $\OR{C_n}$ is $\rho^2$-preserving, 
where $\rho$ denotes the automorphism of $\OR{C_n}$ defined by $\rho(v_i)=v_{i+1}$ for every~$i$,
$1\le i\le n-1$, and $\rho(v_n)=v_1$. This implies $\chiDmax(C_n)>2$ and $\chiDmaxP(C_n)>2$.
We thus get 
$\chiDmax(C_n)=\chiD(C_n)=3$
and $\chiDmaxP(C_n)=\chiDP(C_n)=3$
for every even $n\ge 8$, thanks to
Proposition~\ref{prop:inequalities} and Theorem \ref{th:simple-vertex} or~\ref{th:simple-edge}.

Consider now the cycle $C_4=v_1v_2v_3v_4v_1$. Note that, up to symmetries, $C_4$ has only three
non-rigid orientations, namely the directed cycle $\OR{C_4}$, the orientation $\OR{C'_4}$
with arcs $\OR{v_1v_2}$, $\OR{v_3v_2}$, $\OR{v_3v_4}$ and $\OR{v_1v_4}$,
and the orientation $\OR{C''_4}$
with arcs $\OR{v_1v_2}$, $\OR{v_3v_2}$, $\OR{v_4v_1}$ and $\OR{v_4v_3}$.
Moreover, observe that $\Aut(\OR{C'_4})$ has three non-trivial automorphisms,
one exchanging $v_1$ and $v_3$, one exchanging $v_2$ and $v_4$, and the third one
exchanging both these pairs of vertices. This implies that every distinguishing 
vertex-colouring of $\OR{C'_4}$ must use four colours, and thus
$\chiDmax(C_4)=4$.
To see that $\chiDmaxP(C_4)=3$, it suffices to observe that, by colouring the arcs of 
$\OR{C_4}$, $\OR{C'_4}$ or $\OR{C''_4}$ cyclically with colours $1213$, we obtain a distinguishing 3-arc-colouring.

Consider finally the cycle~$C_6$. 
Up to symmetries, $C_6$ has four distinct non-rigid orientations,
depicted in Figure~\ref{fig:C6}.
It is not difficult to check that the vertex and arc-colourings described
in the same figure are all optimal, which gives 
$\chiDmax(C_6)=\chiDmaxP(C_6)=3$.

This concludes the proof.
\end{proof}

The distinguishing number of tournaments has been studied by Albertson and
Collins in~\cite{AC99}, where it is proved in particular that 
$D(T_n)\le 1 + \left\lceil\frac{\lceil\log n\rceil}{2}\right\rceil$ for every tournament $T_n$ of order~$n$.
Moreover, they conjectured that $D(T)\le 2$ for every tournament $T$.
As observed by Godsil in 2002 (this fact is mentioned by Lozano in~\cite{L19}), this conjecture
follows from Gluck's Theorem~\cite{G83}.

For every integer $n\ge 3$, the values of
$\Dmin(K_n)$, $\chiDmin(K_n)$, $\DminP(K_n)$ and $\chiDminP(K_n)$
are already given by Theorem~\ref{th:simple-chiDmin}.
The values of the other parameters are given by the following result,
which proves columns $\Dmax(G)$ and $\chiDmax(G)$ of Table~\ref{table:global-vertex}, line~8, 
and columns $\DmaxP(G)$ and $\chiDmaxP(G)$ of Table~\ref{table:global-edge}, lines 8 to~12.

\begin{theorem}\label{th:complete-graphs}
For every integer $n\ge 3$,
$\Dmax(K_n)=\DmaxP(K_n)=2$, $\chiDmax(K_n)=n$, and $\chiDmaxP(K_n)=\chiP(K_n)$.
\end{theorem}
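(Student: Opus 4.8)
The plan is to prove the four equalities separately, splitting according to whether we are dealing with vertex- or arc-labellings, and using the known values of $D(K_n)$, $\chiD(K_n)$, $\DP(K_n)$, $\chiDP(K_n)$ from Theorems~\ref{th:simple-vertex} and~\ref{th:simple-edge} together with Proposition~\ref{prop:inequalities} to reduce each claim to (a) exhibiting one ``bad'' orientation that forces the parameter up, and (b) exhibiting, for \emph{every} orientation, a distinguishing labelling of the right size. The key structural fact I would use throughout is that an automorphism of an orientation $\OR{K_n}$ of $K_n$ is exactly an automorphism of the tournament $\OR{K_n}$, hence it preserves the score sequence (out-degrees), so a vertex of out-degree distinct from all others is automatically fixed.

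For $\Dmax(K_n)=2$: the lower bound $\Dmax(K_n)\ge 2$ follows because $K_n$ admits a non-rigid orientation, e.g.\ take any orientation with a non-trivial automorphism — concretely, for $n\ge 3$ one can build a ``rotational'' tournament (circulant) with a cyclic automorphism, so $\OD>1$ for that orientation. For the upper bound I would invoke the theorem of Albertson--Collins / Gluck mentioned just above the statement: every tournament $T$ satisfies $D(T)\le 2$, i.e.\ $\OD(\OR{K_n})\le 2$ for every orientation $\OR{K_n}$; hence $\Dmax(K_n)\le 2$, and combined with $\Dmax(K_n)\le D(K_n)=n$ (vacuous here) we get equality. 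If the authors prefer a self-contained argument one could instead give a direct inductive $2$-labelling, but quoting Gluck's theorem as the paper already does is cleanest.

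For $\DmaxP(K_n)=2$: again the lower bound is that some orientation is non-rigid, so $\ODP>1$. For the upper bound, fix an arbitrary orientation $\OR{K_n}$ and colour exactly one arc, say $\OR{uv}$, with colour $1$ and all others with colour $2$. Any colour-preserving automorphism $\phi$ must map the unique $1$-arc to itself, so $\{\phi(u),\phi(v)\}=\{u,v\}$; since arcs are oriented, $\phi(u)=u$ and $\phi(v)=v$. Now a non-trivial automorphism fixing two vertices of a tournament need not be the identity in general, so I would instead be slightly more careful: colour two incident arcs at a chosen vertex $u$ — pick $u$ arbitrary, and colour one out-arc $\OR{uv}$ and, if it exists, the arc between $v$ and some third vertex $w$, arranging the colour pattern so that the whole neighbourhood structure at $u$ is pinned down. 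The robust version: colour with $1$ exactly the arcs of a fixed Hamiltonian-type ``path of distinctly many $1$-arcs'' is overkill; the honest simplest argument is that $\ODP(\OR{K_n})\le \DP(K_n)=2$ for $n\ge 3$ by Proposition~\ref{prop:inequalities} — wait, that only gives $\DmaxP(K_n)\le 2$ directly since $\DP(K_n)=2$ for all $n\ge 3$ (including $K_3$, where $\DP(K_3)=3$, so the genuinely small cases $n=3,4,5$ need separate inspection). So the real content is the small cases: for $n=3,4,5$ I would exhibit by hand, for each non-rigid orientation, a $2$-arc-labelling that is distinguishing, which is a short finite check (there are very few orientations of $K_3,K_4,K_5$ up to isomorphism). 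This finite case-check is the main obstacle, together with being careful that ``fixing two vertices'' does not kill all automorphisms of a tournament.

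For the two chromatic parameters: $\chiDmax(K_n)=n$ is immediate since $\chi(K_n)=n$ forces every proper vertex-colouring of any orientation to use $n$ colours and to be a bijection $V\to\{1,\dots,n\}$, which leaves no room for a non-trivial colour-preserving permutation — so every orientation has $\OchiD=n$, giving $\chiDmax(K_n)=n$ (and this also re-proves $\chiDmin(K_n)=n$). For $\chiDmaxP(K_n)=\chiP(K_n)$: by Proposition~\ref{prop:inequalities}, $\chiP(K_n)\le\chiDmaxP(K_n)\le\chiDP(K_n)$, and Theorem~\ref{th:simple-edge} gives $\chiDP(K_n)=\chiP(K_n)$ for all $n\ge 3$ \emph{except} $n=4,5$ where $\chiDP(K_4)=\chiDP(K_5)=5>\chi'(K_4)=\chi'(K_5)$ — wait, $\chi'(K_5)=5$ too, so only $K_4$ is genuinely problematic ($\chi'(K_4)=3$, $\chiDP(K_4)=5$). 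So for $n\ne 4$ the sandwich is tight and we are done immediately; for $n=4$ I must show that \emph{every} orientation $\OR{K_4}$ admits a distinguishing proper arc-colouring with $\chi'(K_4)=3$ colours. That is again a small finite check over the (few) orientations of $K_4$ up to isomorphism: for each, take any proper $3$-edge-colouring and verify (or tweak) that it breaks all tournament automorphisms; since orientations of $K_4$ have very small automorphism groups this is short. I expect the $n=4$ (and the $K_3,K_4,K_5$ arc-labelling) finite verifications to be the only real work; everything else is an immediate application of Proposition~\ref{prop:inequalities}, the known undirected values, and Gluck's theorem.
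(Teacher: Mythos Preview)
Your approach is essentially the paper's: Gluck's theorem for $\Dmax$, the sandwich from Proposition~\ref{prop:inequalities} for both chromatic parameters, and Proposition~\ref{prop:inequalities} plus finite checks at $n=3,4,5$ for $\DmaxP$ and at $n=4$ for $\chiDmaxP$. Two small points where the paper is tighter than your sketch. First, your ``rotational (circulant)'' tournament exists only for odd $n$, so it does not give a non-rigid orientation for all $n\ge 3$; the paper instead takes the transitive tournament on $x_1,\dots,x_n$ and reverses the arc $x_{n-2}x_n$, so that $x_{n-2},x_{n-1},x_n$ form a directed $3$-cycle and $(x_{n-2}\,x_{n-1}\,x_n)$ is a nontrivial automorphism --- this works for every $n\ge 3$. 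Second, the paper does not leave the small cases as an unspecified enumeration: for $\DmaxP$ with $n\in\{4,5\}$ it observes that every non-rigid orientation of $K_n$ contains a \emph{transitive} triangle and colours its three arcs~$1$ and all others~$2$ (a colour-preserving automorphism must then fix the source, middle and sink of that triangle, hence is trivial); for $\chiDmaxP(K_4)$ it notes that, up to reversal, the only non-rigid orientation of $K_4$ has a source (or sink) vertex with the remaining three vertices inducing a directed $3$-cycle, and that \emph{every} proper $3$-edge-colouring of $K_4$ already breaks that automorphism. With these two adjustments your plan is exactly the paper's proof.
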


\begin{proof}
Since every complete graph $K_n$, $n\ge 3$, admits a non rigid orientation
(consider for instance the orientation $\OR{K_n}$ of $K_n$
obtained from the transitive orientation of $K_n$ with directed path
$x_1\dots x_n$, by reversing the arc $x_{n-2}x_n$, so that $x_{n-2}$, $x_{n-1}$ and $x_n$
form a directed cycle; the permutation $(x_{n-2},x_{n-1},x_n)$ is then clearly a 
non trivial automorphism of $\OR{K_n}$), 
we get $\Dmax(K_n)\ge 2$ and $\DmaxP(K_n)\ge 2$.

Since, as mentioned above, $D(T)\le 2$ for every tournament $T$,
we get, by Proposition~\ref{prop:inequalities}, 
$\Dmax(K_n)\le 2$ and thus $\Dmax(K_n)=2$.
By Proposition~\ref{prop:inequalities}, we also have
$\DmaxP(K_n)\le \DP(K_n)$, and thus $\DmaxP(K_n)=2$ if $n\ge 6$.
If $n=3$, then the only non rigid orientation of $K_3$ is the
directed cycle. By assigning colour~1 to one arc and colour~2 to the
two other arcs, we get a distinguishing arc-labelling, so that $\Dmax(K_3)=2$.
If $n\in\{4,5\}$, observe that each non rigid orientation of $K_n$ contains
a transitive triangle.
By assigning colour~1 to the three arcs of this triangle and colour~2 to all the
other arcs, we get a distinguishing arc-labelling, so that $\DmaxP(K_4)=\DmaxP(K_5)=2$.

Finally, by Proposition~\ref{prop:inequalities}, 
$\chi(K_n)\le \chiDmax(K_n)\le \chiD(K_n)$ and 
$\chiP(K_n)\le \chiDmaxP(K_n)\le \chiDP(K_n)$,
which gives, thanks to Theorems \ref{th:simple-vertex} and~\ref{th:simple-edge},
$\chiDmax(K_n)=n$, and $\chiDmaxP(K_n)=\chiP(K_n)$ if $n\neq 4$, respectively.
Now, there are only two orientations of $K_4$ that admit a non trivial automorphism,
one with a source vertex and the other one with a sink vertex,
and the three other vertices in both of them inducing a directed 3-cycle.
In each case, every 3-edge-colouring is clearly distinguishing,
which gives $\chiDmaxP(K_4)=3=\chiP(K_4)$.
\end{proof}

 
\section{Complete bipartite graphs: easy cases}
\label{sec:complete-bipartite-easy}

In this section, we consider ``easy cases'' of complete bipartite graphs,
namely $K_{1,n}$ and $K_{n,n}$ for every $n\ge 3$ (the cases $n\in\{1,2\}$ correspond to
$P_2$, $P_3$ or $C_4$, already considered in the previous section).

Concerning stars $K_{1,n}$, it is not difficult to get the following result,
which proves line 9 of Table~\ref{table:global-vertex} and 
line 13 of Table~\ref{table:global-edge}.

\begin{theorem}\label{th:K_1n}
For every integer $n\ge 3$,
$\Dmin(K_{1,n})=\DminP(K_{1,n})=\left\lceil\frac{n}{2}\right\rceil$, 
$\Dmax(K_{1,n})=\DmaxP(K_{1,n})=n$, 
$\chiDmin(K_{1,n})=1+\left\lceil\frac{n}{2}\right\rceil$,
$\chiDmax(K_{1,n})=n+1$,
and $\chiDminP(K_{1,n})=\chiDmaxP(K_{1,n})=n$.
\end{theorem}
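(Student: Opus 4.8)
The plan is to analyze orientations of $K_{1,n}$ according to the split between the out-degree and in-degree of the central vertex $c$. Write $a=d^+(c)$ and $b=d^-(c)$ with $a+b=n$; by symmetry (reversing all arcs) we may assume $a\ge b$, so $a\ge\lceil n/2\rceil$. The automorphism group of any orientation $\OG$ of $K_{1,n}$ fixes $c$ (it is the unique vertex of degree $n>1$ once $n\ge 3$) and acts on the leaves as the full direct product $S_{N^+(c)}\times S_{N^-(c)}$, since any permutation of the out-leaves among themselves and of the in-leaves among themselves is arc-preserving, and no automorphism can swap an out-leaf with an in-leaf. Thus a vertex-labelling is distinguishing iff it assigns pairwise distinct labels to the out-leaves and pairwise distinct labels to the in-leaves (the label of $c$ is irrelevant), and an arc-labelling is distinguishing iff it assigns pairwise distinct labels to the $a$ out-arcs and pairwise distinct labels to the $b$ in-arcs.

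From this description the arc results are immediate. For \emph{any} orientation, a distinguishing arc-labelling needs at least $\max(a,b)=a$ labels, and $a$ labels suffice (colour the out-arcs $1,\dots,a$ and the in-arcs $1,\dots,b$); moreover $a$ is also the chromatic index since all $n$ arcs meet at $c$, so any proper arc-labelling already uses distinct labels on all arcs, hence is automatically distinguishing. Therefore $\OD'(\OG)=a$ and $\OchiDP(\OG)=n$ for every orientation, giving $\chiDminP=\chiDmaxP=n$ and $\DminP$ minimized by taking $a=\lceil n/2\rceil$ (the balanced orientation) and maximized by taking $a=n$ (all arcs leaving $c$), so $\DminP(K_{1,n})=\lceil n/2\rceil$ and $\DmaxP(K_{1,n})=n$. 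The same count gives $\OD(\OG)=\max(a,b)=a$ for the vertex version (here colouring $c$ costs nothing), hence $\Dmin(K_{1,n})=\lceil n/2\rceil$ and $\Dmax(K_{1,n})=n$.

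For the vertex-\emph{colouring} parameter $\OchiD$, the constraint is that the labelling is proper, so $c$ must avoid the label of every leaf; since all leaves receive labels in the same pool, and a distinguishing colouring needs the out-leaves to be rainbow and the in-leaves to be rainbow, we need enough labels for $a$ distinct values on the out-side, $b$ distinct values on the in-side, and one value for $c$ different from all $n$ leaf labels. The extreme cases: if $a=n$ (all arcs out of $c$), all $n$ leaves must get distinct labels and $c$ needs an $(n{+}1)$-st, so $\OchiD=n+1$, giving $\chiDmax(K_{1,n})=n+1$ (and Proposition~\ref{prop:inequalities} with $\chiD(K_{1,n})=n+1$ shows this is the most possible). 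If the orientation is as balanced as possible, $a=\lceil n/2\rceil$, $b=\lfloor n/2\rfloor$, one can reuse the in-side labels $\{1,\dots,b\}$ as a subset of the out-side labels $\{1,\dots,a\}$ and give $c$ a fresh label $a+1$; this is proper and distinguishing, using $1+\lceil n/2\rceil$ labels. I must also check the matching lower bound: I would argue that for \emph{any} orientation, $\OchiD\ge 1+\max(a,b)\ge 1+\lceil n/2\rceil$ because the larger side forces $\max(a,b)$ leaf-labels and $c$ needs one more, and that no orientation does better than the balanced one since $\max(a,b)$ is minimized there.

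The main obstacle is the lower bound for $\chiDmin$: one has to rule out the possibility that some cleverly chosen orientation lets $c$ share a label with a leaf in a way that still breaks all symmetry with fewer than $1+\lceil n/2\rceil$ labels. The key point, which I would state carefully, is that $c$ is always fixed and properness forbids $c$ from using any leaf-label, so the $\max(a,b)$ leaf-labels needed on the majority side are disjoint from $c$'s label; hence at least $\max(a,b)+1$ labels are required, and since $\max(a,b)\ge\lceil n/2\rceil$ for every split $a+b=n$, we get $\OchiD(\OG)\ge 1+\lceil n/2\rceil$ for all $\OG$, with equality achieved by the balanced orientation as above. Once this is in place, all six displayed equalities follow by combining the per-orientation formulas with Proposition~\ref{prop:inequalities} and the known undirected values in Theorem~\ref{th:simple-vertex}.
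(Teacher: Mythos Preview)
Your proof is correct and follows essentially the same approach as the paper: both arguments identify that the central vertex is fixed and that the automorphism group acts as the full symmetric group on each of the two classes of leaves, then read off each parameter from the resulting requirement that the out-leaves (resp.\ out-arcs) and the in-leaves (resp.\ in-arcs) be rainbow. Your write-up is more explicit about the group structure and the lower bound for $\chiDmin$, but the underlying reasoning is the same.
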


\begin{proof}
Let $x$ be the central vertex of $K_{1,n}$ and $\{y_1,\dots,y_n\}$ the set of neighbours of $x$.
Let $\OR{K}$ denote any orientation of $K_{1,n}$.
Observe that every automorphism of $\OR{K}$ can only permute in-neighbours of $x$,
and out-neighbours of $x$, $x$ being fixed.
Hence, by considering the orientation of $K_{1,n}$ for which $d^+(x)=n$,
we get $\Dmax(K_{1,n})=\DmaxP(K_{1,n})=n$.

On one other hand, the minimum value of $\Dmin(K_{1,n})$ and $\DminP(K_{1,n})$ is
attained when the number of in-neighbours and the number of out-neighbours of $x$ differ
by at most one, which gives $\Dmin(K_{1,n})=\DminP(K_{1,n})=\left\lceil\frac{n}{2}\right\rceil$.
Similarly, since all in-neighbours (resp. all out-neighbours) must have distinct colours
in every distinguishing vertex-colouring of $K_{1,n}$, and these colours have to be
distinct from the colour of $x$, we get 
$\chiDmin(K_{1,n})=1+\left\lceil\frac{n}{2}\right\rceil$ (when 
the number of in-neighbours and out-neighbours of $x$ differ
by at most one),
and $\chiDmax(K_{1,n})=n+1$ (when all arcs are out-going arcs from $x$).

Finally, since all arc-colourings of every orientation of $K_{1,n}$
are distinguishing,
we get $\chiDminP(K_{1,n})=\chiDmaxP(K_{1,n})=\chiP(K_{1,n})=n$.
\end{proof}

The distinguishing number, distinguishing chromatic number, distinguishing index and dis-
tinguishing chromatic index
of balanced complete bipartite graphs $K_{n,n}$ have been studied
by Collins and Trenk~\cite{CT06},
Kalinowski and Pil\'sniak~\cite{KP15},
and Alikhani and Soltani~\cite{AS19}.
The following theorem summarizes these results.

\begin{theorem}[\cite{CT06}, \cite{KP15}, \cite{AS19}]\label{th:complete-bipartite-DPnn}\mbox{}
\begin{enumerate}
\item $D(K_{3,3})=4$, $\chiD(K_{3,3})=6$, $\DP(K_{3,3})=3$ and $\chiDP(K_{3,3})=5$.  
\item For every $n\ge 4$, $D(K_{n,n})=n+1$, $\chiD(K_{n,n})=2n$, $\DP(K_{n,n})=2$ and $\chiDP(K_{n,n})=n+1$.
\end{enumerate}
\end{theorem}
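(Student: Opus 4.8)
The plan is to work with the identification $\Aut(K_{n,n})\cong S_n\wr S_2$: writing $A$ and $B$ for the two parts of size~$n$, an automorphism either fixes $\{A,B\}$, acting on $A$ and on $B$ by an arbitrary pair of permutations, or exchanges $A$ and $B$; I also use $\chi(K_{n,n})=2$ and $\chi'(K_{n,n})=n$ (König). For $D$ and $\chiD$ a single argument handles $n=3$ and $n\ge 4$. The key point is that a vertex-labelling taking equal values on two vertices of one part is preserved by their transposition, so every distinguishing vertex-labelling is injective on $A$ and on $B$; this forces $D(K_{n,n})\ge n$, and since the colours used on $A$ and on $B$ are disjoint in any proper colouring (each $A$-vertex being adjacent to every $B$-vertex), also $\chiD(K_{n,n})\ge 2n$. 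For the upper bounds I would colour $A$ bijectively with $\{1,\dots,n\}$ and $B$ bijectively with $\{n+1,\dots,2n\}$ (a proper distinguishing colouring, so $\chiD\le 2n$), and for $D$ colour $A$ bijectively with $\{1,\dots,n\}$ and $B$ bijectively with $\{1,\dots,n-1,n+1\}$: injectivity rules out every part-preserving automorphism and the distinct colour \emph{sets} of $A$ and $B$ rule out the part-exchange, whereas with only $n$ colours both parts are forced to carry the set $\{1,\dots,n\}$ and then some part-exchange is preserving. Hence $D(K_{n,n})=n+1$ and $\chiD(K_{n,n})=2n$.

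For the distinguishing index, identify a $2$-edge-labelling with the spanning subgraph $H\subseteq K_{n,n}$ formed by the edges coloured~$1$; it is distinguishing precisely when no non-trivial element of $\Aut(K_{n,n})$ stabilises $H$, that is, $H$ admits no non-trivial part-preserving automorphism and is carried to itself by no part-exchanging map. For $n\ge 4$ one must produce such an $H$. Prescribing pairwise distinct degrees on each side kills the part-preserving automorphisms, but it is not enough, since the only bipartite graphs on $(n,n)$ vertices with all degrees distinct on both sides are, up to the forced vertex labelling, ``staircases'', and every staircase is stabilised by a part-exchange; so rigidity has to be extracted from finer structure involving repeated degrees. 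This is the delicate point; for $n\ge 4$ there is enough room, and in fact a $2$-edge-labelling of $K_{n,n}$ is the same thing as a $2$-vertex-labelling of $K_n\,\square\,K_n$ — both are $n\times n$ matrices over $\{1,2\}$ and the two automorphism actions coincide — so $\DP(K_{n,n})=D(K_n\,\square\,K_n)$, which equals~$2$ for $n\ge 4$ by \cite{FI08,IJK08}; hence $\DP(K_{n,n})=2$. For $K_{3,3}$ one proves instead $\DP\ge 3$ by a finite check: complementation inside $K_{3,3}$ commutes with $\Aut(K_{3,3})$, so it suffices to examine the spanning subgraphs with at most~$4$ edges, up to $\Aut(K_{3,3})$, and verify that each is stabilised by a non-trivial automorphism; exhibiting one distinguishing $3$-edge-labelling then yields $\DP(K_{3,3})=3$.

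For $\chiDP$ the upper bounds are cheap: the general bound $\chiDP(G)\le\Delta(G)+1$, valid for connected graphs with at least three vertices apart from a short exceptional list not containing $K_{n,n}$ with $n\ge 4$, gives $\chiDP(K_{n,n})\le n+1$, and a direct construction gives a distinguishing proper $5$-edge-colouring of $K_{3,3}$. The content, and the step I expect to be the main obstacle, is the matching lower bounds. A proper $n$-edge-colouring of $K_{n,n}$ is a $1$-factorisation, equivalently a Latin square $L$ of order~$n$, and an automorphism of $K_{n,n}$ preserves such a colouring exactly when it stabilises each colour class setwise; the part-preserving such automorphisms are the non-trivial principal autotopisms of $L$, namely the pairs $(\sigma,\tau)$ with $L_{\sigma(i)\tau(j)}=L_{ij}$, and the part-exchanging ones correspond to $L$ agreeing with some permutation of the rows and columns of its transpose. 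Thus $\chiDP(K_{n,n})\ge n+1$ for $n\ge 4$ amounts to showing that no Latin square of order~$n$ escapes all of these obstructions — clean for $n=4$, since every Latin square of order~$4$ is isotopic to a group table and therefore has a non-trivial principal autotopism, and more delicate in general, where one works through the autotopism groups of Latin squares — while for $K_{3,3}$ one additionally has to rule out every proper $4$-edge-colouring, the proper $3$- and $4$-edge-colourings of $K_{3,3}$ forming, up to symmetry, a short enumerable list; this gives $\chiDP(K_{3,3})\ge 5$. Combining the inequalities yields the theorem.
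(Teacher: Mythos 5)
First, a point of reference: the paper does not prove this statement. Theorem~\ref{th:complete-bipartite-DPnn} is quoted as background from \cite{CT06}, \cite{KP15} and \cite{AS19}, so there is no internal proof to compare yours against; what follows assesses your argument on its own terms. Your treatment of $D$ and $\chiD$ is complete, correct and uniform over $n\ge 3$, and is exactly the standard Collins--Trenk argument: injectivity on each part forces $n$ labels, with exactly $n$ labels both parts carry the full label set and some part-exchange is preserving (hence $D\ge n+1$), and in the proper case the colour sets of the two parts are disjoint, forcing $2n$. Your reduction of $\DP(K_{n,n})$ to $D(K_n\,\square\,K_n)$ via the line graph is also sound (the natural map $\Aut(K_{n,n})\to\Aut(L(K_{n,n}))$ is an isomorphism here) and is precisely how \cite{KP15} and \cite{FI08,IJK08} treat it; just be aware that you are thereby outsourcing the only hard step, namely the construction of a $2$-distinguishing labelling of the rook's graph, after your own direct attempt via degree sequences stalls. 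The $K_{3,3}$ finite checks for $\DP$ and for ruling out proper $3$- and $4$-edge-colourings are plausible but not executed.

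The genuine gap is the lower bound $\chiDP(K_{n,n})\ge n+1$ for all $n\ge 4$. You translate it correctly into the claim that every Latin square $L$ of order $n$ admits either a nontrivial principal autotopism $(\sigma,\tau,\mathrm{id})$ or a principal isotopism onto its transpose, but you give an argument only for $n=4$ (via isotopy to group tables) and explicitly defer the rest to ``working through the autotopism groups of Latin squares.'' That deferral hides the entire difficulty: by Theorem~\ref{th:Phelps}, for every $n\ge 7$ there exist Latin squares with \emph{trivial} autotopism group, and for such an $L$ there is no nontrivial colour-preserving part-preserving automorphism at all, so the whole burden of the lower bound falls on producing, for \emph{every} such $L$, a colour-preserving part-exchanging automorphism --- exactly the transpose-type symmetry for which your proposal offers no argument and which is a delicate question about autoparatopism-type symmetries rather than autotopisms. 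As it stands, this quarter of the theorem is asserted rather than proved, and any completion of your plan must confront the part-exchanging case head-on.
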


Let us denote by $X=\{x_1,\dots,x_n\}$ and $Y=\{y_1,\dots,y_n\}$ the bipartition
of $K_{n,n}$.
It is a well-known fact the every proper edge-colouring of $K_{n,n}$ corresponds
to a Latin square of order $n$, the colour of the edge $x_iy_j$, $1\le i,j\le n$,
corresponding to the symbol in row $i$ and column $j$.
A Latin square $L$ is {\em asymmetric}~\cite{S13} if, for every three permutations 
$\alpha$, $\beta$ and $\gamma$ on the set $\{1,\dots,n\}$,
the Latin square $L'$ defined by $L'[\alpha(i),\beta(j)]=\gamma(L[i,j])$ for every $i,j$,
$1\le i,j\le n$ is equal to $L$ if and only if $\alpha$, $\beta$ and $\gamma$ are
all the identity.
In other words, if an edge-colouring $\lambda$ of $K_{n,n}$ corresponds to an asymmetric
Latin square, then the only $\lambda$-preserving automorphism of $K_{n,n}$ that preserves  
the bipartition is the identity.
%
Therefore, if $\OR{K}$ is the orientation of $K_{n,n}$ defined by
$N_{\OR{K}}^+(x_i) = Y$ for every $i$, $1\le i\le n$, 
and $\lambda$ is any $n$-arc-colouring of $\OR{K}$ corresponding to an asymmetric
Latin square, then none of the non-trivial automorphisms of $\OR{K}$ is $\lambda$-preserving.

The following result, due to K.T.~Phelps~\cite{P80}, concerns asymmetric Latin squares and  
will be useful for our next result.

\begin{theorem}[\cite{P80}]\label{th:Phelps}
For every integer $n\ge 7$, there exists an asymmetric Latin square of order~$n$.
Moreover, the smallest asymmetric Latin squares have order~7.
\end{theorem}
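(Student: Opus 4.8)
The statement has two parts: first, that every Latin square of order $n\le 6$ admits a nontrivial autotopism; second, that for every $n\ge 7$ some Latin square of order $n$ is asymmetric, i.e. has trivial autotopism group.

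For the first part, the plan is to appeal to the classical enumeration of small Latin squares. Up to isotopy there are $1,1,1,2,2,22$ Latin squares of orders $1,\dots,6$, and it suffices to exhibit a nontrivial autotopism of one representative in each class. For $n\le 4$ every Latin square is isotopic to the Cayley table of a group; for a group table $L[i,j]=i*j$ and any $g\ne e$, the triple $(\alpha,Id,\gamma)$ with $\alpha(i)=\gamma(i)=g*i$ satisfies $L[\alpha(i),j]=\gamma(L[i,j])$ and is therefore a nontrivial autotopism. For $n=5$ the two isotopy classes ($\mathbb{Z}_5$ and the unique non-group class) are each immediately seen to admit a nontrivial autotopism, and for $n=6$ one checks, over the $9408$ reduced Latin squares (equivalently the $22$ isotopy-class representatives), that each has a nontrivial autotopism -- a finite verification. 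Hence no asymmetric Latin square of order at most $6$ exists.

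For the second part, the main tool is a counting argument. Let $L_n$ be the number of Latin squares of order $n$, and for a nonidentity triple $\theta=(\alpha,\beta,\gamma)$ of permutations of $\{1,\dots,n\}$ let $F(\theta)$ be the number of Latin squares admitting $\theta$ as an autotopism. The number of order-$n$ Latin squares with some nontrivial autotopism is at most $\sum_{\theta\ne Id}F(\theta)$, so it is enough to show $\sum_{\theta\ne Id}F(\theta)<L_n$. For the lower bound on $L_n$ I would invoke the van der Waerden / Egorychev--Falikman permanent inequality, which gives $L_n\ge (n!)^{2n}/n^{n^2}$. For the upper bound on $F(\theta)$ one exploits the rigidity of autotopisms: a Latin square is determined by its entries on any set of cells meeting every orbit of $\langle\theta\rangle$ on the $n^2$ cells, and the number of such orbits is at most $n^2-c_\theta n$ for some $c_\theta>0$. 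The delicate point is checking that $c_\theta$ is bounded below by a positive absolute constant over all nonidentity $\theta$; this follows from the classical structure theory of autotopisms, according to which a nonidentity autotopism cannot fix nearly all rows, columns and symbols at once (in particular, if one of $\alpha,\beta,\gamma$ is the identity, the other two are fixed-point-free, which is an easy direct check). This yields $F(\theta)\le n^{\,n^2-cn}$ for an absolute $c>0$, and since there are at most $(n!)^3\le n^{3n}$ choices of $\theta$, combining with the lower bound on $L_n$ gives $\sum_{\theta\ne Id}F(\theta)<L_n$ for every $n$ exceeding an explicit threshold $n_0$.

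The hard part is pushing $n_0$ down to $7$: the asymptotic count disposes of all large $n$ uniformly, but near $n=7$ the crude estimate for $F(\theta)$ is too weak, chiefly because an autotopism of order $2$ can fix a sizeable proportion of all Latin squares. I would close the gap in one of two ways: either sharpen the bound on $F(\theta)$ using the detailed classification of which cycle-type triples occur as autotopisms of Latin squares of small order and of how many squares admit each of them, which brings $n_0$ below $7$; or simply exhibit one explicit asymmetric Latin square for each of the finitely many orders $7\le n<n_0$ -- for $n=7$ the square constructed by Phelps~\cite{P80} serves, and for the remaining orders an explicit square obtained from the cyclic Latin square by a short sequence of intercalate switches can be verified, by hand or by machine, to have trivial autotopism group. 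In either case, together with the first part this shows that the asymmetric Latin squares of smallest order are precisely those of order $7$.
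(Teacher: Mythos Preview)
The paper does not prove this theorem: Theorem~\ref{th:Phelps} is quoted from Phelps~\cite{P80} and used as a black box in the proof of Theorem~\ref{th:K_nn}. There is therefore no ``paper's own proof'' to compare your proposal against.

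As a standalone attempt, your outline is plausible but differs from what Phelps actually does. Phelps' argument is constructive: he builds, for every $n\ge 7$, an explicit Latin square of order~$n$ and proves directly that its autotopism group is trivial (using prolongation/product-type constructions to pass from small orders to all larger ones). Your route---a counting argument via the van der Waerden permanent bound to show that the number of Latin squares with a nontrivial autotopism is eventually smaller than $L_n$---is in the spirit of later work (e.g.\ McKay--Wanless) showing that \emph{almost all} Latin squares are asymmetric, but as you yourself note, the crude bounds do not bite anywhere near $n=7$. Your proposed remedy, ``exhibit one explicit asymmetric Latin square for each of the finitely many orders $7\le n<n_0$'', is exactly Phelps' method, and once you are doing that you no longer need the asymptotic half at all: the constructive part already covers every $n\ge 7$. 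So the counting argument is superfluous, and the proof collapses back to Phelps' original approach, which you have not actually carried out (you only assert that suitable squares ``can be verified'').

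In short: there is nothing in the present paper to compare with, and your sketch, while not wrong in spirit, is not a self-contained proof---the essential constructive step is deferred to the very reference the theorem cites.
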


For balanced complete bipartite graphs, we have the following result,
which proves line 10 of Table~\ref{table:global-vertex} and 
lines 14, 15 and 16 of Table~\ref{table:global-edge}.

\begin{theorem}\label{th:K_nn}
For every integer $n\ge 3$,
$\Dmin(K_{n,n})=\DminP(K_{n,n})=1$, $\Dmax(K_{n,n})=n$, $\DmaxP(K_{n,n})=2$, 
$\chiDmin(K_{n,n})=2$, $\chiDminP(K_{n,n})=n$ and $\chiDmax(K_{n,n})=2n$.
Moreover, $\chiDmaxP(K_{n,n})=n+1$ if $3\le n\le 6$, and $\chiDmaxP(K_{n,n})=n$ if $n\ge 7$.
\end{theorem}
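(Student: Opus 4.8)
The plan is to handle each of the eight quantities in turn, using Proposition~\ref{prop:inequalities}, Proposition~\ref{prop:rigid-orientation} and Theorem~\ref{th:complete-bipartite-DPnn} to pin most of them down, and to reserve the real work for $\chiDmaxP$. Write $X=\{x_1,\dots,x_n\}$, $Y=\{y_1,\dots,y_n\}$ for the bipartition. First, $K_{n,n}$ admits a rigid orientation: orient all edges from $X$ to $Y$ except swap the direction of $x_1y_1$, and argue directly that the out-degree sequence on $X$ and in-degree sequence on $Y$ force every automorphism to fix each vertex (the vertex $x_1$ is the unique vertex of $X$ with out-degree $n-1$, $y_1$ is the unique vertex of $Y$ with in-degree $n-1$, and then the remaining vertices are distinguished by adjacency to $x_1$ resp.\ $y_1$; alternatively invoke that a generic tournament-like orientation is asymmetric). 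Hence $\Dmin(K_{n,n})=\DminP(K_{n,n})=1$, $\chiDmin(K_{n,n})=\chi(K_{n,n})=2$ and $\chiDminP(K_{n,n})=\chi'(K_{n,n})=n$, by Proposition~\ref{prop:rigid-orientation}.

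Next the maxima for labellings. For $\Dmax$, take the orientation with all arcs from $X$ to $Y$: its automorphism group is $\mathrm{Sym}(X)\times\mathrm{Sym}(Y)$ (bipartition-preserving, since $X$ consists of the sources), so a distinguishing vertex-labelling must give distinct labels within $X$ and within $Y$, forcing $n$ labels; conversely $n$ labels always suffice since $D(K_{n,n})=n+1>n$ is not needed — rather, any orientation $\OR K$ of $K_{n,n}$ has $\OD(\OR K)\le n$ because labelling $X$ with $1,\dots,n$ and $Y$ with $1,\dots,n$ kills all bipartition-preserving automorphisms, and a bipartition-swapping automorphism is impossible once $n\ge 3$ unless the orientation is ``self-converse across the swap'', which one rules out or absorbs into the same bound; so $\Dmax(K_{n,n})=n$. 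For $\DmaxP$: $\DmaxP(K_{n,n})\le \DP(K_{n,n})=2$ for $n\ge 4$ by Proposition~\ref{prop:inequalities} and Theorem~\ref{th:complete-bipartite-DPnn}, and for $n=3$ one checks by hand (or via the same $2$-arc-labelling idea: colour one arc $1$ and all others $2$, then the endpoints of that arc are fixed, and a short case analysis finishes); the directed-cycle-type obstruction shows the value is not $1$, so $\DmaxP(K_{n,n})=2$. For $\chiDmax$: the all-$X$-to-$Y$ orientation again has automorphism group $\mathrm{Sym}(X)\times\mathrm{Sym}(Y)$, a distinguishing proper colouring must use $n$ colours inside $X$ and $n$ more (disjoint, by properness — every $x_i y_j$ is an edge) inside $Y$, giving $2n$; and $\chiDmax(K_{n,n})\le\chiD(K_{n,n})=2n$ by Proposition~\ref{prop:inequalities}. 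Hence $\chiDmax(K_{n,n})=2n$.

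The substantive part is $\chiDmaxP(K_{n,n})$, i.e.\ the largest, over orientations $\OR K$ of $K_{n,n}$, of the least number of colours in a proper distinguishing \emph{arc}-colouring. By Proposition~\ref{prop:inequalities}, $n=\chi'(K_{n,n})\le\chiDmaxP(K_{n,n})\le\chiDP(K_{n,n})$, and $\chiDP(K_{n,n})=n+1$ for $n\ge 4$ (and $=5$ for $n=3$) by Theorem~\ref{th:complete-bipartite-DPnn}. So for $n\ge 4$ it is one of $n$ or $n+1$, and the claim is: it equals $n$ exactly when $n\ge 7$. The upper bound $\chiDmaxP(K_{n,n})\le n$ for $n\ge 7$ is where Phelps's Theorem~\ref{th:Phelps} enters: for \emph{any} orientation $\OR K$, a proper $n$-arc-colouring is a Latin square $L$; choosing $L$ to be an asymmetric Latin square of order $n$ (exists for $n\ge 7$) kills every bipartition-preserving $\lambda$-preserving automorphism, and I must additionally rule out bipartition-swapping $\lambda$-preserving automorphisms of $\OR K$ — here one observes that a swap must map sources to sinks, so if $\OR K$ has unequal numbers of sources/sinks there are none, and in the remaining orientations one either perturbs the asymmetric square slightly on the arcs violating the swap-compatibility or notes that asymmetry of $L$ already forbids it; this ``extra automorphism'' bookkeeping is the main obstacle, because a priori the swap could combine with a non-identity triple $(\alpha,\beta,\gamma)$. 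For the lower bound $\chiDmaxP(K_{n,n})\ge n+1$ when $3\le n\le 6$ (and $\ge 5$ when $n=3$, matching $\chiDP$): I would exhibit a single orientation — the natural candidate is again all arcs from $X$ to $Y$ — and show that \emph{no} asymmetric Latin square of order $n\le 6$ exists (the other half of Theorem~\ref{th:Phelps}), so every proper $n$-arc-colouring of that orientation admits a non-trivial bipartition-preserving automorphism; for $n=3$ one must push to the stated value $5$, presumably by the same orientation together with the explicit value $\chiDP(K_{3,3})=5$ and a check that $4$ colours do not suffice on it. The delicate points, in decreasing order of trouble, are: (1) eliminating bipartition-swapping automorphisms in the $n\ge7$ upper bound when sources and sinks are balanced; (2) for $n\le6$, translating ``no asymmetric Latin square'' into ``this particular orientation needs $n+1$ colours'' cleanly; (3) the small-case $n=3$ bookkeeping.
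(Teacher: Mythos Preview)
Your proposal has two concrete errors. First, your rigid orientation is not rigid: reversing only the edge $x_1y_1$ in the all-$X$-to-$Y$ orientation leaves $x_2,\dots,x_n$ with identical in- and out-neighbourhoods (each has $N^+=Y$ and $N^-=\varnothing$), so any transposition among them, say $(x_2\ x_3)$, is a nontrivial automorphism. Your claim that ``the remaining vertices are distinguished by adjacency to $x_1$ resp.\ $y_1$'' fails because in a bipartite graph $x_2,\dots,x_n$ are not adjacent to $x_1$ at all, and they all relate to $y_1$ identically. The paper instead takes $N^+(x_i)=\{y_j:j\le i\}$, giving the $x_i$ pairwise distinct out-degrees, which is genuinely rigid. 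Second, for $n=3$ you aim at $\chiDmaxP(K_{3,3})=5$, but the theorem asserts the value is $n+1=4$; your upper bound via $\chiDP(K_{3,3})=5$ is too weak, and your intended lower bound of $5$ is false. The paper does not route the upper bound $\chiDmaxP\le n+1$ (for $3\le n\le 6$) through $\chiDP$; it exhibits the explicit proper $(n+1)$-edge-colouring $\lambda(x_iy_j)=i+j\pmod{n+1}$, observes that the colour missing at $x_i$ is $i$ and the one missing at $y_j$ is $j$, and checks that this is distinguishing for every orientation.

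Elsewhere your outline broadly matches the paper's (the all-$X$-to-$Y$ orientation witnesses the maxima; Phelps's theorem drives the $\chiDmaxP$ dichotomy at $n=7$). Two places where the paper is tighter than your sketch: for $\Dmax\le n$, the labelling $\varphi(x_i)=\varphi(y_i)=i$ disposes of any bipartition-swap in one line, since a $\varphi$-preserving swap would have to exchange $x_i$ with $y_i$ and hence reverse the arc between them --- so your hedge about ``self-converse'' orientations is unnecessary; and for $\DmaxP\le 2$, the single $2$-edge-labelling $\varphi'(x_iy_j)=1$ iff $j<i$ works uniformly for all $n\ge 3$ and all orientations (each vertex in each part has a distinct number of incident label-$1$ edges), so no separate $n=3$ case is needed.
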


\begin{proof}
Let $X=\{x_1,\dots,x_n\}$ and $Y=\{y_1,\dots,y_n\}$ denote the bipartition of $K_{n,n}$, $n\ge 3$.

Observe first that the orientation of $K_{n,n}$ given by $N^+(x_i)=\{y_j:\ j\le i\}$ 
for every $i$, $1\le i\le n$, is rigid,
since all vertices in $X$ have distinct in-degrees.
Therefore, thanks to Proposition~\ref{prop:rigid-orientation},
we get $\Dmin(K_{n,n})=\DminP(K_{n,n})=1$,
$\chiDmin(K_{n,n})=\chi(K_{n,n})=2$ and $\chiDminP(K_{n,n})=\chiP(K_{n,n})=n$
for every $n\ge 3$.

Consider now the orientation $\OR{K}$ of $K_{n,n}$ given by $N^+(x_i)=Y$ 
for every $i$, $1\le i\le n$.
Clearly, for every two permutations $\pi_X$ of $X$ and $\pi_Y$ of $Y$,
the product $\pi_X\pi_Y$ is an automorphism of $\OR{K}$. Therefore,
all vertices of $\OR{K}$ must be assigned distinct colours by every distinguishing
vertex colouring, which gives $\chiDmax(K_{n,n})=2n$.

The $n$-vertex-labelling $\varphi$ of $K_{n,n}$ defined by $\varphi(x_i)=\varphi(y_i)=i$
for every $i$, $1\le i\le n$, is distinguishing for every orientation of $K_{n,n}$
since every label is used on two vertices that are connected by an arc.
Hence, $\Dmax(K_{n,n})\le n$. For the orientation $\OR{K}$ of $K_{n,n}$ defined
in the previous paragraph, every distinguishing vertex-labelling must assign distinct
labels to every vertex in each part, and thus $\Dmax(K_{n,n})=n$.

Consider now the 2-edge-labelling $\varphi'$ of $K_{n,n}$ defined
by $\varphi'(x_iy_j)=1$ if and only if $j<i$, for every $i,j$, $1\le i,j\le n$.
Note that every two vertices in each part have a distinct number of incident edges with label~1.
Therefore, reasoning similarly as in the previous paragraph, we get $\DmaxP(K_{n,n})=2$.

If $n\ge 7$, it follows from Theorem~\ref{th:Phelps} that 
the orientation $\OR{K}$ of $K_{n,n}$ admits a distinguishing $n$-arc-colouring,
and thus $\chiDmaxP(K_{n,n})= \chiP(K_{n,n}) = n$.
Finally, if $3\le n\le 6$, 
consider the $(n+1)$-edge-colouring $\lambda$ of $K_{n,n}$ defined by
$\lambda(x_iy_j)=i+j\pmod{n+1}$ for every $i$, $j$, $1\le i,j\le n$.
Since, in each part, the set of colours used on the incident edges of every two vertices 
are distinct (the colour $i$ does not appear on the edges incident with $x_i$,
and the colour $j$ does not appear on the edges incident with $y_j$), 
we get that $\lambda$ is a distinguishing arc-colouring of every orientation
of $K_{n,n}$, and thus $\chiDmaxP(K_{n,n})\le n+1$.
On the other hand, it follows from Theorem~\ref{th:Phelps} 
that for every $n$-edge-colouring $\lambda$ of $K_{n,n}$, there
exists a non trivial $\lambda$-preserving automorphism of $K_{n,n}$ 
such that every vertex is mapped to a vertex in the same part.
Considering the orientation $\OR{K}$ of $K_{n,n}$ defined above, 
this gives $\chiDmaxP(K_{n,n})>n$, and thus $\chiDmaxP(K_{n,n})= n+1$.

This completes the proof.
\end{proof}

 
\section{Complete bipartite graphs: other cases}
\label{sec:complete-bipartite-general}

We consider in this section the remaining cases of complete bipartite graphs, that is,
unbalanced complete bipartite graphs $K_{m,n}$ with $2\le m<n$.

The distinguishing number and the distinguishing chromatic number of 
unbalanced complete bipartite graphs have been determined 
by Collins and Trenk~\cite{CT06}, while 
their distinguishing chromatic index has been given by
Alekhani and Soltani~\cite{AS19}.
The following theorem summarizes these results.

\begin{theorem}[\cite{CT06}, \cite{AS19}]\label{th:complete-bipartite}
For every two integers $m$ and $n$, $2\le m<n$,
  $$D(K_{m,n})=n,\ \chiD(K_{m,n})=m+n \mbox{ and } \chiDP(K_{m,n})=n.$$
\end{theorem}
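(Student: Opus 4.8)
The plan is to establish the three equalities separately, relying throughout on the following reduction. Since $m<n$, the parts $X=\{x_1,\dots,x_m\}$ and $Y=\{y_1,\dots,y_n\}$ are exactly the two degree classes of $K_{m,n}$, so every automorphism preserves the bipartition and $\Aut(K_{m,n})\cong S_m\times S_n$, the two factors acting independently on $X$ and on $Y$. In particular, for any two vertices lying in the same part, the transposition swapping them and fixing all other vertices is a non-trivial automorphism; hence every distinguishing vertex- or edge-labelling of $K_{m,n}$ must assign distinct labels to any two vertices of the same part. The upper bound $D(K_{m,n})\le n$ then follows from the labelling $\lambda(x_i)=i$, $\lambda(y_j)=j$, which is injective on each part and so admits no non-trivial preserving automorphism; and the lower bound follows because any distinguishing labelling is injective on $Y$ and therefore uses at least $n$ labels. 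Thus $D(K_{m,n})=n$.

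For $\chiD(K_{m,n})=m+n$, the upper bound is given by the colouring that assigns $m+n$ pairwise distinct colours to the $m+n$ vertices, which is proper and trivially distinguishing. For the lower bound, observe that in any proper colouring the set of colours appearing on $X$ is disjoint from the set appearing on $Y$, since each vertex of $X$ is adjacent to each vertex of $Y$; combining this with the injectivity-on-each-part requirement for distinguishing colourings forces at least $m+n$ colours.

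For $\chiDP(K_{m,n})=n$: since every distinguishing edge-colouring is in particular a proper edge-colouring, and $\chiP(K_{m,n})=n$ (König's theorem, $K_{m,n}$ being bipartite of maximum degree $n$), we get $\chiDP(K_{m,n})\ge n$. For the matching upper bound I would exhibit an explicit distinguishing proper $n$-edge-colouring: identify the colour set with $\ZZZ/n\ZZZ$ and colour the edge $x_iy_j$ with $i+j$ computed modulo $n$. Every vertex of $X$ then meets all $n$ colours, and the $m<n$ edges at each vertex of $Y$ receive $m$ distinct colours, so the colouring is proper. If $(\sigma,\tau)\in S_m\times S_n$ preserves it, then $\sigma(i)+\tau(j)\equiv i+j\pmod n$ for all $i,j$; fixing $j$ shows $\sigma(i)\equiv i+c\pmod n$ for a constant $c$ not depending on $i$, so the translation $t\mapsto t+c$ of $\ZZZ/n\ZZZ$ maps $\{1,\dots,m\}$ onto itself. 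Since $m<n$, the set $\{1,\dots,m\}$ has a unique element $e$ with $e\in\{1,\dots,m\}$ but $e-1\notin\{1,\dots,m\}$, namely $e=1$, and this property is preserved by every translation that stabilises $\{1,\dots,m\}$; hence $1+c\equiv 1\pmod n$, so $c=0$, whence $\sigma=Id$ and then $\tau=Id$. Therefore the colouring is distinguishing and $\chiDP(K_{m,n})\le n$.

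The first two equalities are immediate consequences of the ``swap two twin vertices'' observation, so the only step requiring a genuine idea is the upper bound $\chiDP(K_{m,n})\le n$: one must produce a proper $n$-edge-colouring with trivial symmetry group, and the crux — the reason the naive cyclic colouring works here although the analogous construction fails for the balanced graphs $K_{n,n}$ — is the elementary fact that a proper ``sub-interval'' $\{1,\dots,m\}$ of the cyclic group $\ZZZ/n\ZZZ$ has no non-trivial translational symmetry when $m<n$.
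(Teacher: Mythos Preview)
The paper does not prove this theorem; it is quoted from~\cite{CT06} and~\cite{AS19} as background, so there is no paper proof to compare against. Your argument is correct on all three counts and self-contained.

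One cosmetic remark: the sentence ``every distinguishing vertex- or edge-labelling of $K_{m,n}$ must assign distinct labels to any two vertices of the same part'' is imprecise for edge-labellings, which do not label vertices; what the transposition argument actually gives in the edge case is that the colour vectors $(\lambda(x_iy_j))_{i}$ must be pairwise distinct as $j$ varies. You do not use the edge version anywhere, so this does no harm, but tighten the wording. The genuinely interesting step, the upper bound $\chiDP(K_{m,n})\le n$ via the cyclic colouring $x_iy_j\mapsto i+j\pmod n$, is argued cleanly: your observation that a proper arc $\{1,\dots,m\}\subsetneq\ZZZ/n\ZZZ$ has a unique ``left boundary'' point, hence no non-trivial translational stabiliser, is exactly the right reason and neatly explains why the same colouring fails to be distinguishing when $m=n$.
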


As recalled in the previous section, the distinguishing index of 
balanced complete bipartite graphs $K_{n,n}$ has been given
by Kalinowski and Pil\'sniak~\cite{KP15} (see Theorem~\ref{th:complete-bipartite-DPnn}).
The distinguishing index of complete bipartite graphs $K_{m,n}$ with $n>m$ is less
easy to determine. This has been done
by Fisher and Isaak~\cite{FI08}, and independently by Imrich, Jerebic and Klav\v zar~\cite{IJK08}.

\begin{theorem}[\cite{FI08}, \cite{IJK08}]\label{th:complete-bipartite-DP}\mbox{}
Let $m$, $n$ and $r$ be integers such that $r\ge 2$ and $(r-1)^m < n \le r^m$.
We then have
$$\DP(K_{m,n}) = \left\{
   \begin{array}{ll}
   r, & \mbox{if $n\le r^m - \lceil \log_r m\rceil - 1$,} \\
   r+1, & \mbox{if $n\ge r^m - \lceil \log_r m\rceil + 1$.}
   \end{array}\right.
$$
Moreover, if $n = r^m - \lceil \log_r m\rceil$, then $\DP(K_{m,n})$ is either $r$
or $r+1$ and can be computed recursively in time $O(\log^*(n))$.
\end{theorem}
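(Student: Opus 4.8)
The plan is to reduce the statement to a purely combinatorial question about matrices over a finite alphabet. Since $m<n$, $\Aut(K_{m,n})=S_m\times S_n$, with no exchange of the two sides. Writing $X=\{x_1,\dots,x_m\}$ and $Y=\{y_1,\dots,y_n\}$, an $r$-edge-labelling of $K_{m,n}$ is the same thing as an $m\times n$ matrix $M$ over $\{1,\dots,r\}$ ($M_{ij}$ being the label of $x_iy_j$), and a pair $(\sigma,\tau)\in S_m\times S_n$ preserves $M$ exactly when permuting the rows of $M$ by $\sigma$ and the columns by $\tau$ returns $M$. The first step is the elementary observation that $M$ is distinguishing if and only if (i) the $n$ columns of $M$ are pairwise distinct, and (ii) no non-trivial $\sigma\in S_m$, acting coordinate-wise on $\{1,\dots,r\}^m$, fixes the set $C(M)$ of columns: equal columns give a transposition in $S_n$, while if the columns are distinct then $\tau$ is determined by $\sigma$, so $\sigma=Id$ forces $\tau=Id$. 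Hence $\DP(K_{m,n})$ is the least $r$ for which $\{1,\dots,r\}^m$ contains an $n$-element subset whose setwise stabiliser under the coordinate action of $S_m$ is trivial; call such a subset \emph{rigid}.

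Next I would establish the two lower bounds. Condition (i) forces $r^m\ge n$, so $(r-1)^m<n$ already gives $\DP(K_{m,n})\ge r$. For the sharper bound, pass to complements: $S\subseteq\{1,\dots,r\}^m$ is rigid iff its complement $\bar S$ is, and $|\bar S|=r^m-n$. The key lemma is that every $\bar S$ with $|\bar S|\le\lceil\log_r m\rceil-1$ fails to be rigid: fixing an ordering of $\bar S$, assign to each coordinate $i\in\{1,\dots,m\}$ its \emph{profile} $q_i\in\{1,\dots,r\}^{|\bar S|}$, namely the $i$-th column of the $|\bar S|\times m$ array formed by the vectors of $\bar S$; since $|\bar S|\le\lceil\log_r m\rceil-1$ yields $r^{|\bar S|}<m$, two coordinates must receive the same profile, and transposing them fixes every member of $\bar S$. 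Therefore $n\ge r^m-\lceil\log_r m\rceil+1$ rules out $r$ colours, and, combined with a construction over an alphabet of size $r+1$ (where the cube is large enough that a rigid $n$-subset is easily produced), this yields $\DP(K_{m,n})=r+1$ in that range.

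For the upper bound $\DP(K_{m,n})\le r$ when $n\le r^m-\lceil\log_r m\rceil-1$, I would construct a rigid $n$-subset of $\{1,\dots,r\}^m$ around a small ``rigid core''. Choose $t=\lceil\log_r m\rceil$ vectors $w_1,\dots,w_t$ whose coordinate profiles $q_1,\dots,q_m\in\{1,\dots,r\}^t$ are not merely pairwise distinct but form a rigid subset of $\{1,\dots,r\}^t$; a direct computation then identifies the stabiliser in $S_m$ of $\{w_1,\dots,w_t\}$ with the stabiliser in $S_t$ of $\{q_1,\dots,q_m\}$, hence it is trivial and the core alone already breaks every coordinate symmetry. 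Crucially, the existence of such a core is exactly the present statement for the \emph{smaller} pair $(t,m)$ in place of $(m,n)$, which sets up an induction on $m$ (small $m$ being checked by hand). One then enlarges the core to a full $n$-element set by adjoining further vectors, chosen so that no coordinate permutation can move a core vector out of the core --- for instance by forbidding the added vectors from having the same entry-multiset (a coordinate-permutation invariant) as any core vector --- so that every automorphism of the enlarged set restricts to one of the core, and is therefore the identity. The recursion $m\mapsto\lceil\log_r m\rceil\mapsto\lceil\log_r\lceil\log_r m\rceil\rceil\mapsto\cdots$ terminates after $O(\log^*(n))$ steps, which is precisely what produces the $O(\log^*(n))$ decision procedure at the borderline $n=r^m-\lceil\log_r m\rceil$: there $|\bar S|$ equals $\lceil\log_r m\rceil$ exactly, its profile set is forced to be (essentially all of) $\{1,\dots,r\}^{\lceil\log_r m\rceil}$, and whether a rigid configuration survives is the very same question one level down.

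The main obstacle I anticipate is exactly this bookkeeping at the two ends of the range. On the one hand, when $n\le r^m-\lceil\log_r m\rceil-1$ one must check that, having reserved the core, there is always enough room to choose the remaining $n-t$ vectors while keeping all columns distinct \emph{and} preventing a non-trivial coordinate permutation from being absorbed into that freedom; this is where the precise threshold $\lceil\log_r m\rceil$, rather than a weaker $\log_r m+O(1)$, has to be earned, and it depends delicately on how close $r^{\lceil\log_r m\rceil}$ is to $m$. On the other hand, at the equality $n=r^m-\lceil\log_r m\rceil$ the recursion must be run carefully enough to guarantee it really does terminate in $O(\log^*(n))$ steps with a definite answer, $r$ or $r+1$, in each instance.
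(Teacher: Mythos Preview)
The paper does not prove this theorem: it is quoted verbatim as a result of Fisher and Isaak~\cite{FI08} and Imrich, Jerebic and Klav\v zar~\cite{IJK08}, and is used only as background (in Table~\ref{table:global-edge} and in the proof of Theorem~\ref{th:complete-bipartite-graphs}). There is therefore no proof in the paper to compare your proposal against.

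For what it is worth, your outline is the approach of the cited sources. The matrix encoding, the reduction to rigid $n$-subsets of $\{1,\dots,r\}^m$ under the coordinate action of $S_m$, the pigeonhole argument on coordinate profiles giving the lower bound, and the recursion $m\mapsto\lceil\log_r m\rceil$ that both builds a rigid core and explains the $O(\log^*(n))$ borderline are exactly the ingredients of~\cite{FI08,IJK08}. The one place where your sketch is genuinely loose is the extension step: forbidding added vectors from sharing an entry-multiset with a core vector can exclude far more than $t$ vectors (an entire $S_m$-orbit per core vector), so for $n$ close to the threshold this device alone need not leave enough room; the cited papers handle the enlargement more carefully, and you correctly flag this as the delicate point.
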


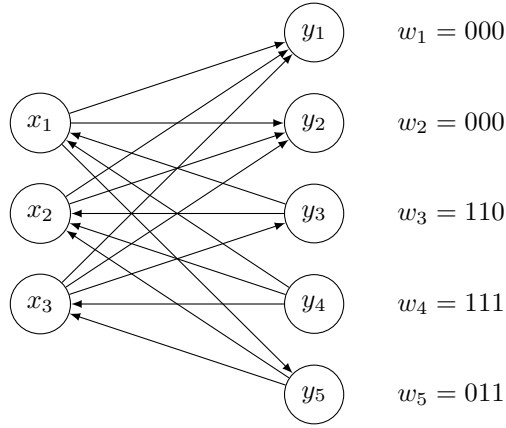
\begin{figure}
\begin{center}
\begin{tikzpicture}[scale=0.6]   
\node[draw,circle] (X1) at (0,2) {{\footnotesize $x_1$}};
\node[draw,circle] (X2) at (0,0) {{\footnotesize $x_2$}};
\node[draw,circle] (X3) at (0,-2) {{\footnotesize $x_3$}};
\node[draw,circle] (Y1) at (6,4) {{\footnotesize $y_1$}};
\node[draw,circle] (Y2) at (6,2) {{\footnotesize $y_2$}};
\node[draw,circle] (Y3) at (6,0) {{\footnotesize $y_3$}};
\node[draw,circle] (Y4) at (6,-2) {{\footnotesize $y_4$}};
\node[draw,circle] (Y5) at (6,-4) {{\footnotesize $y_5$}};
\node at (9,4) {{\footnotesize $w_1=000$}};
\node at (9,2) {{\footnotesize $w_2=000$}};
\node at (9,0) {{\footnotesize $w_3=110$}};
\node at (9,-2) {{\footnotesize $w_4=111$}};
\node at (9,-4) {{\footnotesize $w_5=011$}};
\draw[->,>=latex] (X1) to (Y1);
\draw[->,>=latex] (X2) to (Y1);
\draw[->,>=latex] (X3) to (Y1);
\draw[->,>=latex] (X1) to (Y2);
\draw[->,>=latex] (X2) to (Y2);
\draw[->,>=latex] (X3) to (Y2);
\draw[->,>=latex] (Y3) to (X1);
\draw[->,>=latex] (Y3) to (X2);
\draw[->,>=latex] (X3) to (Y3);
\draw[->,>=latex] (Y4) to (X1);
\draw[->,>=latex] (Y4) to (X2);
\draw[->,>=latex] (Y4) to (X3);
\draw[->,>=latex] (X1) to (Y5);
\draw[->,>=latex] (Y5) to (X2);
\draw[->,>=latex] (Y5) to (X3);
\end{tikzpicture}
\caption{An orientation of the complete bipartite graph $K_{3,5}$.\label{fig:orientation-K35}}
\end{center}
\end{figure}

Before considering the eight distinguishing parameters for complete
bipartite graphs we are interested in,
we introduce some notation and definitions.
For the complete bipartite graph $K_{m,n}$, $m<n$, we denote by $(X,Y)$ the partition
of its vertex set and let $X=\{x_1,\dots,x_m\}$ and $Y=\{y_1,\dots,y_n\}$.
Moreover, for every orientation $\OR{K}$ of $K_{m,n}$
and every integer $d$, $0\le d\le m$, we will denote by $Y_d$
the set of vertices in $Y$ with out-degree $d$, that is,
$Y_d=\{y\in Y\ |\ d^+_{\OR{K}}(y)=d\}$.
For example, in the orientation of $K_{3,5}$ depicted in Figure~\ref{fig:orientation-K35},
we have $Y_0=\{y_1,y_2\}$, $Y_1=\varnothing$, $Y_2=\{y_3,y_5\}$ and $Y_3=\{y_4\}$.


For general complete bipartite graphs, we have the following result.

\begin{theorem}\label{th:complete-bipartite-graphs}
For every two integers $m$ and $n$, $2\le m < n$, the following holds.
\begin{enumerate}
\item $\Dmax(K_{m,n})=\chiDmaxP(K_{m,n})=n$, $\DmaxP(K_{m,n}) = \DP(K_{m,n})$ 
and $\chiDmax(K_{m,n}) = m+n$.
\item If $K_{m,n}$ admits a rigid orientation, then
$\Dmin(K_{m,n}) = \DminP(K_{m,n}) = 1$, $\chiDmin(K_{m,n}) = 2$ and $\chiDminP(K_{m,n}) = n$.
\item If $K_{m,n}$ does not admit any rigid orientation, then
$\Dmin(K_{m,n}) \le \big\lceil\frac{n}{m-1}\big\rceil$,
$\chiDmin(K_{m,n}) \le 1 + \big\lceil\frac{n}{m-1}\big\rceil$,
$\DminP(K_{m,n}) \le \DP(K_{m,\big\lceil\frac{n}{m-1}\big\rceil})$ and
$\chiDminP(K_{m,n}) = n$.
\end{enumerate}
\end{theorem}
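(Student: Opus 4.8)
The plan is to treat the three items in turn, recycling as much as possible the ideas already used for $K_{n,n}$ in Theorem~\ref{th:K_nn}. For item~1, the key observation is that every automorphism of any orientation $\OR{K}$ of $K_{m,n}$ with $2\le m<n$ must fix $X$ setwise and $Y$ setwise (since $|X|\ne|Y|$, no automorphism can swap the parts). First I would establish the upper bounds using the ``worst'' orientation, namely $\OR{K}$ with $N^+(x_i)=Y$ for all $i$ (all arcs going from $X$ to $Y$): here $\Aut(\OR{K})$ contains $\mathrm{Sym}(X)\times\mathrm{Sym}(Y)$, so every distinguishing vertex-colouring needs $m+n$ colours, giving $\chiDmax(K_{m,n})\ge m+n$; combined with Proposition~\ref{prop:inequalities} and Theorem~\ref{th:complete-bipartite} this yields $\chiDmax(K_{m,n})=m+n$. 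The same orientation forces all $n$ vertices of $Y$ to receive distinct labels in any distinguishing vertex-\emph{labelling}, so $\Dmax(K_{m,n})\ge n$; the matching upper bound $\Dmax(K_{m,n})\le D(K_{m,n})=n$ comes from Proposition~\ref{prop:inequalities} again. For $\DmaxP$, the inequality $\DmaxP(K_{m,n})\le\DP(K_{m,n})$ is immediate from Proposition~\ref{prop:inequalities}; for the reverse, I would take an optimal distinguishing edge-colouring of the \emph{undirected} $K_{m,n}$ using $\DP(K_{m,n})$ colours, and argue that on the all-arcs-from-$X$-to-$Y$ orientation no non-trivial (part-preserving) automorphism preserves it — this is exactly the meaning of a distinguishing edge-labelling of $K_{m,n}$ — so $\DmaxP(K_{m,n})\ge\DP(K_{m,n})$. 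Finally $\chiDmaxP(K_{m,n})=n$: the upper bound is $\chiDmaxP(K_{m,n})\le\chiDP(K_{m,n})=n$ by Proposition~\ref{prop:inequalities} and Theorem~\ref{th:complete-bipartite}, and the lower bound $\chiDmaxP(K_{m,n})\ge\chi'(K_{m,n})=n$ is again Proposition~\ref{prop:inequalities}.

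Item~2 is essentially a restatement of Proposition~\ref{prop:rigid-orientation}: if some orientation of $K_{m,n}$ is rigid, then $\Dmin=\DminP=1$ and $\chiDmin=\chi(K_{m,n})=2$, $\chiDminP=\chi'(K_{m,n})=n$ follow directly. So this item requires essentially no work beyond citing that proposition; I would just spell out $\chi(K_{m,n})=2$ and $\chi'(K_{m,n})=n$.

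Item~3 is the substantive part. Here $K_{m,n}$ has no rigid orientation, and the goal is constructive upper bounds for an orientation minimising the parameters. The idea is to pick the orientation $\OR{K}$ in which the out-degrees of the vertices of $Y$ are spread as evenly as possible across $\{0,1,\dots,m\}$ so that the sets $Y_d$ are small; a natural choice is to make each $y\in Y$ have out-degree either $0$ or $m$ — no, better: one wants $|Y_d|\le\lceil n/(m+1)\rceil$ or some such — but the bound to aim for is $\lceil n/(m-1)\rceil$, which suggests using only the $m-1$ ``interior'' out-degrees $1,\dots,m-1$ (avoiding the fully-symmetric classes) so that $Y$ is partitioned into $m-1$ classes, each of size $\le\lceil n/(m-1)\rceil$. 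Within the orientation, two vertices of $Y$ in different classes $Y_d$, $Y_{d'}$ cannot be swapped by any automorphism, and likewise the induced orientation constraints pin down $X$ to a large extent. The plan for $\Dmin$ is then: on each class $Y_d$ use $\lceil n/(m-1)\rceil$ labels to give its members distinct labels (killing all permutations inside a class), and check that the remaining automorphisms — those permuting $X$ — are also killed, either automatically by the orientation or by a free extra label; this gives $\Dmin(K_{m,n})\le\lceil n/(m-1)\rceil$, and adding one colour for $X$ handles properness, giving $\chiDmin\le 1+\lceil n/(m-1)\rceil$. For $\DminP$, the point is that, with this orientation, the arcs between $X$ and a single class $Y_d$ together with the (rigid, distinct-out-degree) structure effectively reduce the distinguishing problem to edge-distinguishing a copy of $K_{m,\lceil n/(m-1)\rceil}$, so $\DminP(K_{m,n})\le\DP(K_{m,\lceil n/(m-1)\rceil})$. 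For $\chiDminP=n$: the lower bound $\ge\chi'(K_{m,n})=n$ is Proposition~\ref{prop:inequalities}, and for the upper bound one exhibits an $n$-arc-colouring of a suitable orientation that is proper and distinguishing, e.g. a Latin-rectangle-type colouring analogous to the $\lambda(x_iy_j)=i+j$ construction in Theorem~\ref{th:K_nn}, arranged so that each vertex of $X$ misses a distinct colour and each vertex of $Y$ misses a distinct colour.

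\textbf{Main obstacle.} The delicate point is item~3: one must exhibit a single orientation $\OR{K}$ of $K_{m,n}$ (among the non-rigid ones, under the hypothesis that \emph{all} orientations are non-rigid) whose automorphism group is small enough that labelling each $Y_d$ injectively with $\lceil n/(m-1)\rceil$ colours really does break \emph{all} automorphisms — including those acting non-trivially on $X$ — and to verify that the out-degree spread $\lceil n/(m-1)\rceil$ is achievable and that no parity/counting obstruction forces a larger class. I expect the careful bookkeeping of which automorphisms survive after fixing $Y$, and the reduction of the $\DminP$ bound to $\DP(K_{m,\lceil n/(m-1)\rceil})$, to be where the real argument lies; the rest is routine given Propositions~\ref{prop:inequalities} and~\ref{prop:rigid-orientation} and Theorems~\ref{th:complete-bipartite} and~\ref{th:complete-bipartite-DP}.
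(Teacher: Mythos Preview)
Your plan matches the paper's proof closely. For item~1 the paper uses exactly the all-arcs-from-$X$-to-$Y$ orientation $\OR{K}$ and, since $m\ne n$, simply observes that $\Aut(\OR{K})=\Aut(K_{m,n})$; this one line gives $\OD(\OR{K})=D(K_{m,n})$, $\ODP(\OR{K})=\DP(K_{m,n})$, $\OchiD(\OR{K})=\chiD(K_{m,n})$ and $\OchiDP(\OR{K})=\chiDP(K_{m,n})$ simultaneously, and all four max-equalities then follow from Proposition~\ref{prop:inequalities} and Theorem~\ref{th:complete-bipartite}. Note that your argument for $\DmaxP\ge\DP$ has the direction reversed: exhibiting a distinguishing edge-labelling of $K_{m,n}$ that remains distinguishing on $\OR{K}$ only shows $\ODP(\OR{K})\le\DP(K_{m,n})$; what you actually need is that every distinguishing arc-labelling of $\OR{K}$ is already a distinguishing edge-labelling of $K_{m,n}$, and this is precisely the content of $\Aut(\OR{K})=\Aut(K_{m,n})$.

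For item~3 your plan is again the paper's: use only out-degrees $1,\dots,m-1$ so that each $Y_d$ has size at most $\big\lceil\frac{n}{m-1}\big\rceil$, label each $Y_d$ injectively, and for $\DminP$ reduce the arc-labelling problem on each induced $\OR{K}'[X\cup Y_d]$ to $K_{m,\lceil n/(m-1)\rceil}$. The obstacle you flag (automorphisms acting on $X$) is resolved in the paper by an extra requirement on the orientation (their Property~P2): for $1\le i\le m$ one forces $N^+_{\OR{K}'}(y_{i(m-1)})=\{x_i\}$, so the in-neighbourhoods $N^-(x_i)$ are pairwise distinct and any automorphism fixing $Y$ pointwise must fix $X$ pointwise. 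This needs $n\ge m(m-1)$, which the paper extracts from the hypothesis that $K_{m,n}$ has no rigid orientation via forward references to Lemmas~\ref{lem:K2n}, \ref{lem:K3n-K4n} and~\ref{lem:m-over-2-to-2m-m}. Finally, your explicit construction for $\chiDminP=n$ is unnecessary: the sandwich $n=\chiP(K_{m,n})\le\chiDminP(K_{m,n})\le\chiDP(K_{m,n})=n$ from Proposition~\ref{prop:inequalities} and Theorem~\ref{th:complete-bipartite} settles it with no orientation-specific work.
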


\begin{proof}
Consider the orientation $\OR{K}$ of $K_{m,n}$ given by $N^+(x_i)=Y$ for
every $i$, $1\le i\le m$.
In that case, since $m\neq n$, we have $Aut(\OR{K}) = Aut(K_{m,n})$,
which implies
$\Dmax(K_{m,n})=D(K_{m,n})$, $\chiDmaxP(K_{m,n})=\chiDP(K_{m,n})$, 
$\DmaxP(K_{m,n}) = \DP(K_{m,n})$ and $\chiDmax(K_{m,n}) = \chiD(K_{m,n})$.
Equalities in the first item then follows from Theorem~\ref{th:complete-bipartite}.
On one other hand,
equalities in the second item directly follow from Proposition~\ref{prop:rigid-orientation}.

\medskip

Suppose now that $K_{m,n}$ does not admit any rigid orientation.
We first claim that we necessarily have $m < \big\lceil\frac{n}{m-1}\big\rceil$.
Indeed, this follows from Lemmas \ref{lem:K2n}, \ref{lem:K3n-K4n} and~\ref{lem:m-over-2-to-2m-m},
proved later: 
if $m=2$, $3$ or $4$, then $n$ is at least $4$, $7$ or $14$,
respectively (see Lemmas \ref{lem:K2n} and~\ref{lem:K3n-K4n}), and the claim holds,
while $n>2^m-\big\lceil\frac{m}{2}\big\rceil$ if $m\ge 5$
(see Lemma~\ref{lem:m-over-2-to-2m-m}).
It is then easy to check that, for every $m\ge 5$,
$$\left\lceil\frac{n}{m-1}\right\rceil > \left\lceil\frac{2^m-\big\lceil\frac{m}{2}\big\rceil}{m-1}\right\rceil > m.$$

For every non-trivial automorphism $\phi$ of $K_{m,n}$, and every two vertices
$u$ and $v$ of $K_{m,n}$ with $\phi(u)=v$, we have $d^+(u)=d^+(v)$, and either
$u,v\in X$ or $u,v\in Y$.
Consider now any orientation $\OR{K}'$ of $K_{m,n}$ 
satisfying the two following properties:
\begin{itemize}
\item[P1.] $d_{\OR{K}'}^+(y_j)=1 + (j \mod m-1)$ for every $j$, $1\le j\le n$, and
\item[P2.] $N_{\OR{K}'}^+(y_{i(m-1)}) = \{x_i\}$ for every $i$, $1\le i\le m$.
\end{itemize}

It is not difficult to observe that such orientations necessarily exist.
Indeed, if $j\le m(m-1)$ and $j\equiv 0\pmod{m-1}$, $N_{\OR{K}'}^+(y_{j})$ is given by Property~P2.
Otherwise, pick arbitrarily a set $S_j$ of $1 + (j \mod m-1)$ vertices of $X$
(note that $1 + (j \mod m-1)<m$) and
let $N_{\OR{K}'}^+(y_{j})=S_j$.

Now, thanks to Property~P1, we have
$1\le d_{\OR{K}'}^+(y_j)\le m-1$ for every~$j$, $1\le j\le n$,
and the cardinality of each set $Y_d$, $1\le d\le m-1$, is at most $\big\lceil\frac{n}{m-1}\big\rceil$.
Moreover, Property~P2 ensures that 
the sets of in-neighbours of every two vertices in $X$ are distinct,
which implies
that every automorphism of $\OR{K}'$ that permutes vertices in $X$ must also permute vertices in $Y$.

Consider now the
$\big\lceil\frac{n}{m-1}\big\rceil$-vertex-labelling $\lambda$ of $\OR{K}'$ defined by
 $\lambda(x_i)=1$ for every $i$, $1\le i\le m$,
and $\lambda(y_j)=1+\big\lfloor\frac{j-1}{m-1}\big\rfloor$ for every $j$, $1\le j\le n$.
This labelling assigns 
distinct labels to the vertices of each set $Y_d$, $1\le d\le m-1$,
and is thus distinguishing, which gives $\Dmin(K_{m,n}) \le \big\lceil\frac{n}{m-1}\big\rceil$.
Moreover, by using one additional label for the vertices of~$X$ instead of label~1, the  
labelling $\lambda$ becomes a distinguishing vertex-colouring, 
and thus $\chiDmin(K_{m,n}) \le 1 + \big\lceil\frac{n}{m-1}\big\rceil$.

\medskip

Let us now consider distinguishing arc-labellings of $\OR{K}'$.
Observe that if the restriction of any arc-labelling $\lambda$ of $\OR{K}'$
to the subgraphs $\OR{K}'[X\cup Y_d]$ induced by $X\cup Y_d$, $1\le d\le m-1$, is distinguishing for this subgraph,
then $\lambda$ is a distinguishing arc-labelling of $\OR{K}'$.
Since for every~$d$, $1\le d\le m-1$, we have
$\DminP(\OR{K}'[X\cup Y_d]) \le \DP(K_{m,\big\lceil\frac{n}{m-1}\big\rceil})$,
we get $\DminP(K_{m,n}) \le \DP(K_{m,\big\lceil\frac{n}{m-1}\big\rceil})$.

Finally, from Proposition~\ref{prop:inequalities} and Table~\ref{table:global-edge}, we get
$$n=\chiP(K_{m,n}) \le \chiDminP(K_{m,n}) \le \chiDP(K_{m,n})=n,$$
and thus $\chiDminP(K_{m,n}) = n$,
which completes the proof of the third item.
\end{proof}

Our goal now is thus to determine for which values of $m$ and $n$, $K_{m,n}$
admits a rigid orientation. It should be noticed here that this question has been
considered in~\cite{HJ01,HR05} for mixed graphs 
(that is, graphs having both oriented and non-oriented edges), 
where rigid orientations were referred to as {\it identity orientations}.
In these two papers, the authors were interesting in
determining the smallest
number of edges of a graph which can be oriented so that the resulting
mixed graph has the trivial automorphism group.

Let us first introduce some more notation.
For a given orientation $\OR{K}$ of $K_{m,n}$,
we associate with each vertex $y_i$ from $Y$ the word $w_i=b^1_i\cdots b^m_i$
on the alphabet $\{0,1\}$, defined by $b^j_i=0$ if $x_jy_i$
is an arc, and $b^j_i=1$ otherwise. Figure~\ref{fig:orientation-K35} gives
the word associated with each vertex from $Y$ for the depicted orientation
of $K_{3,5}$.

For every integer $m\ge 2$, we will denote by $\OR{KK}^*_m$ the (unique, up to isomorphism) orientation
of the complete bipartite graph $K_{m,2^m}$ for which all the words associated with the vertices
in $Y$ are distinct, and by $(X^*,Y^*)$ the corresponding bipartition of $V(\OR{KK}^*_m)$.
This orientation will be called the \emph{canonical orientation} of $K_{m,2^m}$.
Observe that every vertex $x$ in $X^*$ has exactly $2^{m-1}$ in-neighbours and 
$2^{m-1}$ out-neighbours in $\OR{KK}^*_m$.

Let $\OR{K}$ be any orientation of $K_{m,n}$, $n\ge m\ge 2$.
We say that two vertices $u$ and $v$ are \emph{full twins} in~$\OR{K}$
if $N^+(u)=N^+(v)$ (which implies $N^-(u)=N^-(v)$).
For example, $y_1$ and $y_2$ are full twins in 
the orientation of $K_{3,5}$ depicted in Figure~\ref{fig:orientation-K35}. 
Observe that the existence of full twins in an orientation 
of a complete bipartite graph ensures that this orientation is not rigid.

\begin{proposition}\label{prop:twins-not-rigid}
Let $\OR{K}$ be any orientation of $K_{m,n}$, $n > m\ge 2$.
If there exist two full twins $u$ and $v$ in~$\OR{K}$,
then $\OR{K}$ is not rigid.
In particular, if $n>2^m$, then $K_{m,n}$ does not admit any rigid orientation.
\end{proposition}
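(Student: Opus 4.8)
The plan is to prove the two assertions in turn, the first by exhibiting an explicit non-trivial automorphism, the second by a counting/pigeonhole argument combined with the first.

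First I would handle the claim that full twins force non-rigidity. Suppose $u$ and $v$ are full twins in $\OR{K}$, so $N^+(u)=N^+(v)$ and hence also $N^-(u)=N^-(v)$. Since $\OR{K}$ is an orientation of $K_{m,n}$ with $n>m\ge 2$, and since two full twins must lie on the same side of the bipartition (they have a common neighbour, namely any vertex of $X$, so they are non-adjacent, hence both in $Y$ or both in $X$; in fact as elements of $Y$ here), I would define $\tau$ to be the transposition exchanging $u$ and $v$ and fixing every other vertex. I then check that $\tau$ is an automorphism: for any arc $ab$ of $\OR{K}$, if $\{a,b\}\cap\{u,v\}=\varnothing$ the arc is untouched; if $a=u$ (the cases $a=v$, $b=u$, $b=v$ being symmetric) then $b\in N^+(u)=N^+(v)$, so $\tau(a)\tau(b)=vb$ is again an arc. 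Thus $\tau\in\Aut(\OR{K})$ and $\tau\neq Id$, so $\OR{K}$ is not rigid.

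Next I would derive the ``in particular'' part. Recall that to each vertex $y_i\in Y$ we associated the binary word $w_i=b^1_i\cdots b^m_i\in\{0,1\}^m$ recording, for each $x_j\in X$, whether $x_jy_i$ is an arc ($b^j_i=0$) or $y_ix_j$ is an arc ($b^j_i=1$). Observe that two vertices $y_i,y_{i'}\in Y$ satisfy $N^+(y_i)=N^+(y_{i'})$ — i.e. are full twins — exactly when $w_i=w_{i'}$, since the word $w_i$ completely encodes the orientation of all $m$ edges incident with $y_i$. Now if $n>2^m$, then by the pigeonhole principle the $n$ words $w_1,\dots,w_n$, all lying in the set $\{0,1\}^m$ of size $2^m$, cannot be pairwise distinct; hence some pair $y_i,y_{i'}$ with $i\neq i'$ are full twins. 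By the first part, every orientation of $K_{m,n}$ is then non-rigid, i.e. $K_{m,n}$ admits no rigid orientation.

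I do not foresee a genuine obstacle here; the only point requiring a little care is the verification that the transposition $\tau$ is arc-preserving, where one must also invoke $N^-(u)=N^-(v)$ (equivalently, that $N^+(u)=N^+(v)$ together with $u,v$ having the same closed neighbourhood in the underlying $K_{m,n}$ forces the in-neighbourhoods to coincide as well), so that arcs pointing into $u$ are mapped to arcs pointing into $v$ and conversely. Everything else is immediate from the definitions already set up in the paper.
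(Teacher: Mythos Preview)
Your proposal is correct and follows essentially the same approach as the paper: the transposition $(u,v)$ is a non-trivial automorphism for the first part, and a pigeonhole argument on the $2^m$ possible orientation patterns of edges incident with a vertex of $Y$ yields the second. Your version is simply more explicit in verifying that the transposition is arc-preserving and in invoking the binary-word encoding.
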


\begin{proof}
It suffices to observe that the transposition $(u,v)$ of $V(\OR{K})$
is an automorphism of~$\OR{K}$.
The second statement follows from the fact 
that the maximum number of distinct orientations of the edges incident with
a vertex in $Y$ is $2^m$, so that every orientation of $K_{m,n}$, $n>2^m$,
necessarily contains a pair of full twin vertices.
\end{proof}

Similarly, we say that $u$ and $v$ are \emph{full antitwins} in~$\OR{K}$
if $N^+(u)=N^-(v)$ (which implies $N^-(u)=N^+(v)$).

Let now $\{x,x'\}$ be a pair of vertices from $X$.
%
We say that $y$ and $y'$ are \emph{$\{x,x'\}$-antitwins} in~$\OR{K}$ if
(i) the set of vertices $\{x,x',y,y'\}$ induces a directed 4-cycle, and
(ii) $y$ and $y'$ agree on every vertex $x''$ from $X\setminus\{x,x'\}$.
In particular,
it means that 
there is no other directed
4-cycle containing both $y$ and $y'$.
Note that any two such $\{x,x'\}$-antitwins have the same out-degree, and thus
belong to the same subset $Y_d$ of $Y$, for some integer $d$, $1\le d\le m-1$
(in particular, a source vertex or a sink vertex cannot have an $\{x,x'\}$-antitwin).
Moreover, observe that if 
$\{y,y'\}$ and $\{y,y''\}$ are both pairs of $\{x,x'\}$-antitwins in~$\OR{K}$, then
$y'$ and $y''$ are necessarily full twins in~$\OR{K}$.

For example, in the orientation of $K_{3,5}$ depicted in Figure~\ref{fig:orientation-K35},
$y_1$ and $y_2$ are full twins,
$y_1$ and $y_4$ (or $y_2$ and $y_4$) are full antitwins,
while $y_3$ and $y_5$ are $\{x_1,x_3\}$-antitwins.
Observe also that the canonical orientation $\OR{KK}^*_{m}$ of $K_{m,2^m}$
contains no pair of full twins,
that every vertex $y$ from $Y^*$ has one full antitwin,
and that every vertex $y$ from $Y^*$ which is neither a source not a sink has
an $\{x,x'\}$-antitwin for exactly $d^-(y)\times d^+(y)$ pairs of vertices $\{x,x'\}$.

\medskip

For $m=2$, we have the following result.

\begin{figure}
\begin{center}
%
%
\begin{tikzpicture}[scale=0.6]   
\node at (0,4.5) {};
\node at (0,-2.5) {};
\node[draw,circle] (X1) at (0,2) {{\footnotesize $x_1$}};
\node[draw,circle] (X2) at (0,0) {{\footnotesize $x_2$}};
\node[draw,circle] (Y1) at (5,3) {{\footnotesize $y_1$}};
\node[draw,circle] (Y2) at (5,1) {{\footnotesize $y_2$}};
\node[draw,circle] (Y3) at (5,-1) {{\footnotesize $y_3$}};
\draw[->,>=latex] (X1) to (Y1);
\draw[->,>=latex] (X2) to (Y1);
\draw[->,>=latex] (X1) to (Y2);
\draw[->,>=latex] (Y2) to (X2);
\draw[->,>=latex] (Y3) to (X1);
\draw[->,>=latex] (Y3) to (X2);
\end{tikzpicture}
\hskip 1cm
\begin{tikzpicture}[scale=0.6]   
\node at (0,4.5) {};
\node at (0,-2.5) {};
\node[draw,circle] (X1) at (0,2) {{\footnotesize $x_1$}};
\node[draw,circle] (X2) at (0,0) {{\footnotesize $x_2$}};
\node[draw,circle] (Y1) at (5,4) {{\footnotesize $y_1$}};
\node[draw,circle] (Y2) at (5,2) {{\footnotesize $y_2$}};
\node[draw,circle] (Y3) at (5,0) {{\footnotesize $y_3$}};
\node[draw,circle] (Y4) at (5,-2) {{\footnotesize $y_4$}};
\draw[->,>=latex] (X1) to (Y1);
\draw[->,>=latex] (X2) to (Y1);
\draw[->,>=latex] (X1) to (Y2);
\draw[->,>=latex] (Y2) to (X2);
\draw[->,>=latex] (Y3) to (X1);
\draw[->,>=latex] (X2) to (Y3);
\draw[->,>=latex] (Y4) to (X1);
\draw[->,>=latex] (Y4) to (X2);
\end{tikzpicture}
\caption{Orientations of $K_{2,3}$ and $K_{2,4}$ for the proof of Lemma~\ref{lem:K2n}.\label{fig:m=2}}
\end{center}
\end{figure}
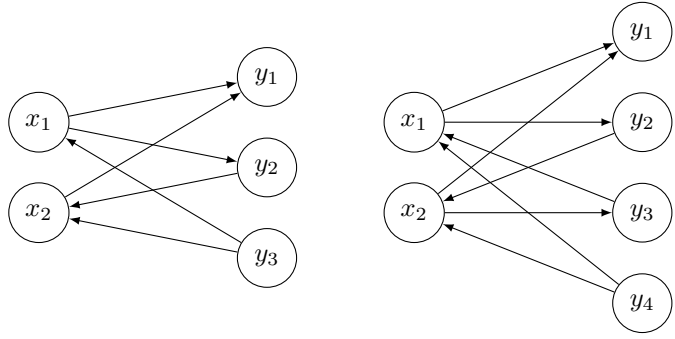

\begin{lemma}\label{lem:K2n}
The complete bipartite graph $K_{2,4}$ does not admit any rigid orientation, while
$K_{2,3}$ admits a rigid orientation.
\end{lemma}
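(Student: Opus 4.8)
The plan is to prove the two assertions separately, both by direct case analysis on the orientations of edges incident with the vertices of $Y$.

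First I would handle $K_{2,3}$ by exhibiting an explicit rigid orientation. With $X=\{x_1,x_2\}$ and $Y=\{y_1,y_2,y_3\}$, each vertex $y_i$ has one of four possible ``local orientations'' of its two incident edges, which I encode by the word $w_i\in\{00,01,10,11\}$ as in the excerpt. The key point is that $01$ and $10$ differ only by the automorphism swapping $x_1$ and $x_2$. So I would pick $w_1=00$ (a source-type vertex, $y_1$ is a sink), $w_2=01$ and $w_3=11$ (so $y_3$ is a source): see the left picture of Figure~\ref{fig:m=2}. Then I argue: any automorphism $\phi$ must fix $x_1,x_2$ individually, because $x_1$ and $x_2$ have different out-degrees (or, equivalently, because $y_2$ disagrees on $x_1$ and $x_2$, so the edge $y_2x_1$ versus $x_2y_2$ forces the two to be distinguished once $y_2$ is fixed); and $\phi$ must fix each $y_i$ because their words (equivalently their out-degrees $0,1,2$) are pairwise distinct. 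Hence $\phi=Id$ and the orientation is rigid. A short, clean way to phrase this is simply: the three vertices of $Y$ have out-degrees $0$, $1$, $2$ respectively, so $\phi$ fixes $Y$ pointwise, and then the unique $\{x_1,x_2\}$-antitwin-free structure (namely $y_2$, whose neighbours disagree on it) forces $\phi$ to fix $X$ pointwise as well.

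Next I would prove that $K_{2,4}$ admits no rigid orientation. Here $m=2$, so each $y_i$ carries a word $w_i\in\{00,01,10,11\}$, and with four vertices in $Y$ there are only these four letters available. I distinguish cases according to how the four words distribute:

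\begin{itemize}
\item If some letter is repeated, the two vertices bearing it are full twins (if the repeated letter is $00$, $01$, $10$, or $11$ they have the same out-neighbourhood), so by Proposition~\ref{prop:twins-not-rigid} the orientation is not rigid.
\item If the two words $01$ and $10$ both occur, the corresponding vertices $y,y'$ form a pair of $\{x_1,x_2\}$-antitwins, i.e. $\{x_1,x_2,y,y'\}$ induces a directed $4$-cycle, and the transposition $(y,y')$ composed with the swap $(x_1,x_2)$ is a non-trivial automorphism; again not rigid.
\item The only remaining possibility is that all four words are distinct but avoiding one of $01,10$ — impossible, since avoiding one of them while keeping all four distinct is a contradiction (four distinct words from a size-four alphabet force all of $00,01,10,11$ to appear, hence both $01$ and $10$ appear, which is the previous case).
\end{itemize}

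So every orientation of $K_{2,4}$ is non-rigid. (The right picture of Figure~\ref{fig:m=2} illustrates the generic non-rigid situation with $w=\{00,01,10,11\}$, where $(x_1,x_2)(y_2,y_3)$ is a non-trivial automorphism.)

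The main obstacle is really bookkeeping rather than depth: one must make sure the case split on the multiset of words $\{w_1,w_2,w_3,w_4\}$ is exhaustive and that in each case the exhibited permutation is genuinely an arc-preserving automorphism. The cleanest formulation avoids an unwieldy enumeration by the pigeonhole observation that with four vertices in $Y$ and only four available words, either two words coincide (full twins) or all four occur (so $01$ and $10$ both occur, giving $\{x_1,x_2\}$-antitwins); in either case Proposition~\ref{prop:twins-not-rigid} (or the explicit transposition argument for antitwins) finishes it. For $K_{2,3}$, the only subtlety is to check that the specific choice $\{00,01,11\}$ contains neither a repeated word nor the antitwin pair $\{01,10\}$ — which is immediate — and that distinct out-degrees do suffice to pin down all of $Y$, after which $X$ is pinned down by the disagreeing vertex $y_2$.
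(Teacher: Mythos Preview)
Your proposal is correct and follows essentially the same approach as the paper: for $K_{2,4}$ you reduce (via Proposition~\ref{prop:twins-not-rigid}) to the unique twin-free orientation and exhibit the automorphism $(x_1,x_2)(y_2,y_3)$, and for $K_{2,3}$ you use exactly the orientation of Figure~\ref{fig:m=2}. Your write-up is simply more explicit than the paper's (which leaves the rigidity check for $K_{2,3}$ to the reader), and your pigeonhole phrasing of the $K_{2,4}$ case split is a clean way to organise what the paper states in one line.
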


\begin{proof}
Let $\OR{K}$ be any orientation of $K_{2,4}$. By Proposition~\ref{prop:twins-not-rigid},
we can assume that $\OR{K}$ has no full twins, that is, $\OR{K}$ is the orientation
of $K_{2,4}$ depicted in Figure~\ref{fig:m=2}. It is then easy to check that the
permutation $(x_1,x_2)(y_2,y_3)$ is an automorphism of $\OR{K}$.
A rigid orientation of $K_{2,3}$ is depicted in Figure~\ref{fig:m=2}.
\end{proof}

\medskip

We now want to characterize the values of $m$ and $n$, with $n > m\ge 2$, for which $K_{m,n}$ admits
a rigid orientation. For every such graph, by Proposition~\ref{prop:rigid-orientation},
we will then have 
$\Dmin(K_{m,n})=\DminP(K_{m,n})=1$, $\chiDmin(K_{m,n})=2$ and $\chiDminP(K_{m,n})=n$.

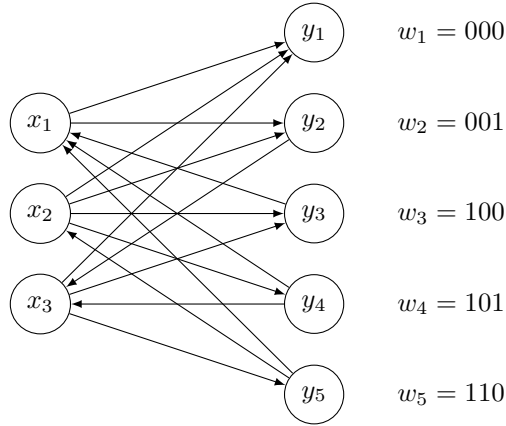
\begin{figure}
\begin{center}
\begin{tikzpicture}[scale=0.6]   
\node[draw,circle] (X1) at (0,2) {{\footnotesize $x_1$}};
\node[draw,circle] (X2) at (0,0) {{\footnotesize $x_2$}};
\node[draw,circle] (X3) at (0,-2) {{\footnotesize $x_3$}};
\node[draw,circle] (Y1) at (6,4) {{\footnotesize $y_1$}};
\node[draw,circle] (Y2) at (6,2) {{\footnotesize $y_2$}};
\node[draw,circle] (Y3) at (6,0) {{\footnotesize $y_3$}};
\node[draw,circle] (Y4) at (6,-2) {{\footnotesize $y_4$}};
\node[draw,circle] (Y5) at (6,-4) {{\footnotesize $y_5$}};
\node at (9,4) {{\footnotesize $w_1=000$}};
\node at (9,2) {{\footnotesize $w_2=001$}};
\node at (9,0) {{\footnotesize $w_3=100$}};
\node at (9,-2) {{\footnotesize $w_4=101$}};
\node at (9,-4) {{\footnotesize $w_5=110$}};
\draw[->,>=latex] (X1) to (Y1);
\draw[->,>=latex] (X2) to (Y1);
\draw[->,>=latex] (X3) to (Y1);
\draw[->,>=latex] (X1) to (Y2);
\draw[->,>=latex] (X2) to (Y2);
\draw[->,>=latex] (Y2) to (X3);
\draw[->,>=latex] (Y3) to (X1);
\draw[->,>=latex] (X2) to (Y3);
\draw[->,>=latex] (X3) to (Y3);
\draw[->,>=latex] (Y4) to (X1);
\draw[->,>=latex] (X2) to (Y4);
\draw[->,>=latex] (Y4) to (X3);
\draw[->,>=latex] (Y5) to (X1);
\draw[->,>=latex] (Y5) to (X2);
\draw[->,>=latex] (X3) to (Y5);
\end{tikzpicture}
\caption{A twin-free orientation $\protect\OR{K}$ of the complete bipartite graph $K_{3,5}$.\label{fig:K}}
\end{center}
\end{figure}

Each orientation $\OR{K}$ of a complete bipartite graph $K_{m,n}$, $n>m\ge 2$,
having no full twins
is a subdigraph of the canonical orientation $\OR{KK}^*_m$, obtained by deleting
some vertices of $Y^*$. 
In the following, such an orientation $\OR{K}$ will be described by the 
set $W(\OR{K})$ of words associated with these deleted vertices.
Note that the cardinality of the set $W(\OR{K})$ is precisely $2^m-n$.
Let $W(\OR{K}) = \{w_1,\dots,w_{2^m-n}\}$.
Note that for every vertex $x_i$ in $X$, $1\le i\le m$, we have
\begin{equation}\label{eq:in-degree}
d^-_{\OR{K}}(x_i)=2^{m-1} - \sum_{j=1}^{2^m-n} w_j^i.
\end{equation}
For example, $W(\OR{KK}^*_m)=\varnothing$,
while the set of words describing the orientation $\OR{K}$ of
$K_{3,5}$ depicted in Figure~\ref{fig:K} is
$W(\OR{K})=\{010,011,111\}$, and the in-degree of $x_3$ is $2^2-(0+1+1)=2$.

Since every orientation $\OR{K}$ of $K_{m,n}$, $3\le m < n\le 2^m$, having no full twins
is a subdigraph of $\OR{KK}^*_m$, we get that for every two distinct vertices
$y$ and $y'$ in~$Y$, there exists a vertex $x$ in~$X$ such that
$y$ and $y'$ disagree on~$x$.
Therefore, every non-trivial automorphism of $\OR{K}$ must act on~$X$.

\begin{observation}\label{obs:act-on-X}
For every orientation $\OR{K}$ of $K_{m,n}$, $3\le m < n\le 2^m$,
having no full twins, the only automorphism
of~$\OR{K}$ that fixes every vertex $x$ of~$X$ is the identity.
Therefore, if all vertices in $X$ have distinct in-degrees
(or, equivalently, distinct out-degrees), then $\OR{K}$ is rigid.
\end{observation}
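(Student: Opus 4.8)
The plan is to exploit the embedding recorded in the paragraph just before the statement: since $\OR{K}$ is an orientation of $K_{m,n}$ with $n\le 2^m$ and no full twins, it is (isomorphic to) a subdigraph of the canonical orientation $\OR{KK}^*_m$ obtained by deleting $2^m-n$ vertices of $Y^*$. The consequence I want is that the words $w(y)=b^1\cdots b^m$ attached to the vertices $y\in Y$, where $b^j=0$ if $x_jy\in A(\OR{K})$ and $b^j=1$ otherwise, are \emph{pairwise distinct}. That is the only structural input the argument needs.

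For the first assertion, I would take $\phi\in\Aut(\OR{K})$ fixing every vertex of $X$ and show $\phi=Id$. Because $m<n$, the two parts of the bipartition have different sizes, so $\phi$ maps $X$ onto $X$ and $Y$ onto $Y$; in particular $\phi$ restricts to a permutation of $Y$. Now fix $y\in Y$ and $j\in\{1,\dots,m\}$: since $\phi$ is an automorphism and $\phi(x_j)=x_j$, the pair $x_jy$ is an arc of $\OR{K}$ if and only if $x_j\phi(y)=\phi(x_j)\phi(y)$ is an arc of $\OR{K}$, so the digit $b^j$ agrees in $w(y)$ and in $w(\phi(y))$. As $j$ is arbitrary, $w(\phi(y))=w(y)$, and distinctness of the words forces $\phi(y)=y$. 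Hence $\phi$ is also the identity on $Y$, i.e. $\phi=Id$.

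For the second (``therefore'') assertion, suppose the vertices of $X$ have pairwise distinct in-degrees; since $d^+(x)+d^-(x)=n$ for every $x\in X$, this is equivalent to their having pairwise distinct out-degrees. Given any $\phi\in\Aut(\OR{K})$, the inequality $m<n$ again forces $\phi$ to preserve the bipartition, so $\phi|_X$ is a permutation of $X$; since automorphisms preserve in-degrees and no two vertices of $X$ share an in-degree, $\phi$ fixes every vertex of $X$, and the first assertion gives $\phi=Id$. Thus $\OR{K}$ is rigid.

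I do not foresee a real obstacle: the whole proof is just bookkeeping of the arc relations between $X$ and $Y$, once the distinctness of the words $w(y)$ is in hand. The one point deserving a line of care is the repeated appeal to $m<n$ to guarantee that every automorphism respects the bipartition; dropping that hypothesis would force an extra (routine) cardinality/degree argument to exclude automorphisms exchanging the two sides.
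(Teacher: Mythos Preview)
Your proof is correct and follows essentially the same line as the paper, which records the key fact in the paragraph just before the observation: since the words $w(y)$ are pairwise distinct, any two vertices of $Y$ disagree on some vertex of $X$, so every non-trivial automorphism must move some vertex of $X$. You have simply spelled out the details (in particular the use of $m<n$ to force preservation of the bipartition and the in-degree argument for the second assertion) more explicitly than the paper does.
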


Thanks to this observation, we can solve the cases $m=3$ and $m=4$.

\begin{lemma}\label{lem:K3n-K4n}
For every integer $n$, $4\le n\le 6$, $K_{3,n}$ admits a rigid orientation.
For every integer $n$, $5\le n\le 13$, $K_{4,n}$ admits a rigid orientation.
\end{lemma}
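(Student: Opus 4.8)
The plan is to exhibit explicit rigid orientations in each case by choosing, for $m=3$, a suitable set $W(\OR{K})$ of $2^3-n=8-n$ words from $\{0,1\}^3$ describing which vertices of $Y^*$ are deleted from the canonical orientation $\OR{KK}^*_3$, and similarly for $m=4$ a set of $2^4-n=16-n$ words. By Observation~\ref{obs:act-on-X}, any such twin-free orientation (it is automatically twin-free, being a subdigraph of the canonical orientation) is rigid as soon as all vertices of $X$ have pairwise distinct in-degrees; by Equation~\eqref{eq:in-degree}, the in-degree of $x_i$ is $2^{m-1}-\sum_{w\in W(\OR{K})}w^i$, so it suffices to choose $W(\OR{K})$ so that the $m$ coordinate-sums $\sum_{w\in W(\OR{K})}w^i$, $1\le i\le m$, are pairwise distinct. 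Thus the whole lemma reduces to a small combinatorial packing problem: for each relevant value of $k:=2^m-n$, find a $k$-subset of $\{0,1\}^m$ whose $m$ column sums are all different.

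For $m=3$ one needs such a set for every $k\in\{2,3,4\}$ (corresponding to $n=6,5,4$). For instance $k=2$: take $\{100,110\}$, giving column sums $(2,1,0)$; $k=3$: take $\{010,011,111\}$ (the very set already displayed in Figure~\ref{fig:K}), giving column sums $(1,2,3)$ in columns $(1,2,3)$ — wait, that needs checking, but some such triple with distinct sums exists; $k=4$: take $\{100,110,111,011\}$ with column sums $(3,2,... )$, again choosing so the three sums differ. I would simply list one valid $W(\OR{K})$ for each of $k=2,3,4$ and verify the three column sums are distinct, concluding rigidity via Observation~\ref{obs:act-on-X}. For $m=4$ one needs a valid $k$-subset of $\{0,1\}^4$ for every $k$ with $16-n\in\{3,\dots,11\}$, i.e. $k\in\{3,4,\dots,11\}$; here there is more room (four column sums to keep distinct, each between $0$ and $k$), and I would give an explicit family — e.g. start from a base set with distinct column sums and, when enlarging $k$ by one, add a $0/1$ vector that preserves distinctness (always possible since for each target increment pattern on four coordinates there are many choices, and only finitely many must be avoided). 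The bookkeeping is routine: one can even organize it as a short table listing, for each $n$ in range, the set $W(\OR{K})$ and the resulting in-degree vector $(d^-(x_1),\dots,d^-(x_4))$, checking the four entries are pairwise distinct.

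The main obstacle — really the only thing requiring care — is the boundary case where $k=2^m-n$ is large enough that the column sums are forced to collide. For $m=4$, $k=11$ means $n=5$, and the eleven deleted words have column sums totalling $\sum_i(\sum_w w^i)$ equal to the total number of $1$'s among the $11$ words; one must confirm that four distinct values in $\{0,\dots,11\}$ can indeed be realized simultaneously as column sums of an $11$-subset, and likewise check nothing goes wrong at the other extreme $k=3$ (which is fine — e.g. $\{0001,0011,0111\}$ has column sums $(0,1,2,3)$, instantly giving $n=13$). I expect all these to work comfortably; the potential subtlety is only that the column-sum vector must be achievable by an \emph{actual} $0/1$-matrix with $k$ rows, so I would double-check feasibility at $k=10$ and $k=11$ explicitly rather than waving hands. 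Once every required $k$ has an explicit witness set, Observation~\ref{obs:act-on-X} finishes the proof immediately, and no automorphism-chasing beyond that observation is needed.
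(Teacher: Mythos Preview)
Your proposal is correct and takes essentially the same approach as the paper: reduce rigidity to the condition that all column sums of $W(\OR{K})$ are distinct, then invoke Observation~\ref{obs:act-on-X}. The only difference is in the bookkeeping for $m=4$: rather than produce nine separate witnesses or argue incrementally, the paper starts from the base set $\{1110,1100,1000\}$ (column sums $(3,2,1,0)$, giving $n=13$) and, to reach smaller $n$, adds pairs of full antitwins distinct from $\{0000,1111\}$---each such pair contributes $(1,1,1,1)$ to the column-sum vector, so distinctness is automatically preserved---together with the word $0000$ when one more word is needed. This handles all $k\in\{3,\dots,11\}$ uniformly and removes the need for the explicit boundary checks you flag at $k=10,11$.
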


\begin{proof}
Consider the orientations $\OR{K}$, $\OR{K}'$ and $\OR{K}''$ of $K_{3,4}$, $K_{3,5}$ and $K_{3,6}$,
respectively,
given by $W(\OR{K})=\{110,010,000,111\}$,
$W(\OR{K}')=\{110,010,000\}$ and
$W(\OR{K}'')=\{110,010\}$.
In each case, according to Equation~(\ref{eq:in-degree}),
all vertices in $X$ have distinct in-degrees, which implies,
by Observation~\ref{obs:act-on-X}, that all these orientations are rigid.

The same argument applies for the orientation $\OR{K}$ of $K_{4,13}$
given by $W(\OR{K})=\{1110,1100,1000\}$, and thus 
$\OR{K}$ is a rigid orientation of $K_{4,13}$.
Now, for each $n$, $5\le n\le 12$, consider the set $W_n$ obtained
from the set $\{1110,1100,1000\}$ by adding $\left\lfloor\frac{13-n}{2}\right\rfloor$
pairs of full antitwins distinct from the pair $\{0000,1111\}$, together with the word 0000 if
$n$ is even. 
This can always be done since the set $\{0,1\}^4\setminus\{1110,1100,1000\}$ contains 
five pairs of full antitwins.
Again, in each case, we get that all vertices in $X$ have distinct in-degrees
so that, by Observation~\ref{obs:act-on-X}, all the corresponding orientations are rigid.
\end{proof}

\begin{figure}
\begin{center}
\begin{tabular}{ccccccc|l}
\footnotesize{1} & \footnotesize{2} & \footnotesize{3} & \footnotesize{4} & 
\footnotesize{5} & \footnotesize{6} & \footnotesize{7} & $j$ \\
\hline
0 & 0 & 1 & 1 & 1 & 1 & 1 \\
0 & 1 & 0 & 0 & 1 & 1 & 1 \\
0 & 0 & 0 & 1 & 0 & 0 & 1 & The set $W_7$\\
0 & 0 & 0 & 0 & 0 & 1 & 0 \\
\hline
\hline
0 & 1 & 1 & 2 & 2 & 3 & 3 & The sums $S^j(W_7)$\\
\end{tabular}

\vskip 1cm

\begin{tabular}{cccccccc|l}
\footnotesize{1} & \footnotesize{2} & \footnotesize{3} & \footnotesize{4} & 
\footnotesize{5} & \footnotesize{6} & \footnotesize{7} & \footnotesize{8} & $j$ \\
\hline
0 & 0 & 1 & 1 & 1 & 1 & 1 & 1\\
0 & 1 & 0 & 0 & 1 & 1 & 1 & 1\\
0 & 0 & 0 & 1 & 0 & 0 & 1 & 1 & The set $W_8$\\
0 & 0 & 0 & 0 & 0 & 1 & 0 & 1\\
\hline
\hline
0 & 1 & 1 & 2 & 2 & 3 & 3 & 4 & The sums $S^j(W_8)$\\
\end{tabular}
\caption{The sets $W_7$ and $W_8$ for the proof of Lemma~\ref{lem:m-over-2-to-2m-m}.\label{fig:Wstar}}
\end{center}
\end{figure}

We are now able to prove the following result.

\begin{lemma}\label{lem:m-over-2-to-2m-m}
For every two integers $m$ and $k$, $m\ge 5$, $\left\lceil\frac{m}{2}\right\rceil\le k < 2^m-m$,
$K_{m,2^m-k}$ admits a rigid orientation.
\end{lemma}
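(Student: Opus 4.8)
The plan is to reduce the existence of a rigid orientation of $K_{m,2^m-k}$ to a combinatorial statement about families of binary words, and then to build the required families by a short case analysis on~$k$. By Proposition~\ref{prop:twins-not-rigid} it suffices to produce, for $k=2^m-n$ with $\lceil m/2\rceil\le k\le 2^m-m-1$, a twin-free orientation $\OR{K}$ of $K_{m,n}$ that is rigid; as explained before Observation~\ref{obs:act-on-X}, such an orientation is obtained from the canonical orientation $\OR{KK}^*_m$ by deleting the $k$ vertices of $Y^*$ whose associated words form a set $W\subseteq\{0,1\}^m$ with $|W|=k$. The first step is to upgrade Observation~\ref{obs:act-on-X} into a characterisation: since $m\neq n$, every automorphism of $\OR{K}$ permutes~$X$, is determined by the induced permutation $\sigma$ of the coordinates $\{1,\dots,m\}$ (it must send a word $y\in Y$ to the word obtained from $y$ by applying $\sigma$ to its coordinates), and such a $\sigma$ arises from an automorphism exactly when it stabilises $W$ setwise. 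Hence $\OR{K}$ is rigid if and only if the only coordinate permutation fixing $W$ is the identity. Observation~\ref{obs:act-on-X} is the case where all column sums $S^j(W)$ are pairwise distinct; I shall also use the complementary sufficient condition, proved by the same bookkeeping on how $\sigma$ acts on the rows (the words of~$W$) and the columns (the coordinates): \emph{$W$ is rigid whenever {\rm (a)} the $m$ columns of its $k\times m$ incidence matrix are pairwise distinct as vectors of $\{0,1\}^k$, and {\rm (b)} the $k$ words of $W$ have pairwise distinct Hamming weights} --- indeed (b) forces $\sigma$ to fix every word, and then (a) forces $\sigma$ to fix every coordinate.

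For $k\ge m-1$ I would argue directly via Observation~\ref{obs:act-on-X}. Start from the staircase $W_0=\{1^{m-1}0,\,1^{m-2}0^2,\dots,1\,0^{m-1}\}$, whose $m-1$ column sums $m-1,m-2,\dots,1,0$ are pairwise distinct. Adjoining the word $0^m$ leaves the column sums unchanged, while adjoining a pair of full antitwins $\{u,\overline u\}$, with $u$ and $\overline u$ distinct from all words already chosen and from their complements, raises every column sum by~$1$; in both cases the column sums stay pairwise distinct. A routine count of the available antitwin pairs shows that, starting from $W_0$ or from $W_0\cup\{0^m\}$, one reaches every cardinality in $\{m-1,m,\dots,2^m-m\}$, hence every admissible $k\le 2^m-m-1$, and by Observation~\ref{obs:act-on-X} the corresponding orientation is rigid.

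For $\lceil m/2\rceil\le k\le m-2$ the column sums cannot all be distinct (only $k+1<m$ values are available), so I would realise conditions (a) and (b) instead: choose $m$ pairwise-distinct vectors of $\{0,1\}^k$ whose $k$ coordinate-frequencies are pairwise distinct, and take the words of $W$ to be the rows of the resulting matrix. This needs only $m\le 2^k$ for (a) and $k\le m+1$ for (b), and for $m\ge 5$ with $k\ge\lceil m/2\rceil$ one has $2^k\ge 2^{\lceil m/2\rceil}>m$ and $k\le m-2<m+1$; this is exactly where the hypothesis $m\ge 5$ is used, since for $m=4$, $k=2$ one would have $2^k=m$ and the coordinate-frequencies are forced to collide. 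I would first settle the tight case $k=\lceil m/2\rceil$ by an explicit construction --- the instances $m=7,8$ being Figure~\ref{fig:Wstar} --- and propagate it to all $m\ge 5$ by an induction on~$m$: a step $m\mapsto m+1$ adjoins one constant column $1^k$ to the matrix, and additionally one new word of a fresh weight together with one new column whenever $\lceil m/2\rceil$ increases, which preserves (a) and (b). Finally, for $\lceil m/2\rceil<k\le m-2$ one adjoins to the set obtained above $k-\lceil m/2\rceil$ further words of the form $1^d0^{m-d}$ with pairwise fresh weights~$d$: adding rows never merges two previously distinct columns, so (a) survives, and the fresh weights keep (b).

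I expect the delicate point to be the explicit construction realising (a) and (b) at the extreme value $k=\lceil m/2\rceil$: there the slack is minimal --- feasibility rests precisely on $2^{\lceil m/2\rceil}>m$ --- so the family has to be exhibited essentially by hand, as in Figure~\ref{fig:Wstar}, and then carried to larger~$m$ by the inductive step, rather than produced by a soft counting argument. Everything else --- the reduction, the proof of the (a)+(b) criterion, and the case $k\ge m-1$ --- should be routine. Together with Proposition~\ref{prop:rigid-orientation}, the resulting rigid orientations then also give $\Dmin(K_{m,n})=\DminP(K_{m,n})=1$, $\chiDmin(K_{m,n})=2$ and $\chiDminP(K_{m,n})=n$ for every such pair $(m,n)$, which is the real payoff of the lemma.
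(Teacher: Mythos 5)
Your reduction is correct and is a genuinely different (and arguably cleaner) rigidity test than the paper's. Since a twin-free orientation is obtained from $\OR{KK}^*_m$ by deleting the vertices whose words form $W$, and since for $m\neq n$ every automorphism is induced by a coordinate permutation $\sigma$ with $\sigma(W)=W$, rigidity is indeed equivalent to $W$ having trivial setwise stabiliser, and your criterion --- (a) pairwise distinct columns and (b) pairwise distinct row weights --- is a valid sufficient condition. The paper instead arranges the column sums to collide only in prescribed pairs $(x_{2i},x_{2i+1})$ and kills each candidate transposition by exhibiting an $\{x_{2i},x_{2i+1}\}$-antitwin pair with exactly one member deleted; your test avoids antitwins entirely and makes checking any concrete $W$ mechanical. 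Your treatment of $k\ge m-1$ (staircase $1^{m-1}0,\dots,10^{m-1}$, plus $0^m$ for parity, plus free full-antitwin pairs, all of which keep the $m$ column sums pairwise distinct, so Observation~\ref{obs:act-on-X} applies) is complete and correct, and the antitwin-pair count $2^{m-1}-m$ does reach every cardinality up to $2^m-m-1$. This part is simpler than the paper's uniform construction, which grows a single hard base set by antitwin pairs for all $k$.

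The gap is exactly where you flag it: for $\lceil m/2\rceil\le k\le m-2$ you never actually exhibit, for general $m\ge 5$, a $k\times m$ matrix satisfying (a) and (b). The inequalities $m\le 2^k$ and $k\le m+1$ show there is no counting obstruction, but they do not produce the family, and your inductive step is under-specified: appending the constant column $1^k$ preserves (a) only if $1^k$ is not already a column, an invariant you neither state nor maintain (after the step $m=7\mapsto 8$ the all-ones column $1111$ \emph{is} present, so whether $1^5$ can be appended later depends on the unspecified entries of the row added at $m=9$), and in the step where $\lceil m/2\rceil$ increases, the new row's entries on the old columns and the new column itself are not prescribed. The gap is fillable with no new idea: the paper's explicit set $W_m$ (Figure~\ref{fig:Wstar}) already satisfies both of your conditions --- its row weights are $m-2,\,m-3,\,m-5,\,m-7,\dots$ followed by $1$ ($m$ odd) or $2$ ($m$ even), pairwise distinct, and its columns are pairwise distinct even though their sums collide in pairs (columns $2i$ and $2i+1$ differ in row $i$, which is precisely the paper's antitwin observation restated) --- so you could take it as your base case, run your (a)+(b) test, and then pad with words $1^d0^{m-d}$ of fresh weights $d$ to reach every $k\le m-2$ as you describe. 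As written, however, the construction at the tight value $k=\lceil m/2\rceil$ is asserted rather than proved for general $m$, and that is the heart of the lemma.
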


\begin{proof}
For every set of words $W=\{w_1,\dots,w_k\}\subseteq \{0,1\}^m$, we denote
by $S^i(W)$, $1\le i\le m$, the sum of the $i$-th symbols of the words in $W$,
that is, $S^i(W) = \sum_{\ell=1}^k w_\ell^i$.

We first consider the case $k=\left\lceil\frac{m}{2}\right\rceil$.
We will construct a particular set $W_m=\{w_1,\dots,w_k\}$ of $k$ words such
that the orientation $\OR{K}$ of $K_{m,2^m-\left\lceil\frac{m}{2}\right\rceil}$ 
with $W(\OR{K})=W_m$ is rigid.
These $k$ words are defined as follows (see Figure~\ref{fig:Wstar} for the 
sample sets $W_7$ and $W_8$):

\begin{itemize}
\item $w_1 = 001^{m-2}$,
\item for every $i$, $2\le i\le \left\lceil\frac{m}{2}\right\rceil - 1$,
$w_i = 0^{2i-3}1001^{m-2i}$,
\item $w_{\left\lceil\frac{m}{2}\right\rceil} = 0^{m-2}10$ if
$m$ is odd, or $w_{\left\lceil\frac{m}{2}\right\rceil} = 0^{m-3}101$ if $m$ is even.
\end{itemize}

Observe that $S^1(W_m)=0$,
$S^{2i}(W_m)=S^{2i+1}(W_m)=i$ for every $i$, $1\le i\le \left\lfloor\frac{m-1}{2}\right\rfloor$,
and $S^m(W_m)=\frac{m}{2}$ if $m$ is even.
By Equation~(\ref{eq:in-degree}), this implies that the set of pairs of vertices in~$X$
having the same in-degree is
$\left\{(x_{2i},x_{2i+1}):\ 1\le i\le \left\lfloor\frac{m-1}{2}\right\rfloor\right\}$. 
Therefore, the restriction to $X$ of every non-trivial automorphism of $\OR{K}$ must be the product
of at least one transposition corresponding to these pairs of vertices.

We claim that such a situation cannot occur, so that $\OR{K}$ does not admit
any non-trivial automorphism.
%
Indeed, we cannot exchange $x_{2i}$ and $x_{2i+1}$, $1\le i\le  \left\lfloor\frac{m-1}{2}\right\rfloor$,
since the pair $\{y,y'\}$ of $\{x_{2i},x_{2i+1}\}$-antitwins,
where $y$ is the vertex associated with the word
$w_{i}$, 
is such that $y\in W_m$, and thus $y$ has been deleted, 
while $y'\notin W_m$, and thus $y'$ still belongs to $\OR{K}$
(hence, $x_{2i}$ and $x_{2i+1}$ disagree on $y'$).
The orientation $\OR{K}$ of $K_{m,2^m-k}$ given by $W(\OR{K})=W_m$ is thus rigid.

\medskip

Let us now consider the case $\left\lceil\frac{m}{2}\right\rceil < k < 2^m - m$,
and let $p = k - \left\lceil\frac{m}{2}\right\rceil$.
We then construct a set of words $W_k$, by adding to the set
$W_m$ previously defined $\left\lfloor\frac{p}{2}\right\rfloor$ pairs
of full antitwins distinct from the pair $\{0^m,1^m\}$, together with the word $0^m$
if $p$ is odd.
Note that this is always possible since 
$W_m$ contains $\left\lceil\frac{m}{2}\right\rceil$
words, and thus $\frac{1}{2}\left(2^m - 2\left\lceil\frac{m}{2}\right\rceil\right)$ 
pairs of antitwins are available while we need at most
$\frac{1}{2}\left(2^m - m - \left\lceil\frac{m}{2}\right\rceil\right)$ such pairs.
Doing so, the sums $S^j(W_k)$, $1\le j\le m$, satisfy the same
properties as in the previous case so that, again,
the set of pairs of vertices in $X$
having the same in-degree is
$\left\{(x_{2i},x_{2i+1}):\ 1\le i\le \left\lfloor\frac{m-1}{2}\right\rfloor\right\}$. 
The same argument as before then allows us to conclude that the orientation
$\OR{K}$ of $K_{m,2^m-k}$ given by $W(\OR{K})=W_k$ is also rigid.
This completes the proof.
\end{proof}

\medskip

In order to complete the study of complete bipartite graphs, we still
have to consider the graphs $K_{m,2^m-k}$ for every $k$, $0\le k<m$.
The following result is useful for small values of $k$.

\begin{lemma}\label{lem:2-colonnes-identiques}
Let $\OR{K}$ be any orientation of $K_{m,n}$, $3\le m < n \le 2^m$, not containing full twins.
If there exists two indices $i$ and $i'$, $1\le i<i'\le m$, such that
$w_j^i = w_j^{i'}$ for every word $w_j\in W(\OR{K})$, then $\OR{K}$ is not rigid.
In particular, if $m\ge 3$ and $0\le k<\log_2(m)$, 
then $K_{m,2^m-k}$ does not admit any rigid orientation.
\end{lemma}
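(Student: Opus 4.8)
\textit{Proof proposal.} The plan is to exhibit, under the stated hypothesis, an explicit non-trivial automorphism of $\OR{K}$ obtained by ``transposing the two columns $i$ and $i'$''. Recall from the discussion preceding the lemma that, since $\OR{K}$ has no full twins and $m<n\le 2^m$, the oriented graph $\OR{K}$ is a sub-digraph of the canonical orientation $\OR{KK}^*_m$ of $K_{m,2^m}$: concretely, $\OR{K}$ is obtained from $\OR{KK}^*_m$ by deleting exactly the $2^m-n$ vertices of $Y^*$ whose associated words form the set $W(\OR{K})$. It therefore suffices to find an automorphism of $\OR{KK}^*_m$ that fixes (as a set) the collection of deleted vertices and acts non-trivially on the remaining ones.

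First I would check that the transposition $(x_i,x_{i'})$ of $X^*$ extends to an automorphism $\psi$ of $\OR{KK}^*_m$. The only candidate sends the vertex of $Y^*$ with word $w$ to the vertex $y'$ whose word $w'$ is obtained from $w$ by exchanging its $i$-th and $i'$-th symbols; such a vertex exists and is unique in $\OR{KK}^*_m$ since all $2^m$ binary words of length $m$ occur in $Y^*$. A short case check on the three types of coordinates (those in $\{1,\dots,m\}\setminus\{i,i'\}$, the coordinate $i$, and the coordinate $i'$), using that $w^j=0$ encodes the arc $x_j\to y$, shows that $\psi$ preserves every arc; as $\psi$ is visibly an involution on $V(\OR{KK}^*_m)$, it is an automorphism. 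Now the hypothesis $w^i=w^{i'}$ for every $w\in W(\OR{K})$ says exactly that exchanging the $i$-th and $i'$-th symbols leaves each such word unchanged, so $\psi$ fixes \emph{pointwise} the set of deleted vertices, hence restricts to an automorphism of $\OR{K}$. This restriction is non-trivial because $\psi(x_i)=x_{i'}\neq x_i$ and no vertex of $X$ is ever deleted; thus $\OR{K}$ is not rigid, which is the first assertion.

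For the ``in particular'' statement, fix $m\ge 3$ and $k$ with $0\le k<\log_2 m$, put $n=2^m-k$ (so that $m<n\le 2^m$), and let $\OR{K}$ be an arbitrary orientation of $K_{m,n}$. If $\OR{K}$ contains a pair of full twins it is not rigid by Proposition~\ref{prop:twins-not-rigid}; otherwise $|W(\OR{K})|=2^m-n=k$, and I would read the $m$ coordinates of the words in $W(\OR{K})$ as $m$ vectors in $\{0,1\}^k$. Since $k<\log_2 m$ forces $2^k<m$, the pigeonhole principle yields two indices $i<i'$ with identical coordinate vectors, i.e. $w_j^i=w_j^{i'}$ for every $w_j\in W(\OR{K})$; the first part then shows $\OR{K}$ is not rigid. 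As $\OR{K}$ was arbitrary, $K_{m,2^m-k}$ admits no rigid orientation.

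I do not anticipate a real obstacle: the whole argument is elementary once $\OR{K}$ has been identified with a sub-digraph of $\OR{KK}^*_m$. The only step requiring genuine care is the verification that $\psi$ is arc-preserving, i.e. that the extension of $(x_i,x_{i'})$ to $Y^*$ is well defined and compatible with the orientation convention; this is precisely where the hypotheses $m<n$ and ``$\OR{K}$ has no full twins'' (hence ``$\OR{K}\subseteq\OR{KK}^*_m$'') are used.
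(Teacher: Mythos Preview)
Your proposal is correct and follows essentially the same approach as the paper: the automorphism you call $\psi$ --- the extension of $(x_i,x_{i'})$ to $Y^*$ by swapping the $i$-th and $i'$-th symbols of each word --- is exactly the permutation $(x_i,x_{i'})(y_{j_1},y'_{j_1})\cdots(y_{j_p},y'_{j_p})$ the paper writes down, since the $\{x_i,x_{i'}\}$-antitwin pairs are precisely the pairs of words differing only in positions $i$ and $i'$. Your treatment of the ``in particular'' clause is in fact slightly more careful than the paper's, as you explicitly dispose of the full-twin case via Proposition~\ref{prop:twins-not-rigid} before applying the pigeonhole argument.
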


\begin{proof}
Observe that there are exactly $p=2^{m-2}$ pairs of $\{x_i,x_{i'}\}$-antitwins in $Y$ in
the canonical orientation $\OR{KK}^*_m$ of $K_{m,n}$.
Let us denote the set
of these pairs by $A_{x_i,x_{i'}}=\{(y_{j_1},y'_{j_1}),\dots,(y_{j_p},y'_{j_p})\}$.
Since all vertices corresponding to words in $W(\OR{K})$ agree on $x_i$ and $x_{i'}$,
all vertices belonging to a pair of $A_{x_i,x_{i'}}$ belong to $\OR{K}$.
Therefore, the permutation $(x_i,x_{i'})(y_{j_1},y'_{j_1})\cdots(y_{j_p},y'_{j_p})$ 
is a non-trivial automorphism of $\OR{K}$, and thus $\OR{K}$ is not rigid.

Finally, if $m\ge 3$ and $0\le k<\log_2(m)$, 
then, for every orientation  $\OR{K}'$ of $K_{m,2^m-k}$,
there necessarily exist two indices $i$ and $i'$, $1\le i<i'\le m$, such that
$w_j^i = w_j^{i'}$ for every word $w_j\in W(\OR{K}')$, 
which implies that $\OR{K}'$ is not rigid.
\end{proof}

From Theorem~\ref{th:complete-bipartite-graphs}(2) and 
Lemmas \ref{lem:K2n}, \ref{lem:K3n-K4n} and~\ref{lem:m-over-2-to-2m-m}, 
we finally get the following corollary.

\begin{corollary}\label{cor:final-complete-bipartite}
For every two integers $m$ and $n$, $m<n$, with either 
$m\ge 5$ and $n\le 2^m-\left\lceil\frac{m}{2}\right\rceil$, or
$(m,n) \in \{(2,3), (3,4), (3,5), (3,6)\}\cup\{(4,p),\ 5\le p\le 13\}$, we have
$$\Dmin(K_{m,n}) = \DminP(K_{m,n}) = 1,\ \chiDmin(K_{m,n}) = 2\ \mbox{and}\ \chiDminP(K_{m,n}) = n.$$
\end{corollary}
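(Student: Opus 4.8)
The plan is to recognise that this corollary is simply the assembly of Theorem~\ref{th:complete-bipartite-graphs}(2) with the three existence lemmas. Indeed, item~2 of Theorem~\ref{th:complete-bipartite-graphs} already states that \emph{as soon as} $K_{m,n}$ admits a rigid orientation, one has $\Dmin(K_{m,n}) = \DminP(K_{m,n}) = 1$, $\chiDmin(K_{m,n}) = 2$ and $\chiDminP(K_{m,n}) = n$. So the whole task reduces to checking that, for each pair $(m,n)$ listed in the statement, the graph $K_{m,n}$ does admit a rigid orientation; no further argument about the distinguishing parameters themselves is needed.

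I would then dispatch the cases by the value of $m$. For $m = 2$ the only relevant pair is $(2,3)$, and Lemma~\ref{lem:K2n} exhibits a rigid orientation of $K_{2,3}$. For $m = 3$ the pairs are $(3,4)$, $(3,5)$ and $(3,6)$, all covered by the first statement of Lemma~\ref{lem:K3n-K4n}. For $m = 4$ the pairs $(4,p)$ with $5 \le p \le 13$ are covered by the second statement of Lemma~\ref{lem:K3n-K4n}. Finally, for $m \ge 5$ and $m < n \le 2^m - \left\lceil\frac{m}{2}\right\rceil$, I would set $k = 2^m - n$: the upper bound on $n$ is exactly equivalent to $k \ge \left\lceil\frac{m}{2}\right\rceil$, and (since $n$ and $k$ are integers) the condition $m < n$ is equivalent to $k < 2^m - m$, so Lemma~\ref{lem:m-over-2-to-2m-m} applies with this $k$ and produces a rigid orientation of $K_{m,2^m-k} = K_{m,n}$. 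In each case, Theorem~\ref{th:complete-bipartite-graphs}(2) then delivers the four claimed equalities.

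There is essentially no obstacle here, since the real work has already been done inside Lemmas~\ref{lem:K2n}, \ref{lem:K3n-K4n} and~\ref{lem:m-over-2-to-2m-m}. The only point deserving a line of care is the case $m \ge 5$: one has to translate the hypothesis $m < n \le 2^m - \left\lceil\frac{m}{2}\right\rceil$ into the hypotheses $\left\lceil\frac{m}{2}\right\rceil \le k < 2^m - m$ demanded by Lemma~\ref{lem:m-over-2-to-2m-m}, and in particular to note that the strictness of $n > m$ passes to $k < 2^m - m$ precisely because we are dealing with integers. Once this bookkeeping is in place, the corollary follows immediately.
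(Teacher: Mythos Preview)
Your proposal is correct and matches the paper's approach exactly: the paper simply states that the corollary follows from Theorem~\ref{th:complete-bipartite-graphs}(2) together with Lemmas~\ref{lem:K2n}, \ref{lem:K3n-K4n} and~\ref{lem:m-over-2-to-2m-m}, without spelling out any further details. Your translation of the range $m<n\le 2^m-\left\lceil\frac{m}{2}\right\rceil$ into the hypotheses of Lemma~\ref{lem:m-over-2-to-2m-m} is accurate (and in fact the equivalence $m<n\iff k<2^m-m$ is just algebra, no integrality needed).
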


\section{Discussion}
\label{sec:discussion}

In this paper, we have studied the 
distinguishing number, the distinguishing index, 
the distinguishing chromatic number and the distinguishing chromatic index
of oriented graphs.

We have determined the minimum and maximum values, taken over all possible
orientations of the corresponding underlying graph, of these parameters
for paths, cycles, complete graphs and bipartite complete graphs,
except for the minimum values for unbalanced bipartite complete graphs
$K_{m,n}$ not covered by Corollary~\ref{cor:final-complete-bipartite},
in which case we were only able to provide upper bounds (see Theorem~\ref{th:complete-bipartite-graphs}(3)).

\medskip

Following our work, and apart the question of considering other graph classes,
the main question is thus to determine the minimum values of 
the distinguishing parameters of unbalanced bipartite complete graphs
not covered by Corollary~\ref{cor:final-complete-bipartite}, that is,
of $K_{m,n}$ with $m\ge 5$ and  
$n > 2^m-\left\lceil\frac{m}{2}\right\rceil$.
In particular, it would be interesting to know which of those complete bipartite
graphs admit a rigid orientation.
By Lemma~\ref{lem:2-colonnes-identiques}, we know that
$K_{m,n}$, $m<n$, does not admit any rigid orientation when $n > 2^m - \log_2(m)$.
We can also prove that $K_{m,n}$ does not admit any rigid orientation when $n= 2^m - 2^p$,
for some $p\ge 2$.
However, we do not have any complete characterization of these graphs yet.


\end{document}